\newtheorem{theorem}{Theorem}
\newtheorem{prop}{Proposition}
\newtheorem{lemma}{Lemma}
\def\BibTeX{{\rm B\kern-.05em{\sc i\kern-.025em b}\kern-.08em
    T\kern-.1667em\lower.7ex\hbox{E}\kern-.125emX}}
\newcommand{\algrule}[1][.2pt]{\par\vskip.3\baselineskip\hrule height #1\par\vskip.3\baselineskip}
\begin{document}

{\Large \textbf{Notice:} This work has been submitted to the IEEE for possible publication. Copyright may be transferred without notice, after which this version may no longer be accessible.}
\clearpage

\title{Downlink Pilot Precoding and Compressed Channel Feedback for FDD-Based Cell-Free Systems\\}
\author{\textsuperscript{*}Seungnyun Kim\thanks{This work was supported by Electronics and Telecommunications Research Institute (ETRI) grant funded by the Korean government [2018-0-01410, Development of Radio Transmission
Technologies for High Capacity and Low Cost in Ultra Dense Networks]. 

Parts of this paper was appeared at the ICC, 2019~\cite{kim2019feedback}.}, \textsuperscript{\textdagger}Jun Won Choi, and \textsuperscript{*}Byonghyo Shim\\
\textsuperscript{*}Seoul National University, \textsuperscript{\textdagger}Hanyang University, Seoul, Korea\\
Email: \textsuperscript{*}snkim@islab.snu.ac.kr, \textsuperscript{\textdagger}junwchoi@hanyang.ac.kr, \textsuperscript{*}bshim@islab.snu.ac.kr}
\maketitle

\begin{abstract}
Cell-free system where a group of base stations (BSs) cooperatively serves users has received much attention as a promising technology for the future wireless systems. In order to maximize the cooperation gain in the cell-free systems, acquisition of downlink channel state information (CSI) at the BSs is crucial. While this task is relatively easy for the time division duplexing (TDD) systems due to the channel reciprocity, it is not easy for the frequency division duplexing (FDD) systems due to the CSI feedback overhead. This issue is even more pronounced in the cell-free systems since the user needs to feed back the CSIs of multiple BSs. In this paper, we propose a novel feedback reduction technique for the FDD-based cell-free systems. Key feature of the proposed technique is to choose a few dominating paths and then feed back the path gain information (PGI) of the chosen paths. By exploiting the property that the angles of departure (AoDs) are quite similar in the uplink and downlink channels (this property is referred to as \textit{angle reciprocity}), the BSs obtain the AoDs directly from the uplink pilot signal. From the extensive simulations, we observe that the proposed technique can achieve more than 80\% of feedback overhead reduction over the conventional CSI feedback scheme.
\end{abstract}

\newpage

\section{Introduction}
In recent years, ultra dense network (UDN) has received a great deal of attention as a means to achieve a thousand-fold throughput improvement in 5G wireless communications~\cite{series2015imt}. Network densification can improve the capacity of cellular systems by overlaying the existing macro cells with a large number of small (femto, pico) cells. However, throughput improvement of dense networks might not be dramatic as expected due to the poor cell-edge performance. This is because the portion of users in the cell-boundary (cell-edge users) increases sharply yet cell-edge users suffer from significant inter-cell interference due to the reduced cell size. To address this problem, an approach to entirely remove the notion of cell from the cellular systems, called \textit{cell-free} systems, has been introduced recently~\cite{ngo2017cell}. When compared to the conventional cellular systems in which a single base station (BS) serves all the users in a cell, a group of BSs cooperatively serves users in the cell-free systems (see Fig. 1). In the cell-free systems, BSs are connected to the digital unit (DU) via advanced backhaul links to share the channel state information (CSI) and the transmit data. Since the cell association is not strictly limited by the regional cell, notions like \textit{cell} and \textit{cell boundary} are unnecessary in the cell-free systems. Also, since the DU intelligently recognizes the user's communication environments and organizes the associated BSs for each user, cell-free systems can control inter-cell interference efficiently, thereby achieving significant improvement in the spectral efficiency and coverage.

\begin{figure}[t]
\centering
\includegraphics[scale=0.6]{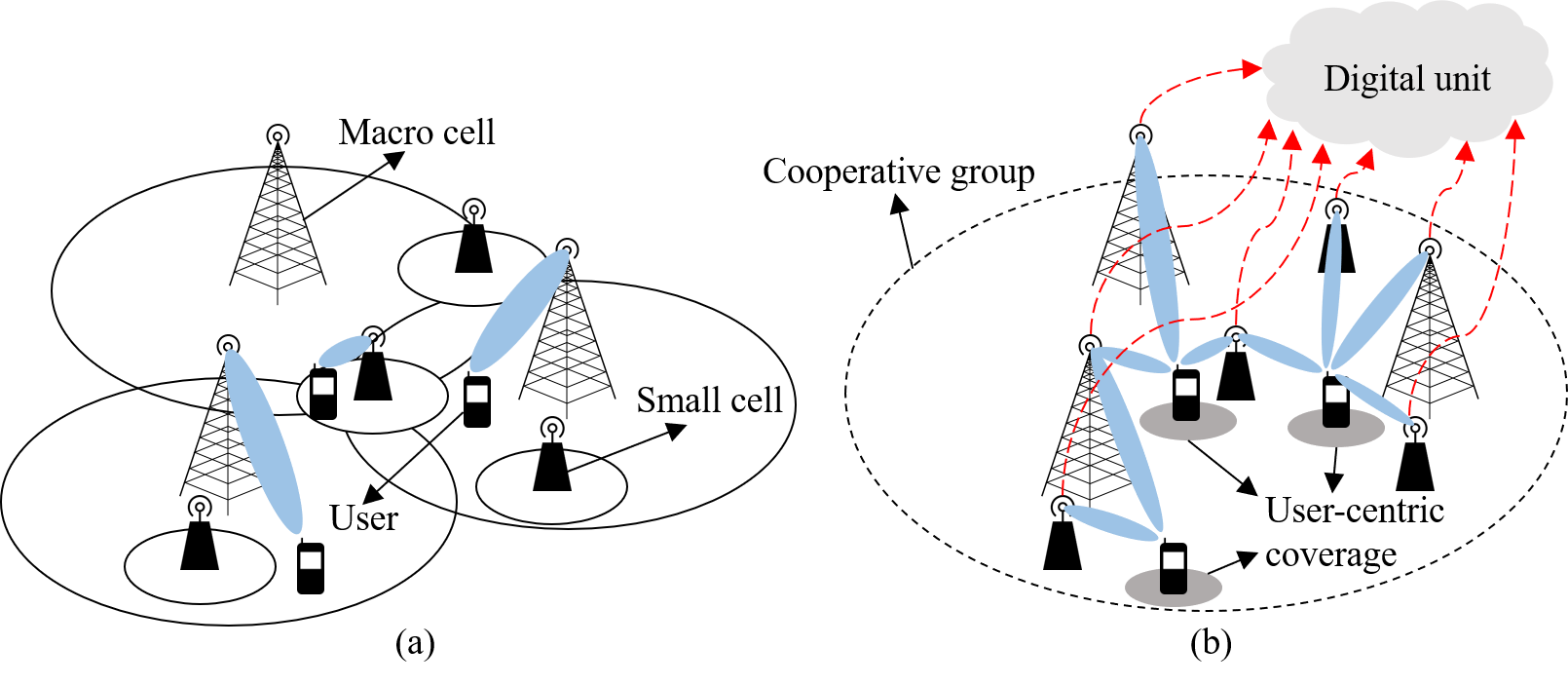}
\caption{Comparison between (a) the conventional cellular systems and (b) the cell-free systems.}
\end{figure} 

In order to maximize the gain obtained by the BS cooperation, acquisition of accurate downlink CSI at the BS is crucial. While this task is relatively easy for the time division duplexing (TDD) systems due to the channel reciprocity, it is not easy for the frequency division duplexing (FDD) systems due to the CSI feedback overhead. For this reason, most efforts on the cell-free systems to date are based on the TDD systems~\cite{ngo2017cell, nayebi2017precoding, ngo2018total}. In practice, however, TDD-based cell-free systems have some potential problems. For example, due to the switching between the uplink and downlink transmission in the TDD systems, users may not be able to obtain the instantaneous CSI when the transmission direction is directed to the uplink~\cite{jose2011pilot}. Further, the channel reciprocity in TDD systems might not be accurate due to the calibration error in the RF chains~\cite{lee2015antenna}. These observations, together with the fact that the FDD systems have many benefits over the TDD systems (e.g., continuous channel estimation and small latency), motivate us to study FDD-based cell-free systems. One well-known drawback of the FDD systems is that the amount of CSI feedback needs to be proportional to the number of transmit antennas to achieve the rate comparable to the system with the perfect CSI~\cite{jindal2006mimo}. This issue is even more pronounced in the cell-free systems since the user needs to estimate and feed back the downlink CSIs of multiple BSs. Therefore, it is of a great importance to come up with an effective means to relax the feedback overhead in the FDD-based cell-free systems.

The primary purpose of this paper is to propose an approach to reduce the CSI feedback overhead in the FDD-based cell-free systems. Key feature of the proposed technique is that the spatial domain channel can be represented by a small number of multi-path components (angle of departure (AoD) and path gain)~\cite{ertel1998overview}. By exploiting the property referred to as \textit{angle reciprocity}~\cite{xie2016overview} that the AoDs are quite similar in the uplink and downlink channels, we only feed back the path gain information (PGI) to the BSs. As a result, the number of bits required for the channel vector quantization scales linearly with the number of dominating paths, not the number of transmit antennas. Moreover, by choosing a few dominating paths maximizing the sum rate, we can further reduce the feedback overhead considerably. In order to support the dominating PGI acquisition and feedback at the user, we use spatially precoded downlink pilot signal.

Through the performance analysis, we show that the proposed dominating PGI feedback scheme exhibits a smaller quantization distortion than that generated by the conventional CSI feedback scheme. In fact, the number of feedback bits required to maintain a constant gap to the system with perfect PGI scales linearly with the number of dominating paths which is much smaller than the number of transmit antennas. From the simulations on realistic scenarios, we show that the proposed dominating PGI feedback scheme achieves more than 80\% of feedback overhead reduction over the conventional scheme relying on the CSI feedback. We also show that the performance gain of the proposed dominating PGI feedback scheme increases with the number of cooperating BSs. Note that no such benefit can be obtained for the conventional CSI feedback scheme from the BS cooperation. This implies that the proposed dominating PGI feedback scheme is a promising solution to reduce the feedback overhead in FDD-based cell-free systems.

The rest of this paper is organized as follows. In Section II, we briefly introduce the system and channel models for FDD-based cell-free systems. In Section III, we present the dominating path selection technique. In Section IV, we present the downlink pilot precoding scheme for the dominating PGI acquisition. In Section V, we present the performance analysis of the proposed dominating PGI feedback scheme. In Section VI, we present the simulation results and conclude the paper in Section VII.

\textit{Notations}: Lower and upper case symbols are used to denote vectors and matrices, respectively. The superscripts $(\cdot)^{\textrm{T}}$, $(\cdot)^{\textrm{H}}$, and $(\cdot)^{+}$ denote transpose, Hermitian transpose, and pseudo-inverse, respectively. $\otimes$ denotes the Kronecker product. $\left\lVert\mathbf{x}\right\rVert$ and $\left\lVert\mathbf{X}\right\rVert_{\text{F}}$ are used as the Euclidean norm of a vector $\mathbf{x}$ and the Frobenius norm of a matrix $\mathbf{X}$, respectively. $\text{tr}\left(\mathbf{X}\right)$ and $\text{vec}\left(\mathbf{X}\right)$ denote the trace and vectorization of $\mathbf{X}$, respectively. Also, $\text{diag}\left(\mathbf{X}_{1},\mathbf{X}_{2}\right)$ denotes a block diagonal matrix whose diagonal elements are $\mathbf{X}_{1}$ and $\mathbf{X}_{2}$. In addition, $\mathbf{x}_{\Lambda}$ is a subvector of $\mathbf{x}$ whose $i$-th entry is $\mathbf{x}(\Lambda(i))$ and $\mathbf{X}_{\Lambda}$ is a submatrix of $\mathbf{X}$ whose $i$-th column is the $\Lambda(i)$-th column of $\mathbf{X}$ for $i=1,\cdots,\left\lvert\Lambda\right\rvert$ ($\Lambda$ is the set of partial indices and $\left\lvert\Lambda\right\rvert$ is the cardinality of $\Lambda$).
\\

\section{cell-free System Model}
In this section, we introduce the FDD-based cell-free systems and the multi-path channel model. We also discuss the angle reciprocity between the uplink and downlink channels and the conventional quantized channel feedback scheme. 

\begin{figure}[t]
\centering
\includegraphics[scale=0.7]{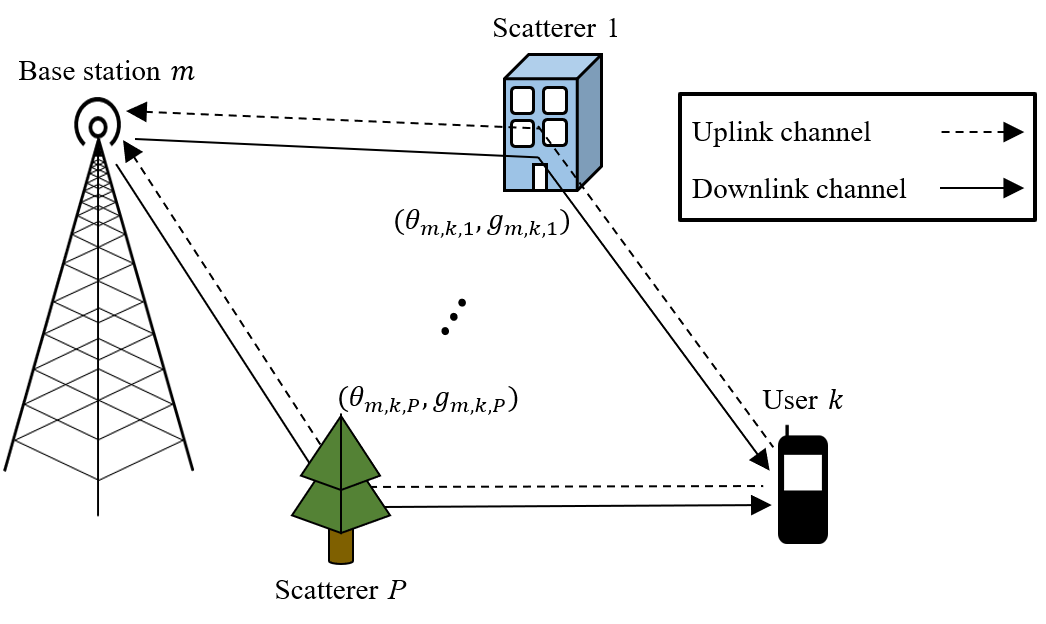}
\caption{Narrowband ray-based channel model and angle reciprocity between the uplink and downlink channels.}
\end{figure}

\subsection{cell-free System Model}
We consider the FDD-based cell-free systems with $M$ BSs and $K$ users. Each BS is equipped with a uniform linear array of $N$ antennas and each user is equipped with a single antenna. Let $\mathcal{B}=\lbrace 1,\cdots, M\rbrace$ and $\mathcal{U}=\lbrace 1,\cdots,K\rbrace$ be the sets of BSs and users, respectively. In our work, we consider the narrowband ray-based channel model consisting of $P$ paths (see Fig. 2)~\cite{tse2005fundamentals}. The downlink channel vector $\mathbf{h}_{m,k}\in\mathbb{C}^{N}$ from the BS $m$ to the user $k$ is expressed as
\begin{align}\label{2.1.1}
\mathbf{h}_{m,k}=\sum_{i=1}^{P}g_{m,k,i}\mathbf{a}\left(\theta_{m,k,i}\right),
\end{align}
where $\theta_{m,k,i}$ is the AoD and $g_{m,k,i}$ is the complex path gain of the $i$-th path, respectively. We assume that for every $m$, $k$, and $i$, $g_{m,k,i}\sim\mathcal{CN}(0,1)$ are independent and identically distributed (i.i.d.) random variables. In addition, $\mathbf{a}\left(\theta_{m,k,i}\right)\in \mathbb{C}^{N}$ is the array steering vector given by
\begin{equation}\label{2.1.2}
\mathbf{a}\left(\theta_{m,k,i}\right)=\left[1,e^{-j2\pi \frac{d}{\lambda}\sin{\theta_{m,k,i}}},\,\cdots,\,e^{-j(N-1)2\pi \frac{d}{\lambda}\sin{\theta_{m,k,i}}}\right]^{\textrm{T}},
\end{equation}
where $d$ is the antenna spacing and $\lambda$ is the signal wavelength. The matrix-vector form of $\mathbf{h}_{m,k}$ is
\begin{align}\label{2.1.3}
\mathbf{h}_{m,k}=\mathbf{A}_{m,k}\mathbf{g}_{m,k},
\end{align}
where $\mathbf{A}_{m,k}=\left[\mathbf{a}\left(\theta_{m,k,1}\right),\cdots,\mathbf{a}\left(\theta_{m,k,P}\right)\right]\in \mathbb{C}^{N\times P}$ is the array steering matrix and $\mathbf{g}_{m,k}=\left[g_{m,k,1},\cdots,g_{m,k,P}\right]^{\textrm{T}}\in \mathbb{C}^{P}$ is the PGI vector. It is worth mentioning that the AoDs vary much slower than the path gains. In fact, since scatterers affecting the signal transmission do not change their positions significantly, the AoDs are readily considered as constant during the channel coherence time. Also, it has been shown that the number of propagation paths $P$ is quite smaller than the number of transmit antennas $N$~\cite{shen2016joint}. We note that $P$ is completely determined by the scattering geometry around the BS. Since the BSs are usually located at high places such as a rooftop of a building, only a few scatterers affect the signal transmission. For example, $P$ is $2\sim 8$ for $6\sim 60\,\text{GHz}$ band due to the limited scattering of the millimeter-wave signal~\cite{rappaport2013millimeter}. Also, for the sub-$6\,\text{GHz}$ band, $P$ is set to $10\sim 20$ (3GPP spatial channel model~\cite{3GPPTR}) while $N$ is $32\sim 256$ in the massive multiple-input multiple-output (MIMO) regime. In this setting, the received signal $y_{k}\in\mathbb{C}$ of the user $k$ is given by
\begin{align}\label{2.1.4}
y_{k}=\sum_{m=1}^{M}\mathbf{h}_{m,k}^{\textrm{H}}\mathbf{w}_{m,k}s_{k}+\sum_{j\neq k}^{K}\sum_{m=1}^{M}\mathbf{h}_{m,k}^{\textrm{H}}\mathbf{w}_{m,j}s_{j}+n_{k},
\end{align}
where $\mathbf{w}_{m,k}\in\mathbb{C}^{N}$ is the precoding vector from the BS $m$ to the user $k$, $s_{k}\in\mathbb{C}$ is the data symbol for the user $k$, and $n_{k}\sim\mathcal{CN}(0,\sigma_{n}^{2})$ is the additive Gaussian noise. The corresponding achievable rate $R_{k}$ of the user $k$ is 
\begin{align}\label{2.1.5}
R_{k}=\mathbb{E}\left[\log_{2}\left(1+\frac{\left\lvert\sum_{m=1}^{M}\mathbf{h}_{m,k}^{\textrm{H}}\mathbf{w}_{m,k}\right\rvert^{2}}{\sum_{j\neq k}^{K}\left\lvert\sum_{m=1}^{M}\mathbf{h}_{m,k}^{\textrm{H}}\mathbf{w}_{m,j}\right\rvert^{2}+\sigma_{n}^{2}}\right)\right].
\end{align}
Approximately, we have\footnote{This approximation becomes more accurate as the number of transmit antennas $N$ increases~\cite[Lemma 1]{zhang2015power}.}
\begin{align}\label{2.1.6}
R_{k}\approx\log_{2}\left(1+\frac{\mathbb{E}\left[\left\lvert\sum_{m=1}^{M}\mathbf{h}_{m,k}^{\textrm{H}}\mathbf{w}_{m,k}\right\rvert^{2}\right]}{\sum_{j\neq k}^{K}\mathbb{E}\left[\left\lvert\sum_{m=1}^{M}\mathbf{h}_{m,k}^{\textrm{H}}\mathbf{w}_{m,j}\right\rvert^{2}\right]+\sigma_{n}^{2}}\right).
\end{align}

\subsection{Angle Reciprocity between Uplink and Downlink Channels}
As mentioned, the AoDs in the uplink and downlink channels are fairly similar in the FDD systems when their carrier frequencies do not differ too much (typically less than a few GHz). The reason is because only the signal components that physically reverse the uplink propagation path can reach the user during the downlink transmission~\cite{xie2016overview} (see Fig. 2). Since the changes of relative permittivity and conductivity of the scatterers are negligible in the scale of several GHz, reflection and deflection properties determining the propagation paths in the uplink and downlink transmissions are fairly similar~\cite{series2012propagation}, which in turn implies that the propagation paths of the uplink and downlink channels are more or less similar. This so-called \textit{angle reciprocity} is very useful since the BS can acquire the AoDs from the uplink pilot signal. In estimating the AoDs, various algorithms such as multiple signal classification (MUSIC)~\cite{schmidt1986multiple} or estimation of signal parameters via rotational invariance techniques (ESPRIT)~\cite{roy1989esprit} can be employed.

\subsection{Conventional Quantized Channel Feedback}
In the conventional quantized channel feedback, a user estimates the downlink channel vector from the downlink pilot signal. Then, the user quantizes the channel direction $\bar{\mathbf{h}}_{m,k}=\frac{\mathbf{h}_{m,k}}{\lVert \mathbf{h}_{m,k}\rVert}$ and then feeds it back to the BS. Specifically, a codeword $\mathbf{c}_{\hat{i}_{m,k}}$ is chosen from a pre-defined $B$-bit codebook $\mathcal{C}=\lbrace\mathbf{c}_{1},\cdots,\mathbf{c}_{2^{B}}\rbrace$ as
\begin{align}\label{2.3.1}
\mathbf{c}_{\hat{i}_{m,k}}=\text{arg }\underset{\mathbf{c}\in\mathcal{C}}{\text{max}}\,\lvert\bar{\mathbf{h}}_{m,k}^{\textrm{H}}\mathbf{c}\rvert^{2}.
\end{align} 
Then, the selected index $\hat{i}_{m,k}$ is fed back to the BS. It has been shown that the number of feedback bits $B$ needs to be scaled linearly with the channel dimension $N$ and SNR (in decibels) to properly control the quantization distortion as~\cite{jindal2006mimo} 
\begin{align}\label{2.3.2}
B\approx\frac{(N-1)}{3}\times\text{SNR}.
\end{align}
In the FDD-based cell-free systems, since multiple BSs cooperatively serve users, a user should send the downlink CSIs to multiple BSs. Thus, the feedback overhead should also increase with the number of associated BSs $M$. For example, if $M=6$, $N=16$, and $\text{SNR}=10\,\text{dB}$, then a user has to send $B=300$ bits just for the CSI feedback.
\\

\section{Dominating Path Gain Information Feedback in cell-free Systems}
The key idea of the proposed dominating PGI feedback scheme is to select a small number of paths based on the AoD information and then feed back the measured path gains of the chosen paths. As mentioned, the AoDs are acquired from the uplink pilot signal by using the angle reciprocity. Since the number of propagation paths is smaller than the number of transmit antennas, we can achieve a considerable reduction in the quantized channel dimension using the dominating PGI feedback. We can further reduce the feedback overhead from multiple BSs by choosing a few dominating paths among all possible multi-paths. 

\begin{figure}[t]
\centering
\includegraphics[scale=0.31]{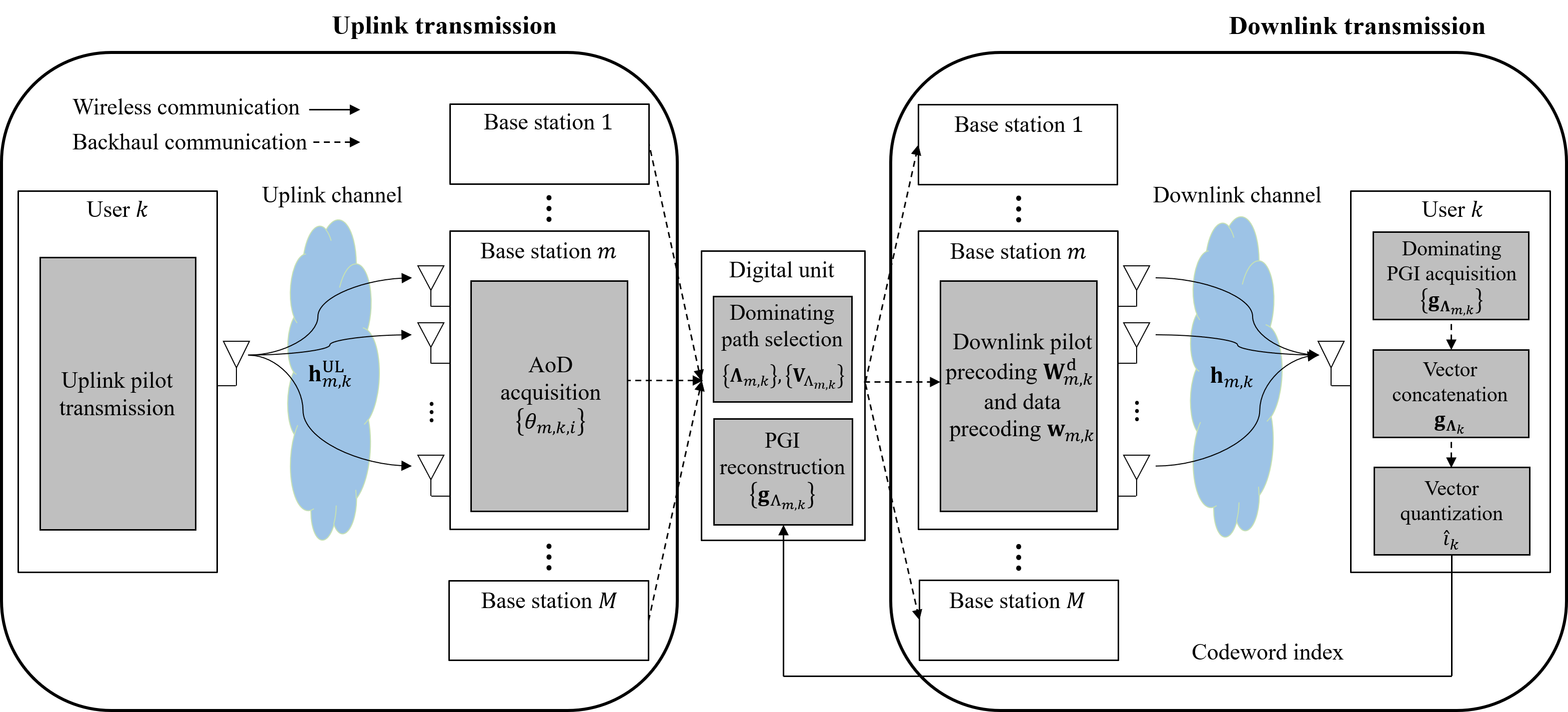}
\caption{Overall transceiver structure of the proposed dominating PGI feedback scheme.}
\end{figure} 

In a nutshell, overall operations of the proposed dominating PGI feedback scheme are as follows: 1) user transmits the uplink pilot signal and then BSs acquire AoDs from the uplink pilot signal, 2) DU performs the dominating path selection based on the acquired AoDs, 3) BSs transmit the precoded downlink pilot signal, 4) each user acquires the dominating PGI from the precoded downlink pilot signal and then feeds it back to the BSs, and 5) BSs perform the downlink data transmission based on the dominating PGI feedback (see Fig. 3).

\subsection{Uplink AoD Acquisition}
Since the AoDs are quite similar in the uplink and downlink channels, the BS can acquire the AoD information from the uplink pilot signal. Well-known angle estimation algorithm includes MUSIC~\cite{schmidt1986multiple} and ESPRIT~\cite{roy1989esprit}. In the MUSIC algorithm, for example, the BS estimates the uplink channel vector $\mathbf{h}_{m,k}^{\text{UL}}$ and then computes the channel covariance matrix $\mathbf{R}_{m,k}^{\text{UL}}=\mathbb{E}\left[\mathbf{h}_{m,k}^{\text{UL}}\mathbf{h}_{m,k}^{\text{UL},\textrm{H}}\right]$. Key idea of the MUSIC algorithm is to decompose the eigenspace of $\mathbf{R}_{m,k}^{\text{UL}}$ into two orthogonal subspaces: signal subspace and noise subspace. To be specific, the eigenvectors of $\mathbf{R}_{m,k}^{\text{UL}}$ that correspond to the $P$ largest eigenvalues form the signal subspace matrix $\mathbf{E}_{s}$ and the rest form the noise subspace matrix $\mathbf{E}_{n}$. Since $\mathbf{E}_{n}$ is orthogonal to the signal subspace, the AoD $\theta$ should satisfy $\mathbf{E}_{n}^{\textrm{H}}\mathbf{a}\left(\theta\right)=\mathbf{0}_{P}$. Thus, the AoDs are obtained from the peak of spectrum function $f_{\text{MUSIC}}(\theta)$ given by
\begin{align}\label{3.1.1}
f_{\text{MUSIC}}(\theta)=\frac{1}{\mathbf{a}^{\textrm{H}}\left(\theta\right)\mathbf{E}_{n}\mathbf{E}_{n}^{\textrm{H}}\mathbf{a}\left(\theta\right)}.
\end{align}

\subsection{Dominating Path Selection Problem Formulation}
Main advantage of the dominating PGI feedback over the conventional CSI feedback is the reduction of the channel vector dimension to be quantized. However, since the user should feed back the PGI to multiple BSs, feedback overhead is still considerable. In the proposed scheme, by choosing a few dominating paths among all possible multi-paths between each user and the associated BSs, we can control the feedback overhead at the expense of marginal degradation in the sum rate.

\begin{figure}[t]
\centering
\includegraphics[scale=0.53]{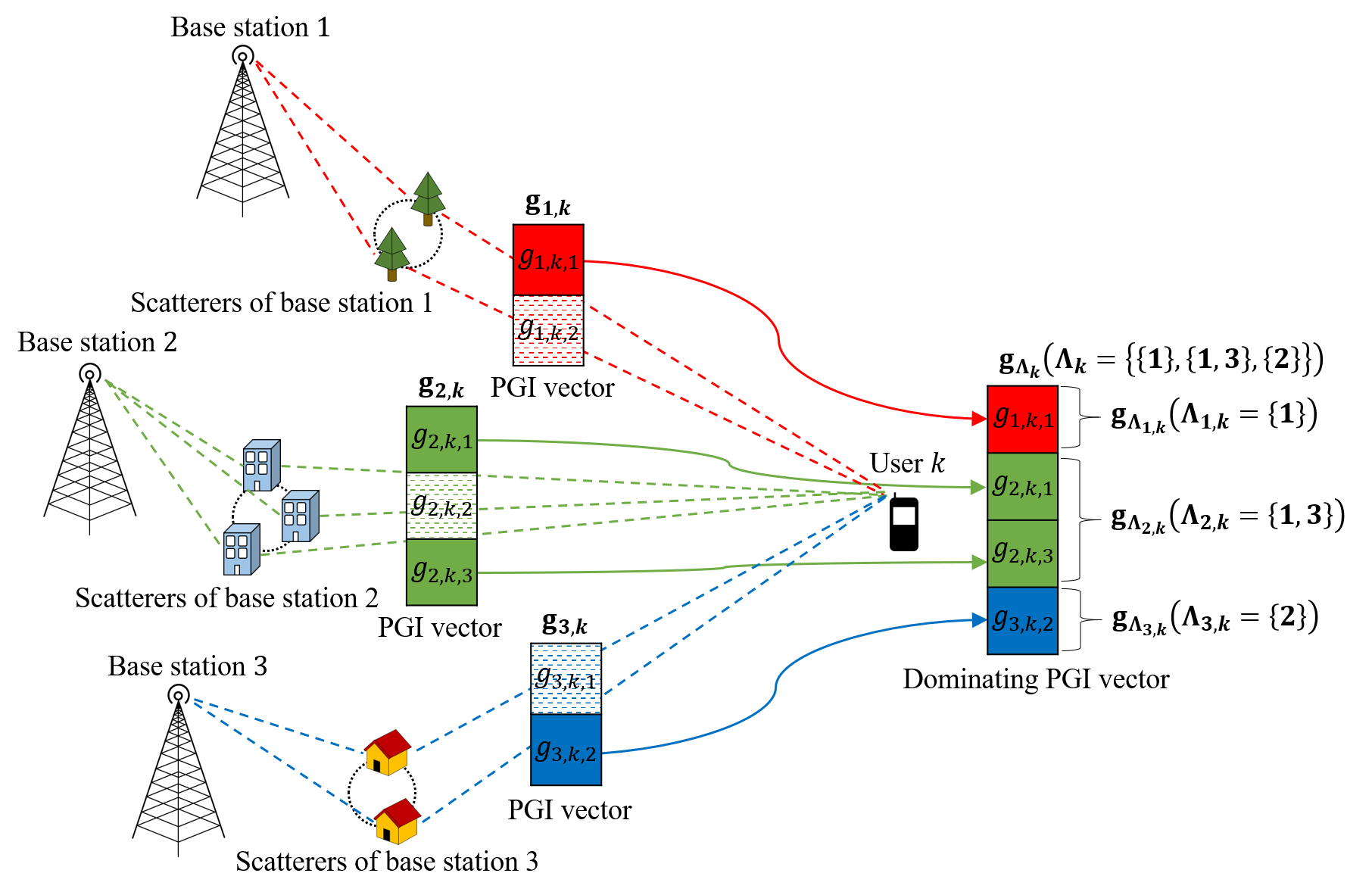}
\caption{Illustration of the dominating path selection}
\end{figure} 

In order to choose the paths that contribute to the sum rate most, we first need to express the sum rate as a function of the dominating paths. Let $\Lambda_{m,k}\subseteq\lbrace 1,\cdots,P\rbrace$ be the index set of the dominating paths from the BS $m$ to the user $k$ and $\mathbf{g}_{\Lambda_{m,k}}=\left[g_{m,k,i},\,i\in\Lambda_{m,k}\right]^{\textrm{T}}\in\mathbb{C}^{\lvert\Lambda_{m,k}\rvert}$ be the dominating PGI vector. For example, if the first and the third paths are chosen as the dominating paths, then $\Lambda_{m,k}=\lbrace 1,3\rbrace$ and $\mathbf{g}_{\Lambda_{m,k}}=\left[g_{m,k,1},\,g_{m,k,3}\right]^{\textrm{T}}$. Also, let $\Lambda_{k}=\lbrace \Lambda_{1,k},\cdots,\Lambda_{M,k}\rbrace$ be the combined index set for the user $k$ and $\mathbf{g}_{\Lambda_{k}}=\left[\mathbf{g}_{\Lambda_{1,k}}^{\textrm{T}},\,\cdots,\mathbf{g}_{\Lambda_{M,k}}^{\textrm{T}}\right]^{\textrm{T}}\in\mathbb{C}^{L}$ be the corresponding dominating PGI vector. Note that $L$ is the total number of dominating paths for each user. For example, if $M=3$, $L=4$, and $\Lambda_{1,k}=\lbrace 1\rbrace$, $\Lambda_{2,k}=\lbrace 1,3\rbrace$, and $\Lambda_{3,k}=\lbrace 2\rbrace$, then $\Lambda_{k}=\lbrace\lbrace 1\rbrace, \lbrace 1,3\rbrace, \lbrace 2\rbrace\rbrace$ and $\mathbf{g}_{\Lambda_{k}}=\left[g_{1,k,1},\,g_{2,k,1},\,g_{2,k,3},\,g_{3,k,1}\right]^{\textrm{T}}$ (see Fig. 4). Then, the user $k$ estimates and feeds back $\mathbf{g}_{\Lambda_{k}}$ to the DU. The downlink precoding vector $\mathbf{w}_{m,k}\in\mathbb{C}^{N}$ from the BS $m$ to the user $k$, constructed from the dominating PGI feedback, is
\begin{align}\label{3.2.1} 
\mathbf{w}_{m,k}=\mathbf{V}_{\Lambda_{m,k}}\hat{\mathbf{g}}_{\Lambda_{m,k}},
\end{align}
where $\mathbf{V}_{\Lambda_{m,k}}\in\mathbb{C}^{N\times\lvert\Lambda_{m,k}\rvert}$ is the precoding matrix to transform $\left\lvert\Lambda_{m,k}\right\rvert$-dimensional vector $\hat{\mathbf{g}}_{\Lambda_{m,k}}$ into $N$-dimensional vector $\mathbf{w}_{m,k}$ and $\hat{\mathbf{g}}_{\Lambda_{m,k}}$ is the dominating PGI vector fed back from the user. In the following theorem, we express the achievable rate of the dominating PGI feedback scheme as a function of the dominating path indices $\lbrace\Lambda_{m,k}\rbrace$ and the precoding matrices $\lbrace\mathbf{V}_{\Lambda_{m,k}}\rbrace$. Based on this, we can find $\lbrace \Lambda_{m,k}\rbrace$ and $\lbrace \mathbf{V}_{\Lambda_{m,k}}\rbrace$ maximizing the sum rate performance of the dominating PGI feedback.
\begin{theorem}
The achievable rate $R_{k}^{(\textup{ideal})}$ of the user $k$ for the ideal system with perfect PGI is
\begin{align}\label{3.2.2}
R_{k}^{(\textup{ideal})}\left(\lbrace\Lambda_{m,k}\rbrace,\lbrace\mathbf{V}_{\Lambda_{m,k}}\rbrace\right)=\log_{2}\left(1+\frac{\left\lvert\sum_{m=1}^{M}\textup{tr}\left(\mathbf{A}_{\Lambda_{m,k}}^{\textrm{H}}\mathbf{V}_{\Lambda_{m,k}}\right)\right\rvert^{2}+\sum_{m=1}^{M}\left\lVert\mathbf{A}_{m,k}^{\textrm{H}}\mathbf{V}_{\Lambda_{m,k}}\right\rVert_{\textup{F}}^{2}}{\sum_{j\neq k}^{K}\sum_{m=1}^{M}\left\lVert\mathbf{A}_{m,k}^{\textrm{H}}\mathbf{V}_{\Lambda_{m,j}}\right\rVert_{\textup{F}}^{2}+\sigma_{n}^{2}}\right)
\end{align}
and the corresponding sum rate is $R_{\text{tot}}=\sum_{k=1}^{K}R_{k}^{(\textup{ideal})}$ where $\mathbf{A}_{m,k}\in\mathbb{C}^{N\times P}$ is the array steering matrix in \eqref{2.1.3} and $\mathbf{A}_{\Lambda_{m,k}}=[\mathbf{a}(\theta_{m,k,i}),\,i\in\Lambda_{m,k}]\in\mathbb{C}^{N\times\lvert\Lambda_{m,k}\rvert}$ is the submatrix of $\mathbf{A}_{m,k}$.
\end{theorem}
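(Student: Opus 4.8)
The plan is to start from the approximate rate expression~\eqref{2.1.6} and evaluate the two second-order moments that drive its numerator and denominator, namely the useful signal power $\mathbb{E}[|\sum_{m}\mathbf{h}_{m,k}^{\textrm{H}}\mathbf{w}_{m,k}|^{2}]$ and, for each $j\neq k$, the interference power $\mathbb{E}[|\sum_{m}\mathbf{h}_{m,k}^{\textrm{H}}\mathbf{w}_{m,j}|^{2}]$, where the expectations run over the i.i.d.\ Gaussian path gains. For the ideal system I set $\hat{\mathbf{g}}_{\Lambda_{m,k}}=\mathbf{g}_{\Lambda_{m,k}}$ in~\eqref{3.2.1} and substitute $\mathbf{h}_{m,k}=\mathbf{A}_{m,k}\mathbf{g}_{m,k}$ from~\eqref{2.1.3}, so that $\mathbf{h}_{m,k}^{\textrm{H}}\mathbf{w}_{m,k}=\mathbf{g}_{m,k}^{\textrm{H}}\mathbf{A}_{m,k}^{\textrm{H}}\mathbf{V}_{\Lambda_{m,k}}\mathbf{g}_{\Lambda_{m,k}}$. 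The whole argument then reduces to computing moments of these Gaussian bilinear forms and reassembling them into~\eqref{3.2.2}.

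The signal term is the delicate one, because $\mathbf{g}_{m,k}$ appears both as the leading $\mathbf{g}_{m,k}^{\textrm{H}}$ and inside the fed-back subvector $\mathbf{g}_{\Lambda_{m,k}}$, making each summand a genuine quadratic form rather than a bilinear form in independent vectors. I would first write $\mathbf{g}_{\Lambda_{m,k}}=\mathbf{S}_{m,k}\mathbf{g}_{m,k}$ with a $0/1$ selection matrix $\mathbf{S}_{m,k}$ picking the rows indexed by $\Lambda_{m,k}$, turning the summand into $X_{m}=\mathbf{g}_{m,k}^{\textrm{H}}\mathbf{C}_{m,k}\mathbf{g}_{m,k}$ with $\mathbf{C}_{m,k}=\mathbf{A}_{m,k}^{\textrm{H}}\mathbf{V}_{\Lambda_{m,k}}\mathbf{S}_{m,k}$. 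Expanding $\mathbb{E}[|\sum_{m}X_{m}|^{2}]$ and using that gain vectors of distinct BSs are independent, the off-diagonal ($m\neq m'$) cross terms factor as $\mathbb{E}[X_{m}]\mathbb{E}[X_{m'}]^{*}$; collecting them with the diagonal gives $|\sum_{m}\mathbb{E}[X_{m}]|^{2}+\sum_{m}\mathrm{Var}(X_{m})$. The first moment is $\mathbb{E}[X_{m}]=\text{tr}(\mathbf{C}_{m,k})$, and the cyclic property together with $\mathbf{S}_{m,k}\mathbf{A}_{m,k}^{\textrm{H}}=\mathbf{A}_{\Lambda_{m,k}}^{\textrm{H}}$ identifies this with $\text{tr}(\mathbf{A}_{\Lambda_{m,k}}^{\textrm{H}}\mathbf{V}_{\Lambda_{m,k}})$, producing the coherent $|\sum_{m}\text{tr}(\cdot)|^{2}$ numerator term. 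For the fluctuation I would invoke the standard complex-Gaussian fourth-moment (Isserlis/Wick) identity $\mathbb{E}[|\mathbf{g}^{\textrm{H}}\mathbf{C}\mathbf{g}|^{2}]=|\text{tr}(\mathbf{C})|^{2}+\|\mathbf{C}\|_{\text{F}}^{2}$, which yields $\mathrm{Var}(X_{m})=\|\mathbf{C}_{m,k}\|_{\text{F}}^{2}$; since $\mathbf{S}_{m,k}$ has orthonormal rows, $\|\mathbf{C}_{m,k}\|_{\text{F}}^{2}=\|\mathbf{A}_{m,k}^{\textrm{H}}\mathbf{V}_{\Lambda_{m,k}}\|_{\text{F}}^{2}$, giving exactly the incoherent $\sum_{m}\|\cdot\|_{\text{F}}^{2}$ numerator term.

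The interference term is comparatively easy, because for $j\neq k$ the gains $\mathbf{g}_{m,k}$ and $\mathbf{g}_{\Lambda_{m,j}}$ are independent. Writing $Y_{m}=\mathbf{g}_{m,k}^{\textrm{H}}\mathbf{A}_{m,k}^{\textrm{H}}\mathbf{V}_{\Lambda_{m,j}}\mathbf{g}_{\Lambda_{m,j}}$, each $Y_{m}$ has zero mean, so every cross term in $\mathbb{E}[|\sum_{m}Y_{m}|^{2}]$ vanishes and only the diagonal survives. Conditioning on $\mathbf{g}_{m,k}$ and applying $\mathbb{E}[\mathbf{g}_{\Lambda_{m,j}}\mathbf{g}_{\Lambda_{m,j}}^{\textrm{H}}]=\mathbf{I}$ followed by $\mathbb{E}[\mathbf{g}_{m,k}\mathbf{g}_{m,k}^{\textrm{H}}]=\mathbf{I}$ gives $\mathbb{E}[|Y_{m}|^{2}]=\|\mathbf{A}_{m,k}^{\textrm{H}}\mathbf{V}_{\Lambda_{m,j}}\|_{\text{F}}^{2}$, so the interference power equals $\sum_{m}\|\mathbf{A}_{m,k}^{\textrm{H}}\mathbf{V}_{\Lambda_{m,j}}\|_{\text{F}}^{2}$. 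Summing over $j\neq k$ and adding $\sigma_{n}^{2}$ reproduces the denominator.

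Substituting both moments back into~\eqref{2.1.6} then yields~\eqref{3.2.2} verbatim. I expect the signal power to be the main obstacle: the coincidence of the conjugated gain vector and the fed-back subvector forces it into a quadratic form, and one must carefully (i) separate the coherent combining gain — the cross-BS products of the \emph{nonzero} first moments $\mathbb{E}[X_{m}]$, which is what makes the traces add inside a single magnitude — from the incoherent per-BS variances, and (ii) apply the fourth-moment identity and check that the selection matrix leaves the Frobenius norm unchanged. The interference term, by contrast, follows routinely from cross-user independence and the unit covariance of the gains.
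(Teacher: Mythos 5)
Your proposal is correct and yields \eqref{3.2.2}, but it organizes the signal-power computation differently from the paper's Appendix~A, and the difference is worth recording. The paper first splits the channel as $\mathbf{h}_{m,k}=\mathbf{A}_{\Lambda_{m,k}}\mathbf{g}_{\Lambda_{m,k}}+\mathbf{A}_{\Lambda_{m,k}^{\mathsf{C}}}\mathbf{g}_{\Lambda_{m,k}^{\mathsf{C}}}$, so the desired power separates into a quadratic form in the selected gains plus a bilinear form in two independent vectors; it then concatenates the BSs into block-diagonal matrices, factors norm from direction, applies a fourth-moment lemma for a direction vector uniform on the unit sphere (its Lemma~3) to the quadratic piece and a plain covariance computation to the bilinear piece, and finally recombines $\lVert\mathbf{A}_{\Lambda_{m,k}}^{\textrm{H}}\mathbf{V}_{\Lambda_{m,k}}\rVert_{\text{F}}^{2}+\lVert\mathbf{A}_{\Lambda_{m,k}^{\mathsf{C}}}^{\textrm{H}}\mathbf{V}_{\Lambda_{m,k}}\rVert_{\text{F}}^{2}=\lVert\mathbf{A}_{m,k}^{\textrm{H}}\mathbf{V}_{\Lambda_{m,k}}\rVert_{\text{F}}^{2}$. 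You instead absorb the path selection into $\mathbf{C}_{m,k}=\mathbf{A}_{m,k}^{\textrm{H}}\mathbf{V}_{\Lambda_{m,k}}\mathbf{S}_{m,k}$ and apply the unnormalized complex-Gaussian Wick identity $\mathbb{E}\left[\lvert\mathbf{g}^{\textrm{H}}\mathbf{C}\mathbf{g}\rvert^{2}\right]=\lvert\text{tr}(\mathbf{C})\rvert^{2}+\lVert\mathbf{C}\rVert_{\text{F}}^{2}$ once per BS, recovering the cross-BS coherent term from the nonzero first moments and inter-BS independence. The two are computationally equivalent --- the paper's Lemma~3 is your Wick identity after normalization, with $\mathbb{E}\left[\lVert\mathbf{g}_{\Lambda_{k}}\rVert^{4}\right]=L(L+1)$ cancelling the $\frac{1}{L(L+1)}$ --- but your route avoids both the selected/unselected split and the sphere-uniform lemma, at the small cost of verifying that right-multiplication by the orthonormal-row selection matrix preserves the trace (via cyclicity and $\mathbf{S}_{m,k}\mathbf{A}_{m,k}^{\textrm{H}}=\mathbf{A}_{\Lambda_{m,k}}^{\textrm{H}}$) and the Frobenius norm, which you do. The interference term is handled identically in both proofs, and you correctly flag that the stated equality is for the rate with expectations moved inside the logarithm as in \eqref{2.1.6}, a point the paper leaves implicit.
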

\begin{proof}
See Appendix A.
\end{proof}
\noindent Then, the dominating path selection problem to choose $L$ paths maximizing the sum rate for each user can be formulated as
\begin{subequations}\label{3.2.3}
\begin{align}
\mathcal{P}_{1}:\underset{\lbrace \Lambda_{m,k}\rbrace,\lbrace\mathbf{V}_{\Lambda_{m,k}}\rbrace}{\text{max}}&\,R_{\text{tot}}\left(\lbrace\Lambda_{m,k}\rbrace,\lbrace\mathbf{V}_{\Lambda_{m,k}}\rbrace\right)\label{3.2.3.1}\\
\text{s.t.}\,\quad\quad &\sum_{m=1}^{M}\lvert\Lambda_{m,k}\rvert=L,\quad \forall k\in\mathcal{U}\label{3.2.3.2}\\
&\left\lVert\mathbf{V}_{\Lambda_{m,k}}\right\rVert_{\text{F}}=1,\quad \forall m\in\mathcal{B},\,\forall k\in\mathcal{U}.\label{3.2.3.3}
\end{align}
\end{subequations}
Note that \eqref{3.2.3.2} is the dominating path number constraint and \eqref{3.2.3.3} is the transmit power constraint. 

\subsection{Alternating Dominating Path Selection and Precoding Algorithm}
Major obstacle in solving $\mathcal{P}_{1}$ is the strong correlation between the dominating path index set $\Lambda_{m,k}$ and the precoding matrix $\mathbf{V}_{\Lambda_{m,k}}$. In fact, since the column dimension of $\mathbf{V}_{\Lambda_{m,k}}$ is the number of dominating paths $\lvert \Lambda_{m,k}\rvert$, $\Lambda_{m,k}$ and $\mathbf{V}_{\Lambda_{m,k}}$ cannot be determined simultaneously. In this subsection, we propose an algorithm to determine $\lbrace\Lambda_{m,k}\rbrace$ and $\lbrace\mathbf{V}_{\Lambda_{m,k}}\rbrace$ in an alternating way: 1) First, we fixed $\lbrace\Lambda_{m,k}\rbrace$ and find out the optimal precoding matrices $\lbrace\mathbf{V}_{\Lambda_{m,k}}\rbrace$ maximizing the sum rate. 2) We then update $\lbrace\Lambda_{m,k}\rbrace$ by removing the path index giving the minimal impact on the sum rate. We repeat these procedures until $L$ dominating paths remain for each user.

\subsubsection{Precoding Matrix Optimization}
We first discuss the way to find out the optimal precoding matrices $\lbrace\mathbf{V}_{\Lambda_{m,k}}\rbrace$ when $\lbrace\Lambda_{m,k}\rbrace$ are fixed. Unfortunately, this problem shown in \eqref{3.3.1} is highly non-convex and also contains multiple matrix variables. To address these issues, we first vectorize and concatenate the variables of multiple BSs $\mathbf{V}_{\Lambda_{1,k}},\cdots,\mathbf{V}_{\Lambda_{M,k}}$ into $\mathbf{x}_{\Lambda_{k}}$. Then, by exploiting the notion of \textit{leakage}, we decompose the sum rate maximization problem into distributed leakage minimization problems for each $\mathbf{x}_{\Lambda_{k}}$ to obtain the tractable closed-form solution. Finally, we de-vectorize and de-concatenate $\mathbf{x}_{\Lambda_{k}}$ to obtain the desired precoding matrices $\mathbf{V}_{\Lambda_{1,k}},\cdots,\mathbf{V}_{\Lambda_{M,k}}$. 

By plugging \eqref{3.2.2} into $\mathcal{P}_{1}$, we obtain
\begin{subequations}\label{3.3.1}
\begin{align}
\mathcal{P}_{2}:\underset{\lbrace\mathbf{V}_{\Lambda_{m,k}}\rbrace,\lbrace t_{k}\rbrace}{\text{max}} &\sum_{k=1}^{K}t_{k}\label{3.3.1.1}\\
\text{s.t.}\,\,\,\,\quad &\frac{\left\lvert\sum_{m=1}^{M}\textup{tr}\left(\mathbf{A}_{\Lambda_{m,k}}^{\textrm{H}}\mathbf{V}_{\Lambda_{m,k}}\right)\right\rvert^{2}+\sum_{m=1}^{M}\left\lVert\mathbf{A}_{m,k}^{\textrm{H}}\mathbf{V}_{\Lambda_{m,k}}\right\rVert_{\textup{F}}^{2}}{\sum_{j\neq k}^{K}\sum_{m=1}^{M}\left\lVert\mathbf{A}_{m,k}^{\textrm{H}}\mathbf{V}_{\Lambda_{m,j}}\right\rVert_{\textup{F}}^{2}+\sigma_{n}^{2}}\geq 2^{t_{k}}-1,\,\,\,\, \forall k\in\mathcal{U} \label{3.3.1.2}\\
&\left\lVert\mathbf{V}_{\Lambda_{m,k}}\right\rVert_{\text{F}}=1,\quad \forall m\in\mathcal{B},\,\forall k\in\mathcal{U},\label{3.3.1.3}
\end{align}
\end{subequations}
where $\lbrace t_{k}\rbrace$ are the auxiliary variables. Then, we vectorize the optimization variables ($\mathbf{x}_{\Lambda_{m,k}}=\text{vec}\left(\mathbf{V}_{\Lambda_{m,k}}\right)$, $\boldsymbol{\mu}_{\Lambda_{m,k}}=\text{vec}\left(\mathbf{A}_{\Lambda_{m,k}}\right)$)
and then concatenate the variables of multiple BSs  ($\mathbf{x}_{\Lambda_{k}}=\left[\mathbf{x}_{\Lambda_{1,k}}^{\textrm{T}},\,\cdots,\,\mathbf{x}_{\Lambda_{M,k}}^{\textrm{T}}\right]^{\textrm{T}}$, $\boldsymbol{\mu}_{\Lambda_{k}}=\left[\boldsymbol{\mu}_{\Lambda_{1,k}}^{\textrm{T}},\,\cdots,\,\boldsymbol{\mu}_{\Lambda_{M,k}}^{\textrm{T}}\right]^{\textrm{T}}$) to obtain 
\begin{subequations}\label{3.3.2}
\begin{align}
\mathcal{P}_{3}:\,\underset{\lbrace\mathbf{x}_{\Lambda_{k}}\rbrace,\,\lbrace t_{k}\rbrace}{\text{max}}\,\, &\sum_{k=1}^{K}t_{k}\label{3.3.2.1}\\
\text{s.t.}\,\,\,\,\,\quad &\frac{\left\lvert\boldsymbol{\mu}_{\Lambda_{k}}^{\textrm{H}}\mathbf{x}_{\Lambda_{k}}\right\rvert^{2}+\mathbf{x}_{\Lambda_{k}}^{\textrm{H}}\boldsymbol{\Gamma}_{k,k}\mathbf{x}_{\Lambda_{k}}}{\sum_{j\neq k}^{K}\mathbf{x}_{\Lambda_{j}}^{\textrm{H}}\boldsymbol{\Gamma}_{j,k}\mathbf{x}_{\Lambda_{j}}+\sigma_{n}^{2}}\geq 2^{t_{k}}-1,\quad \forall k\in\mathcal{U} \label{3.3.2.2}\\
&\left\lVert\mathbf{x}_{\Lambda_{k}}\right\rVert=\sqrt{M},\quad \forall m\in\mathcal{B},\,\forall k\in\mathcal{U},\label{3.3.2.3}
\end{align}
\end{subequations}
where $\boldsymbol{\Gamma}_{j,k}=\text{diag}\left(\mathbf{I}_{\lvert\Lambda_{1,j}\rvert}\otimes\mathbf{A}_{1,k}\mathbf{A}_{1,k}^{\textrm{H}},\,\cdots,\,\mathbf{I}_{\lvert\Lambda_{M,j}\rvert}\otimes\mathbf{A}_{M,k}\mathbf{A}_{M,k}^{\textrm{H}}\right)$. Here, we use the properties $\text{tr}\left(\!\mathbf{A}_{\Lambda_{m,k}}^{\textrm{H}}\!\mathbf{V}_{\Lambda_{m,k}}\!\right)\!=\!\text{vec}\left(\mathbf{A}_{\Lambda_{m,k}}\right)^{\textrm{H}}\!\text{vec}\left(\mathbf{V}_{\Lambda_{m,k}}\right)$ and $\lVert\mathbf{A}_{m,k}^{\textrm{H}}\mathbf{V}_{\Lambda_{m,j}}\rVert_{\text{F}}\!=\!\left\lVert\left(\mathbf{I}_{\lvert\Lambda_{m,j}\rvert}\!\otimes\!\mathbf{A}_{m,k}^{\textrm{H}}\right)\text{vec}\left(\mathbf{V}_{\Lambda_{m,j}}\right)\right\rVert$. Also, since it is difficult to satisfy the norm constraint $\left\lVert\mathbf{V}_{\Lambda_{m,k}}\right\rVert_{\text{F}}=\left\lVert\mathbf{x}_{\Lambda_{m,k}}\right\rVert=1$ for each and every $m\in\mathcal{B}$, we use a relaxed normalized constraint $\left\lVert\mathbf{x}_{\Lambda_{k}}\right\rVert=\sqrt{\sum_{m=1}^{M}\lVert\mathbf{x}_{\Lambda_{m,k}}\rVert^{2}}=\sqrt{M}$ in $\mathcal{P}_{3}$.

The modified problem $\mathcal{P}_{3}$ looks simpler than the original problem $\mathcal{P}_{2}$, but it is still hard to find the optimal solution. The reason is because the rate constraint \eqref{3.3.2.2} is a non-convex quadratic fractional function (i.e., both numerator and denominator are quadratic functions) so that $\mathcal{P}_{3}$ is a non-convex optimization problem. Further, $\mathcal{P}_{3}$ requires joint optimization for $\mathbf{x}_{\Lambda_{1,k}},\cdots,\mathbf{x}_{\Lambda_{M,k}}$, and thus it is very difficult to find out the global solutions simultaneously. As a remedy, we introduce the notion of \textit{leakage}, a measure of how much signal power leaks into the other users~\cite{sadek2007leakage}. To be specific, the signal-to-leakage-and-noise-ratio (SLNR) of the user $k$ is given by 
\begin{align}\label{3.3.3}
\text{SLNR}_{k}=\frac{\mathbb{E}\left[\left\lvert\sum_{m=1}^{M}\mathbf{h}_{m,k}^{\textrm{H}}\mathbf{w}_{m,k}\right\rvert^{2}\right]}{\sum_{j\neq k}^{K}\mathbb{E}\left[\left\lvert\sum_{m=1}^{M}\mathbf{h}_{m,j}^{\textrm{H}}\mathbf{w}_{m,k}\right\rvert^{2}\right]+\sigma_{n}^{2}}\stackrel{(a)}{=}\frac{\left\lvert\boldsymbol{\mu}_{k}^{\textrm{H}}\mathbf{x}_{\Lambda_{k}}\right\rvert^{2}+\mathbf{x}_{\Lambda_{k}}^{\textrm{H}}\boldsymbol{\Gamma}_{k,k}\mathbf{x}_{\Lambda_{k}}}{\sum_{j\neq k}^{K}\mathbf{x}_{\Lambda_{k}}^{\textrm{H}}\boldsymbol{\Gamma}_{k,j}\mathbf{x}_{\Lambda_{k}}+\sigma_{n}^{2}}.
\end{align}
where $(a)$ comes from \eqref{3.3.2.2}\footnote{When compared to the signal-to-interference-and-noise-ratio (SINR) of the user $k$ in \eqref{2.1.5}, one can observe that the only difference is the exchange of user index at the denominator between $\mathbf{h}_{m,j}^{\textrm{H}}\mathbf{w}_{m,k}$ and $\mathbf{h}_{m,k}^{\textrm{H}}\mathbf{w}_{m,j}$. Hence, we can easily obtain the closed-form expression of $\text{SLNR}_{k}$ from \eqref{3.3.2.2}.}. Note that while \eqref{3.3.2.2} is a function of $\mathbf{x}_{\Lambda_{1}},\cdots,\mathbf{x}_{\Lambda_{K}}$, $\text{SLNR}_{k}$ in \eqref{3.3.3} is a sole function of $\mathbf{x}_{\Lambda_{k}}$. Thus, for each user $k$, we can find out the optimal $\mathbf{x}_{\Lambda_{k}}^{*}$ maximizing $\text{SLNR}_{k}$ separately. While the solution is a bit sub-optimal, it is simple and easy to calculate~\cite{sadek2007leakage}.

The distributed SLNR maximization problem for the user $k$ is given by 
\begin{align}\label{3.3.4}
\mathcal{P}_{4}:\mathbf{x}_{\Lambda_{k}}^{*}=\text{arg}\underset{\left\lVert\mathbf{x}_{\Lambda_{k}}\right\rVert=\sqrt{M}}{\text{max}}\,\,&\frac{\left\lvert\boldsymbol{\mu}_{\Lambda_{k}}^{\textrm{H}}\mathbf{x}_{\Lambda_{k}}\right\rvert^{2}+\mathbf{x}_{\Lambda_{k}}^{\textrm{H}}\boldsymbol{\Gamma}_{k,k}\mathbf{x}_{\Lambda_{k}}}{\sum_{j\neq k}^{K}\mathbf{x}_{\Lambda_{k}}^{\textrm{H}}\boldsymbol{\Gamma}_{k,j}\mathbf{x}_{\Lambda_{k}}+\sigma_{n}^{2}},\quad\forall k\in\mathcal{U}.
\end{align} 
Using the normalization constraint, we can simplify the objective function of $\mathcal{P}_{4}$ as
\begin{align}\label{3.3.5}
\frac{\left\lvert\boldsymbol{\mu}_{\Lambda_{k}}^{\textrm{H}}\mathbf{x}_{\Lambda_{k}}\right\rvert^{2}+\mathbf{x}_{\Lambda_{k}}^{\textrm{H}}\boldsymbol{\Gamma}_{k,k}\mathbf{x}_{\Lambda_{k}}}{\sum_{j\neq k}^{K}\mathbf{x}_{\Lambda_{k}}^{\textrm{H}}\boldsymbol{\Gamma}_{k,j}\mathbf{x}_{\Lambda_{k}}+\sigma_{n}^{2}}&=\frac{\mathbf{x}_{\Lambda_{k}}^{\textrm{H}}\left(\boldsymbol{\mu}_{\Lambda_{k}}\boldsymbol{\mu}_{\Lambda_{k}}^{\textrm{H}}+\boldsymbol{\Gamma}_{k,k}\right)\mathbf{x}_{\Lambda_{k}}}{\mathbf{x}_{\Lambda_{k}}^{\textrm{H}}\left(\sum_{j\neq k}^{K}\boldsymbol{\Gamma}_{k,j}+\frac{\sigma_{n}^{2}}{M}\mathbf{I}_{N\left\lvert\Lambda_{k}\right\rvert}\right)\mathbf{x}_{\Lambda_{k}}}\\
&=\frac{\mathbf{x}_{\Lambda_{k}}^{\textrm{H}}\mathbf{U}_{k}\mathbf{x}_{\Lambda_{k}}}{\mathbf{x}_{\Lambda_{k}}^{\textrm{H}}\mathbf{W}_{k}\mathbf{x}_{\Lambda_{k}}},
\end{align}
where $\mathbf{U}_{k}=\boldsymbol{\mu}_{\Lambda_{k}}\boldsymbol{\mu}_{\Lambda_{k}}^{\textrm{H}}+\boldsymbol{\Gamma}_{k,k}$ and $\mathbf{W}_{k}=\sum_{j\neq k}^{K}\boldsymbol{\Gamma}_{k,j}+\frac{\sigma_{n}^{2}}{M}\mathbf{I}_{N\left\lvert\Lambda_{k}\right\rvert}$. Then, $\mathcal{P}_{4}$ can be re-expressed as
\begin{align}\label{3.3.6}
\mathcal{P}_{4}:\,\mathbf{x}_{\Lambda_{k}}^{*}=\text{arg}\underset{\left\lVert\mathbf{x}_{\Lambda_{k}}\right\rVert=\sqrt{M}}{\text{max}}\,\,&\frac{\mathbf{x}_{\Lambda_{k}}^{\textrm{H}}\mathbf{U}_{k}\mathbf{x}_{\Lambda_{k}}}{\mathbf{x}_{\Lambda_{k}}^{\textrm{H}}\mathbf{W}_{k}\mathbf{x}_{\Lambda_{k}}},\quad \forall k\in\mathcal{U}.
\end{align}
\begin{lemma}
The solution $\mathbf{x}_{\Lambda_{k}}^{*}$ of $\mathcal{P}_{4}$ is given by~\cite{sadek2007leakage}
\begin{align}\label{3.3.7}
\mathbf{x}_{\Lambda_{k}}^{*}=\sqrt{M}\frac{\mathbf{u}_{k,\text{max}}}{\lVert\mathbf{u}_{k,\text{max}}\rVert},
\end{align}
where $\mathbf{u}_{k,\text{max}}$ is the eigenvector corresponding to the largest eigenvalue of $\mathbf{W}_{k}^{-1}\mathbf{U}_{k}$.
\end{lemma}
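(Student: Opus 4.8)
The plan is to recognize the objective of $\mathcal{P}_{4}$ as a generalized Rayleigh quotient and to reduce it to a standard Hermitian eigenvalue problem. The crucial structural observation is that the quotient $\frac{\mathbf{x}_{\Lambda_{k}}^{\textrm{H}}\mathbf{U}_{k}\mathbf{x}_{\Lambda_{k}}}{\mathbf{x}_{\Lambda_{k}}^{\textrm{H}}\mathbf{W}_{k}\mathbf{x}_{\Lambda_{k}}}$ is invariant under any nonzero scaling $\mathbf{x}_{\Lambda_{k}}\to c\,\mathbf{x}_{\Lambda_{k}}$, so the norm constraint $\lVert\mathbf{x}_{\Lambda_{k}}\rVert=\sqrt{M}$ does not affect the maximizing direction and only fixes the final scaling. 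I would therefore first solve the unconstrained direction-finding problem and impose the norm at the very end. As a preliminary I would verify that $\mathbf{W}_{k}=\sum_{j\neq k}^{K}\boldsymbol{\Gamma}_{k,j}+\frac{\sigma_{n}^{2}}{M}\mathbf{I}_{N\lvert\Lambda_{k}\rvert}$ is positive definite: each $\boldsymbol{\Gamma}_{k,j}$ is positive semidefinite, being block diagonal with blocks $\mathbf{I}\otimes\mathbf{A}_{m,k}\mathbf{A}_{m,k}^{\textrm{H}}$, while the term $\frac{\sigma_{n}^{2}}{M}\mathbf{I}_{N\lvert\Lambda_{k}\rvert}$ with $\sigma_{n}^{2}>0$ renders the sum strictly positive definite. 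Hence $\mathbf{W}_{k}$ admits an invertible Hermitian square root $\mathbf{W}_{k}^{1/2}$.

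Next I would apply the change of variables $\mathbf{y}=\mathbf{W}_{k}^{1/2}\mathbf{x}_{\Lambda_{k}}$, under which the objective becomes the standard Rayleigh quotient $\frac{\mathbf{y}^{\textrm{H}}\widetilde{\mathbf{U}}_{k}\mathbf{y}}{\mathbf{y}^{\textrm{H}}\mathbf{y}}$ of the Hermitian matrix $\widetilde{\mathbf{U}}_{k}:=\mathbf{W}_{k}^{-1/2}\mathbf{U}_{k}\mathbf{W}_{k}^{-1/2}$. By the Rayleigh--Ritz (Courant--Fischer) theorem, this quotient is maximized by the eigenvector of $\widetilde{\mathbf{U}}_{k}$ associated with its largest eigenvalue, and the maximum value equals that eigenvalue. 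Transforming back through $\mathbf{x}_{\Lambda_{k}}=\mathbf{W}_{k}^{-1/2}\mathbf{y}$ and using the similarity $\mathbf{W}_{k}^{-1}\mathbf{U}_{k}=\mathbf{W}_{k}^{-1/2}\widetilde{\mathbf{U}}_{k}\mathbf{W}_{k}^{1/2}$, which shows that $\widetilde{\mathbf{U}}_{k}$ and $\mathbf{W}_{k}^{-1}\mathbf{U}_{k}$ share the same eigenvalues, the optimal direction is precisely the eigenvector $\mathbf{u}_{k,\text{max}}$ of $\mathbf{W}_{k}^{-1}\mathbf{U}_{k}$ corresponding to its largest eigenvalue. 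Rescaling this direction to meet $\lVert\mathbf{x}_{\Lambda_{k}}\rVert=\sqrt{M}$ then yields $\mathbf{x}_{\Lambda_{k}}^{*}=\sqrt{M}\frac{\mathbf{u}_{k,\text{max}}}{\lVert\mathbf{u}_{k,\text{max}}\rVert}$, as claimed.

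An equivalent route proceeds via Lagrange multipliers: stationarity of the quotient (equivalently, of $\mathbf{x}_{\Lambda_{k}}^{\textrm{H}}\mathbf{U}_{k}\mathbf{x}_{\Lambda_{k}}$ subject to $\mathbf{x}_{\Lambda_{k}}^{\textrm{H}}\mathbf{W}_{k}\mathbf{x}_{\Lambda_{k}}$ held fixed) gives the generalized eigenvalue equation $\mathbf{U}_{k}\mathbf{x}_{\Lambda_{k}}=\lambda\mathbf{W}_{k}\mathbf{x}_{\Lambda_{k}}$, i.e. $\mathbf{W}_{k}^{-1}\mathbf{U}_{k}\mathbf{x}_{\Lambda_{k}}=\lambda\mathbf{x}_{\Lambda_{k}}$; substituting back shows that the objective value equals $\lambda$, so the maximizer corresponds to the largest generalized eigenvalue. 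Since this is a classical result I do not anticipate a genuine obstacle. The only points requiring care are confirming the positive definiteness of $\mathbf{W}_{k}$ (guaranteed by the noise term), which makes $\mathbf{W}_{k}^{1/2}$ and the similarity transform well defined, and invoking the scale invariance so that the quadratic norm constraint can be deferred to the final scaling step.
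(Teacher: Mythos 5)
Your proof is correct. Note that the paper does not actually supply a proof of this lemma at all: it states the result and defers entirely to the citation of Sadek et al., where the SLNR-maximizing beamformer is derived. Your argument is the standard one underlying that cited result -- verify that $\mathbf{W}_{k}$ is Hermitian positive definite (the $\frac{\sigma_{n}^{2}}{M}\mathbf{I}$ term guarantees this since the $\boldsymbol{\Gamma}_{k,j}$ are positive semidefinite), exploit scale invariance of the generalized Rayleigh quotient to decouple the direction from the norm constraint, whiten via $\mathbf{W}_{k}^{1/2}$ to reduce to an ordinary Rayleigh quotient of the Hermitian matrix $\mathbf{W}_{k}^{-1/2}\mathbf{U}_{k}\mathbf{W}_{k}^{-1/2}$, apply Rayleigh--Ritz, and map back through the similarity to identify the maximizer with the top eigenvector of $\mathbf{W}_{k}^{-1}\mathbf{U}_{k}$ before rescaling to $\lVert\mathbf{x}_{\Lambda_{k}}\rVert=\sqrt{M}$. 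All steps are sound, and the Lagrange-multiplier route you sketch as an alternative is equivalent. In effect you have filled in a proof the paper omits, which is a strictly more self-contained treatment than the paper's.
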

\noindent Using Lemma 1, we can easily obtain the closed-form solution $\mathbf{x}_{\Lambda_{k}}^{*}$ of $\mathcal{P}_{4}$. From the de-vectorization and de-concatenation of $\mathbf{x}_{\Lambda_{k}}^{*}$, we obtain the desired matrices $\mathbf{V}_{\Lambda_{1,k}}^{*},\cdots,\mathbf{V}_{\Lambda_{M,k}}^{*}$. 

\subsubsection{Dominating Path Index Update}
Once we obtain $\lbrace\mathbf{V}_{\Lambda_{m,k}}\rbrace$ from the precoding matrix optimization, we then update the dominating path indices $\lbrace\Lambda_{m,k}\rbrace$ by removing the path index giving the minimal impact on the sum rate. In particular, for each user $k$, we choose the path index $\hat{i}_{k}$ corresponding to the minimum $l_{2}$-norm column vector of $\left[\mathbf{V}_{\Lambda_{1,k}},\cdots,\mathbf{V}_{\Lambda_{M,k}}\right]$ as
\begin{align}
(\hat{m}_{k},\,\hat{i}_{k})=\text{arg}\underset{m\in\mathcal{B},\,i\in\Lambda_{m,k}}{\text{min}}\left\lVert\mathbf{v}_{m,k,i}\right\rVert,
\end{align}
and then remove $\hat{i}_{k}$ from $\Lambda_{\hat{m}_{k},k}$. Note that $\mathbf{v}_{m,k,i}$ is the column vector of $\mathbf{V}_{\Lambda_{m,k}}$ corresponding to the $i$-th path from the BS $m$ to the user $k$. The intuition behind this choice is because 
\begin{align}
\mathbb{E}\left[\left\lVert\mathbf{w}_{m,k}\right\rVert^{2}\right]&=\mathbb{E}\left[\left\lVert\sum_{i\in\Lambda_{m,k}}\hat{g}_{m,k,i}\mathbf{v}_{m,k,i}\right\rVert^{2}\right]\\
&=\sum_{i\in\Lambda_{m,k}}\lVert\mathbf{v}_{m,k,i}\rVert^{2}\mathbb{E}\left[\lvert \hat{g}_{m,k,i}\rvert^{2}\right]\\
&=\sum_{i\in\Lambda_{m,k}}\lVert\mathbf{v}_{m,k,i}\rVert^{2},
\end{align}
and thus, the removal of the minimum $l_{2}$-norm column vector $\mathbf{v}_{\hat{m}_{k},k,\hat{i}_{k}}$ would give a minimal impact on $\mathbf{w}_{m,k}$. In addition, since the sum rate is a function of $\mathbf{w}_{m,k}$, it is quite reasonable to assume that the removal of corresponding path index $\hat{i}_{k}$ would also give a minimal impact on the sum rate\footnote{Even though $L$ is chosen to be larger than the effective number of propagation paths, the precoding matrix would be optimized such that the transmit power is focused on the best column vectors (corresponding to the dominant paths).}. The precoding matrix optimization and the dominating path index update are repeated iteratively until only $L$ paths remain for each user. The proposed alternating algorithm is summarized in Table I.

Once the dominating paths maximizing the sum rate are chosen, each user acquires the corresponding dominating PGI from the downlink pilot signal, quantizes the acquired dominating PGI, and then feeds it back to the BSs. In the following section, we will discuss this issue in detail. 

\begin{algorithm}[t]
\begin{algorithmic}[0]
\item[\textbf{Input:}] Path AoDs $\lbrace\theta_{m,k,i}\rbrace$, BS set  $\,\mathcal{B}$, user set $\,\mathcal{U}$, number of propagation paths $P$,
\Statex \quad \quad \,\,number of dominating paths $L$ 
\item[\textbf{Initialization:}] $\Lambda_{m,k}=\lbrace 1,\cdots,P\rbrace,\quad \forall m\in\mathcal{B},\,\forall k\in\mathcal{U}$
\Statex \quad\quad\quad\quad\quad $\lbrace\mathbf{V}_{\Lambda_{m,k}}\rbrace=\text{Precoding\_matrix\_optimization}\left(\lbrace\theta_{m,k,i}\rbrace,\,\lbrace\Lambda_{m,k}\rbrace\right)$
\algrule
\item[\textbf{Iteration:}]
\While{$\sum_{m=1}^{M}\lvert\Lambda_{m,k}\rvert>L$ for some $k$} \Comment{Check the number of dominating paths}
\For{$k\in\mathcal{U}$}
\If{$\sum_{m=1}^{M}\lvert\Lambda_{m,k}\rvert>L$}
\State $(\hat{m}_{k},\hat{i}_{k})=\text{arg}\underset{m\in\mathcal{B},\,i\in\Lambda_{m,k}}{\text{min}}\left\lVert\mathbf{v}_{m,k,i}\right\rVert$\Comment{Find the minimal $l_{2}$-norm column vector}
\State $\Lambda_{\hat{m}_{k},k}=\Lambda_{\hat{m}_{k},k}\setminus \lbrace \hat{i}_{k}\rbrace$\Comment{Remove the corresponding path index}
\EndIf
\EndFor
\State $\lbrace\mathbf{V}_{\Lambda_{m,k}}\rbrace=\text{Precoding\_matrix\_optimization}\left(\lbrace\theta_{m,k,i}\rbrace,\,\lbrace\Lambda_{m,k}\rbrace\right)$
\EndWhile
\algrule
\item[\textbf{Function}] $\text{Precoding\_matrix\_optimization}\left(\lbrace\theta_{m,k,i}\rbrace,\,\lbrace\Lambda_{m,k}\rbrace\right)$
\State $\boldsymbol{\mu}_{\Lambda_{m,k}}=\text{vec}\left(\mathbf{A}_{\Lambda_{m,k}}\right), \boldsymbol{\mu}_{\Lambda_{k}}=\left[\boldsymbol{\mu}_{\Lambda_{1,k}}^{\textrm{T}},\,\cdots,\,\boldsymbol{\mu}_{\Lambda_{M,k}}^{\textrm{T}}\right]^{\textrm{T}},\quad \forall m\in\mathcal{B},\,\forall k\in\mathcal{U}$
\State $\boldsymbol{\Gamma}_{j,k}=\text{diag}\left(\mathbf{I}_{\left\lvert\Lambda_{1,j}\right\rvert}\otimes\mathbf{A}_{1,k}\mathbf{A}_{1,k}^{\textrm{H}},\,\cdots,\,\mathbf{I}_{\left\lvert\Lambda_{M,j}\right\rvert}\otimes\mathbf{A}_{M,k}\mathbf{A}_{M,k}^{\textrm{H}}\right),\quad \forall j,k\in\mathcal{U}$
\For{$k\in\mathcal{U}$}
\State $\mathbf{U}_{k}=\boldsymbol{\mu}_{\Lambda_{k}}\boldsymbol{\mu}_{\Lambda_{k}}^{\textrm{H}}+\boldsymbol{\Gamma}_{k,k},\,\mathbf{W}_{k}=\sum_{j\neq k}^{K}\boldsymbol{\Gamma}_{k,j}+\frac{\sigma_{n}^{2}}{M}\mathbf{I}_{N\left\lvert\Lambda_{k}\right\rvert}$
\State $\mathbf{u}_{k,\text{max}}=\text{max\_eigenvector}\left(\mathbf{W}_{k}^{-1}\mathbf{U}_{k}\right)$
\State $\hat{\mathbf{x}}_{\Lambda_{k}}=\sqrt{M}\frac{\mathbf{u}_{k,\text{max}}}{\lVert\mathbf{u}_{k,\text{max}}\rVert}$
\State $\left[\hat{\mathbf{x}}_{\Lambda_{1,k}}^{\textrm{T}},\cdots,\hat{\mathbf{x}}_{\Lambda_{M,k}}^{\textrm{T}}\right]^{\textrm{T}}=\hat{\mathbf{x}}_{\Lambda_{k}}$
\State $\hat{\mathbf{V}}_{\Lambda_{m,k}}=\text{vec}^{-1}\left(\hat{\mathbf{x}}_{\Lambda_{m,k}}\right),\quad \forall m\in\mathcal{B}$
\EndFor
\State \textbf{return} $\lbrace\hat{\mathbf{V}}_{\Lambda_{m,k}}\rbrace$
\item[\textbf{end function}] $\,$
\algrule
\item[\textbf{Output:}] $\lbrace \Lambda_{m,k}\rbrace$, $\lbrace\mathbf{V}_{\Lambda_{m,k}}\rbrace$
\captionof{table}{Alternating dominating path selection and precoding algorithm}
\end{algorithmic}
\end{algorithm}
\clearpage 

\begin{figure}[t]
\centering
\includegraphics[scale=0.43]{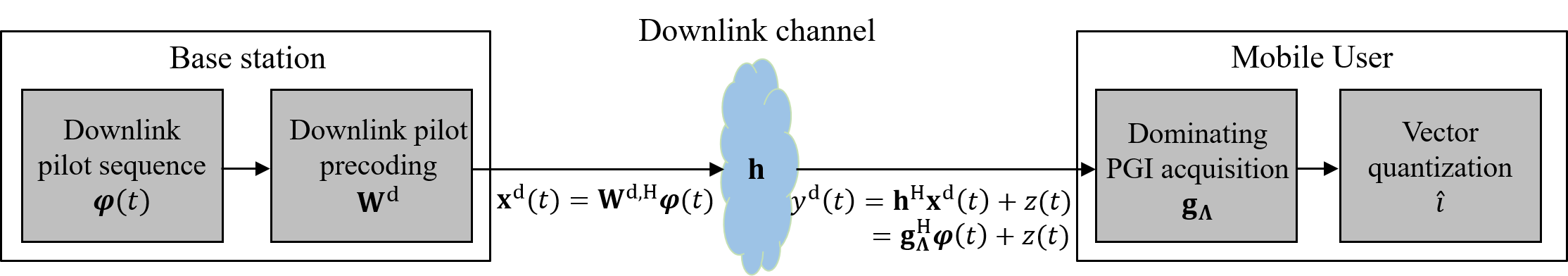}
\caption{Downlink pilot precoding for dominating PGI acquisition}
\end{figure}

\section{Downlink Pilot Precoding for Dominating Path Gain Information Acquisition}
In the FDD systems, a user acquires the channel information from the downlink pilot signal and then feeds the quantized channel vector back to the BS. In contrast, in the proposed scheme, a user acquires the dominating PGI and then feeds back the quantized value to BS. There are however some difficulties in the dominating PGI acquisition. First, since each user needs to selectively feed back PGIs of the dominating paths, the BS must assign additional resources to indicate the desired path information. Also, it is computationally inefficient for the user to estimate the gain of all possible paths. To handle this issue, we propose a new downlink training scheme using spatially precoded pilot signal in the acquisition of dominating PGI. 

In essence, the goal of precoded pilot signal is to convert the downlink channel vector into the dominating PGI vector so that the user can easily estimate the dominating PGI using the conventional channel estimation techniques such as the linear minimum mean square error (LMMSE) estimator~\cite{choi2014downlink} (see Fig. 5). Additionally, since the dimension of dominating PGI (i.e., the number of dominating paths) is reduced and thus becomes much smaller than that of the downlink CSI (i.e., the number of transmit antennas), we can achieve a reduction in the pilot resources. 

When the pilot precoding matrix $\mathbf{W}_{m,k}^{\textrm{d}}\in\mathbb{C}^{\lvert\Lambda_{m,k}\rvert\times N}$ is applied, the downlink precoded pilot signal $\mathbf{x}_{m}^{\textrm{d}}(t)\in\mathbb{C}^{N}$ of the BS $m$ at time slot $t$ is given by
\begin{align}\label{4.1.1}
\mathbf{x}_{m}^{\textrm{d}}(t)=\sum_{k=1}^{K}\mathbf{W}_{m,k}^{\textrm{d},\textrm{H}}\boldsymbol{\psi}_{m,k}(t),\quad t=1,\cdots,\tau
\end{align}
where $\lbrace\boldsymbol{\psi}_{m,k}(t)\rbrace_{t=1}^{\tau}\subseteq\mathbb{C}^{\lvert\Lambda_{m,k}\rvert}$ is the downlink pilot sequence from the BS $m$ to the user $k$. Then, the received signal $y_{k}^{\textrm{d}}(t)\in\mathbb{C}$ of the user $k$ at time slot $t$ is 
\begin{align}\label{4.1.2}
y_{k}^{\textrm{d}}(t)=\sum_{m=1}^{M}\left(\mathbf{W}_{m,k}^{\textrm{d}}\mathbf{h}_{m,k}\right)^{\textrm{H}}\boldsymbol{\psi}_{m,k}(t)+\sum_{m=1}^{M}\sum_{j\neq k}^{K}\left(\mathbf{W}_{m,j}^{\textrm{d}}\mathbf{h}_{m,k}\right)^{\textrm{H}}\boldsymbol{\psi}_{m,j}(t)+z_{k}(t)
\end{align}
where $z_{k}(t)\sim\mathcal{CN}(0,\sigma_{z}^{2})$ is the Gaussian noise. The user $k$ collects this received signal for each slot, i.e., $\mathbf{y}_{k}^{\textrm{d}}=\left[y_{k}^{\textrm{d}}(1),\cdots,y_{k}^{\textrm{d}}(\tau)\right]^{\textrm{H}}$ and then multiplies $\boldsymbol{\Psi}_{m,k}=\left[\boldsymbol{\psi}_{m,k}(1),\cdots,\boldsymbol{\psi}_{m,k}(\tau)\right]$ to get
\begin{align}
\boldsymbol{\Psi}_{m,k}\mathbf{y}_{k}^{\textrm{d}}&=\boldsymbol{\Psi}_{m,k}\left(\sum_{m=1}^{M}\boldsymbol{\Psi}_{m,k}^{\textrm{H}}\mathbf{W}_{m,k}^{\textrm{d}}\mathbf{h}_{m,k}+\sum_{m=1}^{M}\sum_{j\neq k}^{K}\boldsymbol{\Psi}_{m,j}^{\textrm{H}}\mathbf{W}_{m,j}^{\textrm{d}}\mathbf{h}_{m,k}+\mathbf{z}_{k}\right)\label{4.1.3.1} \\
&\stackrel{(a)}{=}\mathbf{W}_{m,k}^{\textrm{d}}\mathbf{h}_{m,k}+\mathbf{n}_{k}\label{4.1.3.2}
\end{align}
where $\mathbf{z}_{k}=\left[z_{k}(1),\cdots,z_{k}(\tau)\right]^{\textrm{H}}$ and $\mathbf{n}_{k}=\boldsymbol{\Psi}_{m,k}\mathbf{z}_{k}$. Also, $(a)$ is due to the orthogonality of pilot sequence. 

From \eqref{4.1.3.2}, we observe that if the BS uses a precoding matrix $\mathbf{W}_{m,k}^{\textrm{d}}$ satisfying $\mathbf{W}_{m,k}^{\textrm{d}}\mathbf{h}_{m,k}=\mathbf{g}_{\Lambda_{m,k}}$, then one can extract the dominating PGI vector $\mathbf{g}_{\Lambda_{m,k}}$ from $\boldsymbol{\Psi}_{m,k}\mathbf{y}_{k}^{\textrm{d}}$. To generate the desired precoding matrix $\mathbf{W}_{m,k}^{\textrm{d}}$, we basically need to perform two operations: 1) application of the matrix inversion of $\mathbf{A}_{m,k}^{+}=\left(\mathbf{A}_{m,k}^{\textrm{H}}\mathbf{A}_{m,k}\right)^{-1}\mathbf{A}_{m,k}^{\textrm{H}}$ and 2) compression of $\mathbf{g}_{m,k}$ into $\mathbf{g}_{\Lambda_{m,k}}$. Note that $\mathbf{A}_{m,k}^{+}$ exists as long as $\mathbf{A}_{m,k}^{\textrm{H}}\mathbf{A}_{m,k}$ is invertible, which is easily guaranteed by the fact that the array steering vectors corresponding to different AoDs are independent and the number of transmit antennas $N$ is larger then the number of paths $P$. Thus,
\begin{align}\label{4.1.4}
\mathbf{A}_{m,k}^{+}\mathbf{h}_{m,k}\stackrel{(a)}{=}\mathbf{A}_{m,k}^{+}\mathbf{A}_{m,k}\mathbf{g}_{m,k}=\mathbf{g}_{m,k}
\end{align}
where $(a)$ is from \eqref{2.1.3}. Once $\mathbf{g}_{m,k}$ is obtained, we then extract $\mathbf{g}_{\Lambda_{m,k}}$ from $\mathbf{g}_{m,k}$ using the path selection matrix $\mathbf{G}_{m,k}$. For example, if the number of propagation paths is $3$ and $\Lambda_{m,k}=\lbrace 1,\,3\rbrace$, then $\mathbf{G}_{m,k}=
\small{
\begin{bmatrix}
1 & 0 & 0 \\
0 & 0 & 1 \\
\end{bmatrix}}$
and thus,  
\begin{align}\label{4.1.5}
\mathbf{G}_{m,k}\mathbf{g}_{m,k}=\begin{bmatrix}
1 & 0 & 0 \\
0 & 0 & 1 \\
\end{bmatrix}
\begin{bmatrix}
g_{m,k,1}\\
g_{m,k,2}\\
g_{m,k,3}\\
\end{bmatrix}
=
\begin{bmatrix}
g_{m,k,1}\\
g_{m,k,3}\\
\end{bmatrix}
=\mathbf{g}_{\Lambda_{m,k}}
\end{align} 
In summary, the pilot precoding matrix $\mathbf{W}_{m,k}^{\textrm{d}}$ from the BS $m$ to the user $k$ is given by
\begin{align}\label{4.1.6}
\mathbf{W}_{m,k}^{\textrm{d}}=\mathbf{G}_{m,k}\mathbf{A}_{m,k}^{+}
\end{align} 
Using $\mathbf{W}_{m,k}^{\textrm{d}}$ in \eqref{4.1.6}, we can convert $\mathbf{h}_{m,k}$ into $\mathbf{g}_{\Lambda_{m,k}}$ (i.e., $\mathbf{W}_{m,k}^{\textrm{d}}\mathbf{h}_{m,k}=\mathbf{g}_{\Lambda_{m,k}}$). Hence, \eqref{4.1.3.2} can be re-expressed as
\begin{align}\label{4.1.7}
\boldsymbol{\Psi}_{m,k}\mathbf{y}_{k}^{\textrm{d}}=\mathbf{g}_{\Lambda_{m,k}}+\mathbf{n}_{k},
\end{align}
Finally, the user $k$ acquires $\hat{\mathbf{g}}_{\Lambda_{m,k}}$ from $\boldsymbol{\Psi}_{m,k}\mathbf{y}_{k}^{\textrm{d}}$ by using the linear MMSE estimation~\cite{choi2014downlink} as
\begin{align}\label{4.1.8}
\hat{\mathbf{g}}_{\Lambda_{m,k}}=\frac{1}{1+\sigma_{z}^{2}}\boldsymbol{\Psi}_{m,k}\mathbf{y}_{k}^{\textrm{d}}
\end{align}

After the estimation of the dominating PGI, each user quantizes it and then feeds back to the BS. To be specific, the user $k$ concatenates $\mathbf{g}_{\Lambda_{1,k}},\cdots,\mathbf{g}_{\Lambda_{M,k}}$ into a single vector $\mathbf{g}_{\Lambda_{k}}=\left[\mathbf{g}_{\Lambda_{1,k}}^{\textrm{T}},\cdots,\mathbf{g}_{\Lambda_{M,k}}^{\textrm{T}}\right]^{\textrm{T}}\in\mathbb{C}^{L}$ and then quantizes $\mathbf{g}_{\Lambda_{k}}$ into a codeword index $\hat{i}_{k}$ as
\begin{align}\label{4.1.9}
\hat{i}_{k}=\text{arg}\,\underset{i}{\text{max}}\left\lvert\bar{\mathbf{g}}_{\Lambda_{k}}^{\textrm{H}}\mathbf{c}_{i}\right\rvert^{2}
\end{align}
where $\bar{\mathbf{g}}_{\Lambda_{k}}=\mathbf{g}_{\Lambda_{m,k}}/\left\lVert\mathbf{g}_{\Lambda_{m,k}}\right\rVert$ and $\mathbf{c}_{i}$ is the codeword. For example, one can use the random vector quantization (RVQ) codebook~\cite{jindal2006mimo}. After receiving $\hat{i}_{k}$, DU reconstructs the original dominating PGI as $\hat{\mathbf{g}}_{\Lambda_{k}}=\left\lVert\mathbf{g}_{\Lambda_{k}}\right\rVert\mathbf{c}_{\hat{i}_{k}}$ where $\left\lVert\mathbf{g}_{\Lambda_{k}}\right\rVert$ is the channel magnitude feedback of user.
\\

\section{Performance Analysis of the Proposed Dominating Path Gain Information Feedback}
In this section, we study the performance of the proposed dominating PGI feedback scheme. We first analyze the distortion induced from the quantization of dominating PGI vector $\mathbf{g}_{\Lambda_{k}}$ and then analyze the rate gap between the ideal system with perfect PGI and the realistic system with finite rate PGI feedback. Finally, we compute the number of feedback bits required to maintain a constant rate gap with the ideal system. 

\subsection{Quantization Distortion Analysis}
The quantization distortion $D_{k}$ of the user $k$ is defined as 
\begin{align}\label{5.1.1}
D_{k}=\mathbb{E}\left[\left\lvert\sum_{m=1}^{M}\mathbf{h}_{m,k}^{\textrm{H}}\mathbf{w}_{m,k}^{(\text{ideal})}\right\rvert^{2}-\left\lvert\sum_{m=1}^{M}\mathbf{h}_{m,k}^{\textrm{H}}\mathbf{w}_{m,k}\right\rvert^{2}\right]
\end{align}
where $\mathbf{w}_{m,k}^{(\text{ideal})}$ is the precoding vector constructed from the perfect PGI. By plugging \eqref{2.1.3} and \eqref{3.2.1} into \eqref{5.1.1}, we get
\begin{align}
D_{k}&=\mathbb{E}\left[\left\lvert\sum_{m=1}^{M}\mathbf{g}_{m,k}^{\textrm{H}}\mathbf{A}_{m,k}^{\textrm{H}}\mathbf{V}_{\Lambda_{m,k}}\mathbf{g}_{\Lambda_{m,k}}\right\rvert^{2}-\left\lvert\sum_{m=1}^{M}\mathbf{g}_{m,k}^{\textrm{H}}\mathbf{A}_{m,k}^{\textrm{H}}\mathbf{V}_{\Lambda_{m,k}}\hat{\mathbf{g}}_{\Lambda_{m,k}}\right\rvert^{2}\right]\nonumber\\
&\stackrel{(a)}{=}\mathbb{E}\left[\left\lvert\sum_{m=1}^{M}\mathbf{g}_{\Lambda_{m,k}}^{\textrm{H}}\mathbf{A}_{\Lambda_{m,k}}^{\textrm{H}}\mathbf{V}_{\Lambda_{m,k}}\mathbf{g}_{\Lambda_{m,k}}\right\rvert^{2}-\left\lvert\sum_{m=1}^{M}\mathbf{g}_{\Lambda_{m,k}}^{\textrm{H}}\mathbf{A}_{\Lambda_{m,k}}^{\textrm{H}}\mathbf{V}_{\Lambda_{m,k}}\hat{\mathbf{g}}_{\Lambda_{m,k}}\right\rvert^{2}\right]\label{5.1.2.2}
\end{align}
where $(a)$ is due to the fact that $\mathbf{A}_{m,k}\mathbf{g}_{m,k}=\mathbf{A}_{\Lambda_{m,k}}\mathbf{g}_{\Lambda_{m,k}}+\mathbf{A}_{\Lambda_{m,k}^{\mathsf{C}}}\mathbf{g}_{\Lambda_{m,k}^{\mathsf{C}}}$ and $\hat{\mathbf{g}}_{\Lambda_{m,k}}$ is independent with $\mathbf{g}_{\Lambda_{m,k}^{\mathsf{C}}}$. Based on \eqref{5.1.2.2}, the normalized quantization distortion $\bar{D}_{k}$ is given by
\begin{align}\label{5.1.3}
\bar{D}_{k}=\frac{\mathbb{E}\left[\left\lvert\sum_{m=1}^{M}\mathbf{g}_{\Lambda_{m,k}}^{\textrm{H}}\mathbf{A}_{\Lambda_{m,k}}^{\textrm{H}}\mathbf{V}_{\Lambda_{m,k}}\mathbf{g}_{\Lambda_{m,k}}\right\rvert^{2}-\left\lvert\sum_{m=1}^{M}\mathbf{g}_{\Lambda_{m,k}}^{\textrm{H}}\mathbf{A}_{\Lambda_{m,k}}^{\textrm{H}}\mathbf{V}_{\Lambda_{m,k}}\hat{\mathbf{g}}_{\Lambda_{m,k}}\right\rvert^{2}\right]}{\mathbb{E}\left[\left\lvert\sum_{m=1}^{M}\mathbf{g}_{\Lambda_{m,k}}^{\textrm{H}}\mathbf{A}_{\Lambda_{m,k}}^{\textrm{H}}\mathbf{V}_{\Lambda_{m,k}}\mathbf{g}_{\Lambda_{m,k}}\right\rvert^{2}\right]}
\end{align}
In the following proposition, we provide an upper bound of $\bar{D}_{k}$. 
\begin{prop}
The normalized quantization distortion $\bar{D}_{k}$ of the user $k$ is upper bounded
\begin{align}\label{5.1.4}
\bar{D}_{k} \leq \frac{L-\delta_{k}}{(L-1)(1+\delta_{k})}2^{-\frac{B}{L-1}},
\end{align}
where $\delta_{k}=\frac{\sum_{m=1}^{M}\lVert\mathbf{A}_{\Lambda_{m,k}}^{\textup{\textrm{H}}}\mathbf{V}_{\Lambda_{m,k}}\rVert_{\textup{F}}^{2}}{\lvert\sum_{m=1}^{M}\textup{tr}(\mathbf{A}_{\Lambda_{m,k}}^{\textup{\textrm{H}}}\mathbf{V}_{\Lambda_{m,k}})\rvert^{2}}$. Furthermore, $\bar{D}_{k}$ is generally upper bounded as $\bar{D}_{k}\leq 2^{-\frac{B}{L-1}}$.
\end{prop}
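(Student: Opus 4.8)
The plan is to collapse the per-BS sums in \eqref{5.1.3} into single quadratic forms and then invoke the standard random vector quantization (RVQ) distortion machinery. First I would assemble the per-BS matrices $\mathbf{M}_{m,k}=\mathbf{A}_{\Lambda_{m,k}}^{\textrm{H}}\mathbf{V}_{\Lambda_{m,k}}\in\mathbb{C}^{\lvert\Lambda_{m,k}\rvert\times\lvert\Lambda_{m,k}\rvert}$ into the block-diagonal matrix $\mathbf{M}_{k}=\textup{diag}(\mathbf{M}_{1,k},\cdots,\mathbf{M}_{M,k})\in\mathbb{C}^{L\times L}$, so that $\sum_{m}\mathbf{g}_{\Lambda_{m,k}}^{\textrm{H}}\mathbf{A}_{\Lambda_{m,k}}^{\textrm{H}}\mathbf{V}_{\Lambda_{m,k}}\mathbf{x}_{\Lambda_{m,k}}=\mathbf{g}_{\Lambda_{k}}^{\textrm{H}}\mathbf{M}_{k}\mathbf{x}_{\Lambda_{k}}$ for $\mathbf{x}_{\Lambda_{k}}\in\{\mathbf{g}_{\Lambda_{k}},\hat{\mathbf{g}}_{\Lambda_{k}}\}$. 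Since $\textup{tr}(\mathbf{M}_{k})=\sum_{m}\textup{tr}(\mathbf{M}_{m,k})$ and $\lVert\mathbf{M}_{k}\rVert_{\textup{F}}^{2}=\sum_{m}\lVert\mathbf{M}_{m,k}\rVert_{\textup{F}}^{2}$, the parameter reduces to $\delta_{k}=\lVert\mathbf{M}_{k}\rVert_{\textup{F}}^{2}/\lvert\textup{tr}(\mathbf{M}_{k})\rvert^{2}$. Because the i.i.d.\ assumption gives $\mathbf{g}_{\Lambda_{k}}\sim\mathcal{CN}(\mathbf{0},\mathbf{I}_{L})$, the denominator of \eqref{5.1.3} follows from the complex-Gaussian fourth-moment (Wick/Isserlis) identity $\mathbb{E}[\lvert\mathbf{g}_{\Lambda_{k}}^{\textrm{H}}\mathbf{M}_{k}\mathbf{g}_{\Lambda_{k}}\rvert^{2}]=\lvert\textup{tr}(\mathbf{M}_{k})\rvert^{2}+\lVert\mathbf{M}_{k}\rVert_{\textup{F}}^{2}=\lvert\textup{tr}(\mathbf{M}_{k})\rvert^{2}(1+\delta_{k})$.

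Next I would treat the numerator. Writing $\mathbf{g}_{\Lambda_{k}}=\lVert\mathbf{g}_{\Lambda_{k}}\rVert\bar{\mathbf{g}}_{\Lambda_{k}}$ and $\hat{\mathbf{g}}_{\Lambda_{k}}=\lVert\mathbf{g}_{\Lambda_{k}}\rVert\mathbf{c}_{\hat{i}_{k}}$, the numerator factors as $\mathbb{E}[\lVert\mathbf{g}_{\Lambda_{k}}\rVert^{4}]\,\mathbb{E}[\lvert\bar{\mathbf{g}}_{\Lambda_{k}}^{\textrm{H}}\mathbf{M}_{k}\bar{\mathbf{g}}_{\Lambda_{k}}\rvert^{2}-\lvert\bar{\mathbf{g}}_{\Lambda_{k}}^{\textrm{H}}\mathbf{M}_{k}\mathbf{c}_{\hat{i}_{k}}\rvert^{2}]$, using that under RVQ the magnitude $\lVert\mathbf{g}_{\Lambda_{k}}\rVert$ is independent of the direction $\bar{\mathbf{g}}_{\Lambda_{k}}$ and of the codebook. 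I then decompose the selected codeword as $\mathbf{c}_{\hat{i}_{k}}=Z\bar{\mathbf{g}}_{\Lambda_{k}}+\sin\theta_{k}\,\mathbf{s}$, where $Z=\bar{\mathbf{g}}_{\Lambda_{k}}^{\textrm{H}}\mathbf{c}_{\hat{i}_{k}}$ with $\lvert Z\rvert=\cos\theta_{k}$ and $\mathbf{s}$ is a unit vector orthogonal to $\bar{\mathbf{g}}_{\Lambda_{k}}$. Expanding $\lvert\bar{\mathbf{g}}_{\Lambda_{k}}^{\textrm{H}}\mathbf{M}_{k}\mathbf{c}_{\hat{i}_{k}}\rvert^{2}$ produces a diagonal part and a cross term. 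The crux is that, conditioned on $\bar{\mathbf{g}}_{\Lambda_{k}}$ and $\theta_{k}$, the error direction $\mathbf{s}$ is isotropically distributed on the unit sphere of the $(L-1)$-dimensional orthogonal complement of $\bar{\mathbf{g}}_{\Lambda_{k}}$: applying the phase rotation $\mathbf{s}\mapsto e^{j\phi}\mathbf{s}$, which is a unitary fixing $\bar{\mathbf{g}}_{\Lambda_{k}}$ and under which the RVQ law and $Z$ are invariant, forces the cross term to vanish in expectation, while isotropy gives $\mathbb{E}[\mathbf{s}\mathbf{s}^{\textrm{H}}\mid\bar{\mathbf{g}}_{\Lambda_{k}},\theta_{k}]=\tfrac{1}{L-1}(\mathbf{I}_{L}-\bar{\mathbf{g}}_{\Lambda_{k}}\bar{\mathbf{g}}_{\Lambda_{k}}^{\textrm{H}})$. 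The surviving contribution therefore reduces to $\mathbb{E}[\sin^{2}\theta_{k}]\,\mathbb{E}_{\bar{\mathbf{g}}_{\Lambda_{k}}}[\tfrac{L}{L-1}\lvert\bar{\mathbf{g}}_{\Lambda_{k}}^{\textrm{H}}\mathbf{M}_{k}\bar{\mathbf{g}}_{\Lambda_{k}}\rvert^{2}-\tfrac{1}{L-1}\lVert\mathbf{M}_{k}^{\textrm{H}}\bar{\mathbf{g}}_{\Lambda_{k}}\rVert^{2}]$, where I have also used the independence of $\sin^{2}\theta_{k}$ from $\bar{\mathbf{g}}_{\Lambda_{k}}$.

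Assembling the pieces, I would use $\mathbb{E}[\lVert\mathbf{g}_{\Lambda_{k}}\rVert^{4}]=L(L+1)$ together with the uniform-sphere moments $\mathbb{E}[\bar{\mathbf{g}}_{\Lambda_{k}}\bar{\mathbf{g}}_{\Lambda_{k}}^{\textrm{H}}]=\tfrac{1}{L}\mathbf{I}_{L}$ and $\mathbb{E}[\lvert\bar{\mathbf{g}}_{\Lambda_{k}}^{\textrm{H}}\mathbf{M}_{k}\bar{\mathbf{g}}_{\Lambda_{k}}\rvert^{2}]=(\lvert\textup{tr}(\mathbf{M}_{k})\rvert^{2}+\lVert\mathbf{M}_{k}\rVert_{\textup{F}}^{2})/(L(L+1))$, the latter obtained by dividing the Gaussian fourth moment by $\mathbb{E}[\lVert\mathbf{g}_{\Lambda_{k}}\rVert^{4}]$. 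After the factor $L(L+1)$ cancels against the denominator, the algebra collapses to the exact identity $\bar{D}_{k}=\mathbb{E}[\sin^{2}\theta_{k}]\,\frac{L-\delta_{k}}{(L-1)(1+\delta_{k})}$. The first inequality then follows from the RVQ bound $\mathbb{E}[\sin^{2}\theta_{k}]\leq 2^{-B/(L-1)}$ for an $L$-dimensional channel~\cite{jindal2006mimo}. For the universal bound I would note that Cauchy--Schwarz gives $\lvert\textup{tr}(\mathbf{M}_{k})\rvert^{2}=\lvert\langle\mathbf{I}_{L},\mathbf{M}_{k}\rangle\rvert^{2}\leq L\,\lVert\mathbf{M}_{k}\rVert_{\textup{F}}^{2}$, i.e.\ $\delta_{k}\geq 1/L$, whence $\frac{L-\delta_{k}}{(L-1)(1+\delta_{k})}\leq 1$ and therefore $\bar{D}_{k}\leq 2^{-B/(L-1)}$.

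The main obstacle is the clean handling of the numerator: rigorously establishing that $\lVert\mathbf{g}_{\Lambda_{k}}\rVert$, the direction $\bar{\mathbf{g}}_{\Lambda_{k}}$, and the quantization angle $\theta_{k}$ decouple, and in particular using the rotational invariance of the RVQ codebook to argue both that the cross term averages to zero and that $\mathbf{s}$ is isotropic in the orthogonal complement. Once this decoupling is in place, the complex-Gaussian and uniform-sphere moment evaluations, though central to producing the precise constants $L$, $L-1$, and $1+\delta_{k}$, are routine.
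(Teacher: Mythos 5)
Your proposal is correct and follows essentially the same route as the paper: block-diagonalize the per-BS matrices into a single $L\times L$ quadratic form, separate $\lVert\mathbf{g}_{\Lambda_{k}}\rVert$ from $\bar{\mathbf{g}}_{\Lambda_{k}}$, apply the Jindal-style decomposition into a codeword-aligned component plus an isotropic residual with covariance $\tfrac{1}{L-1}(\mathbf{I}_{L}-(\cdot)(\cdot)^{\textrm{H}})$, evaluate the fourth moment of $\bar{\mathbf{g}}^{\textrm{H}}\mathbf{M}\bar{\mathbf{g}}$ on the unit sphere, and finish with $\delta_{k}\geq 1/L$. The only differences are cosmetic — you decompose $\mathbf{c}_{\hat{i}_{k}}$ relative to $\bar{\mathbf{g}}_{\Lambda_{k}}$ rather than the reverse, and you justify the vanishing cross term explicitly via phase-rotation invariance, a step the paper passes over silently — and both yield the identical exact identity $\bar{D}_{k}=\mathbb{E}[\sin^{2}\theta_{k}]\tfrac{L-\delta_{k}}{(L-1)(1+\delta_{k})}$.
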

\begin{proof}
In order to simplify the expression, we use the notation $\mathbf{A}_{\Lambda_{k}}=\text{diag}\left(\mathbf{A}_{\Lambda_{1,k}},\cdots,\mathbf{A}_{\Lambda_{M,k}}\right)$ and $\mathbf{V}_{\Lambda_{k}}=\text{diag}\left(\mathbf{V}_{\Lambda_{1,k}},\cdots,\mathbf{V}_{\Lambda_{M,k}}\right)$. Then, we have
\begin{align}
\bar{D}_{k}&=\frac{\mathbb{E}\left[\left\lvert\mathbf{g}_{\Lambda_{k}}^{\textrm{H}}\mathbf{A}_{\Lambda_{k}}^{\textrm{H}}\mathbf{V}_{\Lambda_{k}}\mathbf{g}_{\Lambda_{k}}\right\rvert^{2}-\left\lvert\mathbf{g}_{\Lambda_{k}}^{\textrm{H}}\mathbf{A}_{\Lambda_{k}}^{\textrm{H}}\mathbf{V}_{\Lambda_{k}}\hat{\mathbf{g}}_{\Lambda_{k}}\right\rvert^{2}\right]}{\mathbb{E}\left[\left\lvert\mathbf{g}_{\Lambda_{k}}^{\textrm{H}}\mathbf{A}_{\Lambda_{k}}^{\textrm{H}}\mathbf{V}_{\Lambda_{k}}\mathbf{g}_{\Lambda_{k}}\right\rvert^{2}\right]}\nonumber\\
&\stackrel{(a)}{=}1-\frac{\mathbb{E}\left[\left\lVert\mathbf{g}_{\Lambda_{k}}\right\rVert^{4}\left\lvert\bar{\mathbf{g}}_{\Lambda_{k}}^{\textrm{H}}\mathbf{A}_{\Lambda_{k}}^{\textrm{H}}\mathbf{V}_{\Lambda_{k}}\mathbf{c}_{\hat{i}_{k}}\right\rvert^{2}\right]}{\mathbb{E}\left[\left\lVert\mathbf{g}_{\Lambda_{k}}\right\rVert^{4}\left\lvert\bar{\mathbf{g}}_{\Lambda_{k}}^{\textrm{H}}\mathbf{A}_{\Lambda_{k}}^{\textrm{H}}\mathbf{V}_{\Lambda_{k}}\bar{\mathbf{g}}_{\Lambda_{k}}\right\rvert^{2}\right]}\nonumber\\
&=1-\frac{\mathbb{E}\left[\left\lvert\bar{\mathbf{g}}_{\Lambda_{k}}^{\textrm{H}}\mathbf{A}_{\Lambda_{k}}^{\textrm{H}}\mathbf{V}_{\Lambda_{k}}\mathbf{c}_{\hat{i}_{k}}\right\rvert^{2}\right]}{\mathbb{E}\left[\left\lvert\bar{\mathbf{g}}_{\Lambda_{k}}^{\textrm{H}}\mathbf{A}_{\Lambda_{k}}^{\textrm{H}}\mathbf{V}_{\Lambda_{k}}\bar{\mathbf{g}}_{\Lambda_{k}}\right\rvert^{2}\right]}\label{5.1.5.3}
\end{align}
where $(a)$ is due to the independence of the vector norm $\left\lVert\mathbf{g}_{\Lambda_{k}}\right\rVert$ and the vector direction $\bar{\mathbf{g}}_{\Lambda_{k}}$.

Now, we compute the closed-form expression of the nominator $\mathbb{E}\left[\left\lvert\bar{\mathbf{g}}_{\Lambda_{k}}^{\textrm{H}}\mathbf{A}_{\Lambda_{k}}^{\textrm{H}}\mathbf{V}_{\Lambda_{k}}\mathbf{c}_{\hat{i}_{k}}\right\rvert^{2}\right]$ and the denominator $\mathbb{E}\left[\left\lvert\bar{\mathbf{g}}_{\Lambda_{k}}^{\textrm{H}}\mathbf{A}_{\Lambda_{k}}^{\textrm{H}}\mathbf{V}_{\Lambda_{k}}\bar{\mathbf{g}}_{\Lambda_{k}}\right\rvert^{2}\right]$ in \eqref{5.1.5.3}. When the $B$-bit RVQ codebook $\mathcal{C}_{k}=\lbrace \mathbf{c}_{1},\cdots,\mathbf{c}_{2^{B}}\rbrace$ is used, the correlation $\left\lvert\bar{\mathbf{g}}_{\Lambda_{k}}^{\textrm{H}}\mathbf{c}_{\hat{i}_{k}}\right\rvert^{2}$ between the dominating PGI direction $\bar{\mathbf{g}}_{\Lambda_{k}}$ and the chosen codeword $\mathbf{c}_{\hat{i}_{k}}$ is $\beta$-distributed random variable with parameters $1$ and $L-1$~\cite{jindal2006mimo}. That is
\begin{align}\label{5.1.6}
1-\mathbb{E}\left[\left\lvert\bar{\mathbf{g}}_{\Lambda_{k}}^{\textrm{H}}\mathbf{c}_{\hat{i}_{k}}\right\rvert^{2}\right]=2^{B}\beta\left(2^{B},\frac{L}{L-1}\right)\leq 2^{-\frac{B}{L-1}}
\end{align}
Unfortunately, we cannot directly use this result since $\mathbf{A}_{\Lambda_{k}}^{\textrm{H}}\mathbf{V}_{\Lambda_{k}}$ is inserted in the middle of $\mathbb{E}\left[\left\lvert\bar{\mathbf{g}}_{\Lambda_{k}}^{\textrm{H}}\mathbf{A}_{\Lambda_{k}}^{\textrm{H}}\mathbf{V}_{\Lambda_{k}}\mathbf{c}_{\hat{i}_{k}}\right\rvert^{2}\right]$. To handle this, we exploit the property that the dominating PGI direction $\bar{\mathbf{g}}_{\Lambda_{k}}$ can be written as a sum of two vectors: one in the direction of the chosen codeword $\mathbf{c}_{\hat{i}_{k}}$ and the other isotropically distributed in the null space of $\mathbf{c}_{\hat{i}_{k}}$~\cite{jindal2006mimo}:
\begin{align}\label{5.1.7}
\bar{\mathbf{g}}_{\Lambda_{k}}=\sqrt{Z}\mathbf{c}_{\hat{i}_{k}}+\sqrt{1-Z}\mathbf{s}
\end{align}
where $\mathbf{s}$ is a unit norm vector isotropically distributed in the null space of $\mathbf{c}_{\hat{i}_{k}}$ and $Z$ is $\beta$-distributed according to $\left\lvert\bar{\mathbf{g}}_{\Lambda_{k}}^{\textrm{H}}\mathbf{c}_{\hat{i}_{k}}\right\rvert^{2}$. Also, $\mathbf{s}$ and $Z$ are independent. Using \eqref{5.1.7}, we obtain
\begin{align}
\mathbb{E}\left[\left\lvert\bar{\mathbf{g}}_{\Lambda_{k}}^{\textrm{H}}\mathbf{A}_{\Lambda_{k}}^{\textrm{H}}\mathbf{V}_{\Lambda_{k}}\mathbf{c}_{\hat{i}_{k}}\right\rvert^{2}\right]&=\mathbb{E}\left[\mathbf{c}_{\hat{i}_{k}}^{\textrm{H}}\mathbf{V}_{\Lambda_{k}}^{\textrm{H}}\mathbf{A}_{\Lambda_{k}}\left(Z\mathbf{c}_{\hat{i}_{k}}\mathbf{c}_{\hat{i}_{k}}^{\textrm{H}}+\left(1-Z\right)\mathbf{s}\mathbf{s}^{\textrm{H}}\right)\mathbf{A}_{\Lambda_{k}}^{\textrm{H}}\mathbf{V}_{\Lambda_{k}}\mathbf{c}_{\hat{i}_{k}}\right]\label{5.1.8.1}\\
&=\gamma\mathbb{E}\left[\left\lvert\mathbf{c}_{\hat{i}_{k}}^{\textrm{H}}\mathbf{A}_{\Lambda_{k}}^{\textrm{H}}\mathbf{V}_{\Lambda_{k}}\mathbf{c}_{\hat{i}_{k}}\right\rvert^{2}\right]+\left(1-\gamma\right)\mathbb{E}\left[\left\lvert\mathbf{s}^{\textrm{H}}\mathbf{A}_{\Lambda_{k}}^{\textrm{H}}\mathbf{V}_{\Lambda_{k}}\mathbf{c}_{\hat{i}_{k}}\right\rvert^{2}\right]\label{5.1.8.2}
\end{align}
where $\gamma=\mathbb{E}\left[Z\right]=\mathbb{E}\left[\left\lvert\bar{\mathbf{g}}_{\Lambda_{k}}^{\textrm{H}}\mathbf{c}_{\hat{i}_{k}}\right\rvert^{2}\right]=1-2^{B}\beta\left(2^{B},\frac{L}{L-1}\right)$ in \eqref{5.1.6}. Using Lemma 2 (see Appendix A), we obtain the closed-form expression of the first term in \eqref{5.1.8.2} as
\begin{align}\label{5.1.9}
\mathbb{E}\left[\left\lvert\mathbf{c}_{\hat{i}_{k}}^{\textrm{H}}\mathbf{A}_{\Lambda_{k}}^{\textrm{H}}\mathbf{V}_{\Lambda_{k}}\mathbf{c}_{\hat{i}_{k}}\right\rvert^{2}\right]=\frac{1}{L(L+1)}\left(\left\lvert\text{tr}\left(\mathbf{A}_{\Lambda_{k}}^{\textrm{H}}\mathbf{V}_{\Lambda_{k}}\right)\right\rvert^{2}+\left\lVert\mathbf{A}_{\Lambda_{k}}^{\textrm{H}}\mathbf{V}_{\Lambda_{k}}\right\rVert_{\text{F}}^{2}\right)
\end{align}
Whereas, since $\mathbf{s}$ is in the null space of $\mathbf{c}_{\hat{i}_{k}}$, $\mathbf{s}$ and $\mathbf{c}_{\hat{i}_{k}}$ are correlated, and thus it is not easy to obtain the closed-form expression of the second term in \eqref{5.1.8.2}. As a remedy, we use the law of total expectation, that is
\begin{align}
\mathbb{E}_{\mathbf{s},\mathbf{c}_{\hat{i}_{k}}}\left[\left\lvert\mathbf{s}^{\textrm{H}}\mathbf{A}_{\Lambda_{k}}^{\textrm{H}}\mathbf{V}_{\Lambda_{k}}\mathbf{c}_{\hat{i}_{k}}\right\rvert^{2}\right]&=\mathbb{E}_{\mathbf{c}_{\hat{i}_{k}}}\left[\mathbb{E}_{\mathbf{s}}\left[\left\lvert\mathbf{s}^{\textrm{H}}\mathbf{A}_{\Lambda_{k}}^{\textrm{H}}\mathbf{V}_{\Lambda_{k}}\mathbf{c}_{\hat{i}_{k}}\right\rvert^{2}\mid\mathbf{c}_{\hat{i}_{k}}\right]\right]\label{5.1.10.1}\\
&=\mathbb{E}_{\mathbf{c}_{\hat{i}_{k}}}\left[\mathbf{c}_{\hat{i}_{k}}^{\textrm{H}}\mathbf{V}_{\Lambda_{k}}^{\textrm{H}}\mathbf{A}_{\Lambda_{k}}\mathbb{E}_{\mathbf{s}}\left[\mathbf{s}\mathbf{s}^{\textrm{H}}\mid \mathbf{c}_{\hat{i}_{k}}\right]\mathbf{A}_{\Lambda_{k}}\mathbf{V}_{\Lambda_{k}}^{\textrm{H}}\mathbf{c}_{\hat{i}_{k}}\right]\label{5.1.10.2}
\end{align}
In the following lemma, we provide the conditional covariance of $\mathbf{s}$ for a given $\mathbf{c}_{\hat{i}_{k}}$.
\begin{lemma}
The conditional covariance of $\mathbf{s}$ for a given $\mathbf{c}_{\hat{i}_{k}}$ is
\begin{align}\label{5.1.11}
\mathbb{E}_{\mathbf{s}}\left[\mathbf{s}\mathbf{s}^{\textrm{H}}\mid \mathbf{c}_{\hat{i}_{k}}\right]=\frac{1}{L-1}\left(\mathbf{I}_{L}-\mathbf{c}_{\hat{i}_{k}}\mathbf{c}_{\hat{i}_{k}}^{\textrm{H}}\right)
\end{align} 
\end{lemma}
\begin{proof}
See Appendix B.
\end{proof}
\noindent By plugging \eqref{5.1.11} into the second term of \eqref{5.1.8.2}, we obtain
\begin{align}
\mathbb{E}\left[\left\lvert\mathbf{s}^{\textrm{H}}\mathbf{A}_{\Lambda_{k}}^{\textrm{H}}\mathbf{V}_{\Lambda_{k}}\mathbf{c}_{\hat{i}_{k}}\right\rvert^{2}\right]&=\frac{1}{L-1}\mathbb{E}_{\mathbf{c}_{\hat{i}_{k}}}\left[\mathbf{c}_{\hat{i}_{k}}^{\textrm{H}}\mathbf{V}_{\Lambda_{k}}^{\textrm{H}}\mathbf{A}_{\Lambda_{k}}\left(\mathbf{I}_{L}-\mathbf{c}_{\hat{i}_{k}}\mathbf{c}_{\hat{i}_{k}}^{\textrm{H}}\right)\mathbf{A}_{\Lambda_{k}}\mathbf{V}_{\Lambda_{k}}^{\textrm{H}}\mathbf{c}_{\hat{i}_{k}}\right]\nonumber\\
&=\frac{1}{L-1}\left(\mathbb{E}_{\mathbf{c}_{\hat{i}_{k}}}\left[\left\lvert\mathbf{A}_{\Lambda_{k}}^{\textrm{H}}\mathbf{V}_{\Lambda_{k}}\mathbf{c}_{\hat{i}_{k}}\right\rvert^{2}\right]-\mathbb{E}_{\mathbf{c}_{\hat{i}_{k}}}\left[\left\lvert\mathbf{c}_{\hat{i}_{k}}^{\textrm{H}}\mathbf{A}_{\Lambda_{k}}^{\textrm{H}}\mathbf{V}_{\Lambda_{k}}\mathbf{c}_{\hat{i}_{k}}\right\rvert^{2}\right]\right)\nonumber\\
&=\frac{1}{L-1}\left(\frac{1}{L}\left\lVert\mathbf{A}_{\Lambda_{k}}^{\textrm{H}}\mathbf{V}_{\Lambda_{k}}\right\rVert_{\text{F}}^{2}-\frac{1}{L(L+1)}\left(\left\lvert\text{tr}\left(\mathbf{A}_{\Lambda_{k}}^{\textrm{H}}\mathbf{V}_{\Lambda_{k}}\right)\right\rvert^{2}+\left\lVert\mathbf{A}_{\Lambda_{k}}^{\textrm{H}}\mathbf{V}_{\Lambda_{k}}\right\rVert_{\text{F}}^{2}\right)\right)\nonumber\\
&=\frac{1}{L^{2}-1}\left(\left\lVert\mathbf{A}_{\Lambda_{k}}^{\textrm{H}}\mathbf{V}_{\Lambda_{k}}\right\rVert_{\text{F}}^{2}-\frac{1}{L}\left\lvert\text{tr}\left(\mathbf{A}_{\Lambda_{k}}^{\textrm{H}}\mathbf{V}_{\Lambda_{k}}\right)\right\rvert^{2}\right)\label{5.1.12.3}
\end{align}
Finally, by plugging \eqref{5.1.9} and \eqref{5.1.12.3} into \eqref{5.1.8.2}, we get
\begin{align}\label{5.1.13}
\mathbb{E}\left[\left\lvert\bar{\mathbf{g}}_{\Lambda_{k}}^{\textrm{H}}\mathbf{A}_{\Lambda_{k}}^{\textrm{H}}\mathbf{V}_{\Lambda_{k}}\mathbf{c}_{\hat{i}_{k}}\right\rvert^{2}\right]=&\gamma\frac{1}{L(L+1)}\left(\left\lvert\text{tr}\left(\mathbf{A}_{\Lambda_{k}}^{\textrm{H}}\mathbf{V}_{\Lambda_{k}}\right)\right\rvert^{2}+\left\lVert\mathbf{A}_{\Lambda_{k}}^{\textrm{H}}\mathbf{V}_{\Lambda_{k}}\right\rVert_{\text{F}}^{2}\right)\nonumber\\
&+\left(1-\gamma\right)\frac{1}{L^{2}-1}\left(\left\lVert\mathbf{A}_{\Lambda_{k}}^{\textrm{H}}\mathbf{V}_{\Lambda_{k}}\right\rVert_{\textup{F}}^{2}-\frac{1}{L}\left\lvert\textup{tr}\left(\mathbf{A}_{\Lambda_{k}}^{\textrm{H}}\mathbf{V}_{\Lambda_{k}}\right)\right\rvert^{2}\right)
\end{align}

Next, we consider $\mathbb{E}\left[\left\lvert\bar{\mathbf{g}}_{\Lambda_{k}}^{\textrm{H}}\mathbf{A}_{\Lambda_{k}}^{\textrm{H}}\mathbf{V}_{\Lambda_{k}}\bar{\mathbf{g}}_{\Lambda_{k}}\right\rvert^{2}\right]$ in \eqref{5.1.5.3}. Since both $\bar{\mathbf{g}}_{\Lambda_{k}}$ and $\mathbf{c}_{\hat{i}_{k}}$ distributed uniformly on the surface of a $L$-dimensional unit sphere, the closed-form expression of $\mathbb{E}\left[\left\lvert\bar{\mathbf{g}}_{\Lambda_{k}}^{\textrm{H}}\mathbf{A}_{\Lambda_{k}}^{\textrm{H}}\mathbf{V}_{\Lambda_{k}}\bar{\mathbf{g}}_{\Lambda_{k}}\right\rvert^{2}\right]$ can be obtained in the same way to \eqref{5.1.9}. Finally, the closed-form expression and the upper bound of $\bar{D}_{k}$ is 
\begin{align}
\bar{D}_{k}&=1-\frac{\frac{\gamma}{L(L+1)}\left(\left\lvert\text{tr}\left(\mathbf{A}_{\Lambda_{k}}^{\textrm{H}}\mathbf{V}_{\Lambda_{k}}\right)\right\rvert^{2}+\left\lVert\mathbf{A}_{\Lambda_{k}}^{\textrm{H}}\mathbf{V}_{\Lambda_{k}}\right\rVert_{\text{F}}^{2}\right)+\frac{1-\gamma}{L^{2}-1}\left(\left\lVert\mathbf{A}_{\Lambda_{k}}^{\textrm{H}}\mathbf{V}_{\Lambda_{k}}\right\rVert_{\textup{F}}^{2}-\frac{1}{L}\left\lvert\textup{tr}\left(\mathbf{A}_{\Lambda_{k}}^{\textrm{H}}\mathbf{V}_{\Lambda_{k}}\right)\right\rvert^{2}\right)}{\frac{1}{L(L+1)}\left(\left\lvert\text{tr}\left(\mathbf{A}_{\Lambda_{k}}^{\textrm{H}}\mathbf{V}_{\Lambda_{k}}\right)\right\rvert^{2}+\left\lVert\mathbf{A}_{\Lambda_{k}}^{\textrm{H}}\mathbf{V}_{\Lambda_{k}}\right\rVert_{\text{F}}^{2}\right)}\nonumber\\
&=\left(1-\gamma\right)\frac{1}{L-1}\frac{L\left\lvert\text{tr}\left(\mathbf{A}_{\Lambda_{k}}^{\textrm{H}}\mathbf{V}_{\Lambda_{k}}\right)\right\rvert^{2}-\left\lVert\mathbf{A}_{\Lambda_{k}}^{\textrm{H}}\mathbf{V}_{\Lambda_{k}}\right\rVert_{\text{F}}^{2}}{\left\lvert\text{tr}\left(\mathbf{A}_{\Lambda_{k}}^{\textrm{H}}\mathbf{V}_{\Lambda_{k}}\right)\right\rvert^{2}+\left\lVert\mathbf{A}_{\Lambda_{k}}^{\textrm{H}}\mathbf{V}_{\Lambda_{k}}\right\rVert_{\text{F}}^{2}}\nonumber\\
&=\left(1-\gamma\right)\frac{L-\delta_{k}}{(L-1)(1+\delta_{k})}\nonumber\\
&\stackrel{(a)}{\leq} 2^{-\frac{B}{L-1}}\frac{L-\delta_{k}}{(L-1)(1+\delta_{k})}\label{5.1.14.3}
\end{align}
where $\delta_{k}=\frac{\sum_{m=1}^{M}\lVert\mathbf{A}_{\Lambda_{m,k}}^{\textrm{H}}\mathbf{V}_{\Lambda_{m,k}}\rVert_{\textup{F}}^{2}}{\lvert\sum_{m=1}^{M}\textup{tr}(\mathbf{A}_{\Lambda_{m,k}}^{\textrm{H}}\mathbf{V}_{\Lambda_{m,k}})\rvert^{2}}$ and $(a)$ is due to \eqref{5.1.6}. By using that $\frac{1}{L}\leq \frac{\left\lVert\mathbf{C}\right\rVert_{\text{F}}^{2}}{\left\lvert\text{tr}\left(\mathbf{C}\right)\right\rvert^{2}}$, we can obtain a simple upper bound of $\bar{D}_{k}$ as
\begin{align}\label{5.1.15}
\bar{D}_{k}\leq \frac{L-\frac{1}{L}}{(L-1)(1+\frac{1}{L})}2^{-\frac{B}{L-1}}=2^{-\frac{B}{L-1}}
\end{align}
\end{proof}
Since $\frac{1}{L}\leq\delta_{k}$, we can observe that $\bar{D}_{k}$ is smaller than the normalized quantization distortion of the conventional $L$-dimensional vector quantization, that is $1-\gamma$ in \eqref{5.1.6}. It is worth mentioning that $\bar{D}_{k}$ is a function of the number of dominating paths $L$, not the number of transmit antennas $N$. In Fig. 6, we plot the normalized quantization distortion $\bar{D}_{k}$ as a function of the number of dominating paths $L$. We plot the numerical evaluation of $\bar{D}_{k}$, the upper bound in \eqref{5.1.14.3}, the simplified upper bound in \eqref{5.1.15}, and the conventional $L$-dimensional vector quantization using RVQ codebook in \eqref{5.1.6}. One can observe that the numerical evaluation is close to the derived upper bound. One can also observe that the quantization distortion of the proposed scheme is much smaller than that of the conventional vector quantization.

\begin{figure}[h]
\centering
\includegraphics[scale=0.68]{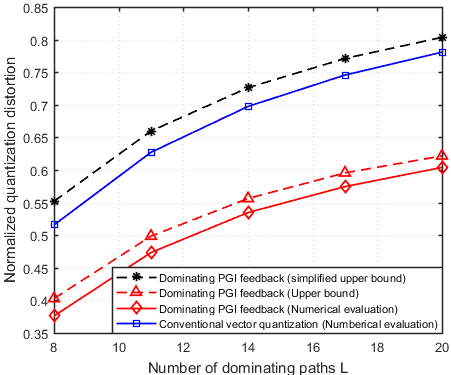}
\caption{Normalized quantization distortion as a function of the number of dominating paths $L$ ($M=5$, $N=8$, $P=4$, $B=6$, $\text{SNR}=15\,\text{dB}$)}
\vspace{-1em}
\end{figure}

\subsection{Rate Gap Analysis of the Dominating PGI Feedback}
In this subsection, we analyze the per user rate gap of the dominating PGI feedback scheme between the ideal feedback system and the finite rate feedback system. 

\begin{theorem}
The per user rate gap $\Delta R_{k}$ between the ideal system using the perfect PGI and the realistic system using the finite rate feedback of the user $k$ is upper bounded as
\begin{align}\label{5.2.1}
\Delta R_{k}\leq \log_{2}\left(1+\frac{\textup{SNR}}{1+\textup{SNR}}\frac{L-\delta_{k}}{(L-1)(1+\delta_{k})-2^{-\frac{B}{L-1}}\left(L-\delta_{k}\right)}2^{-\frac{B}{L-1}}\right)
\end{align}
where $\textup{SNR}$ is the signal-to-noise-ratio.
\end{theorem}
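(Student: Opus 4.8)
The plan is to turn the signal-power loss quantified in Proposition~1 into a bound on the logarithmic rate gap, keeping the interference-plus-noise floor common to both systems. First I would write the gap as a single logarithm,
\[
\Delta R_{k}=R_{k}^{(\text{ideal})}-R_{k}^{(\text{FB})}=\log_{2}\!\left(\frac{1+\text{SINR}_{k}^{(\text{ideal})}}{1+\text{SINR}_{k}^{(\text{FB})}}\right),
\]
where both SINRs are read off from the rate expression \eqref{3.2.2} of Theorem~1: the numerators are the desired signal powers $S_{k}^{(\text{ideal})}$ and $S_{k}^{(\text{FB})}$, and the denominators are the interference-plus-noise terms. The crucial structural fact is that these denominators coincide, so that $\text{SINR}_{k}^{(\text{FB})}=\text{SINR}_{k}^{(\text{ideal})}\,S_{k}^{(\text{FB})}/S_{k}^{(\text{ideal})}$, and by the very definition of the normalized quantization distortion \eqref{5.1.3} we have $S_{k}^{(\text{FB})}=S_{k}^{(\text{ideal})}(1-\bar{D}_{k})$. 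Hence $\text{SINR}_{k}^{(\text{FB})}=\text{SINR}_{k}^{(\text{ideal})}(1-\bar{D}_{k})$.

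Substituting this relation and abbreviating $\Gamma=\text{SINR}_{k}^{(\text{ideal})}$ gives
\[
\Delta R_{k}=\log_{2}\!\left(1+\frac{\Gamma\,\bar{D}_{k}}{1+\Gamma(1-\bar{D}_{k})}\right).
\]
Dividing the numerator and denominator of the inner fraction by $1+\Gamma$ exposes the factor $\rho:=\Gamma/(1+\Gamma)$ and yields $\Delta R_{k}=\log_{2}\!\left(1+\tfrac{\rho\,\bar{D}_{k}}{1-\rho\,\bar{D}_{k}}\right)$. I would then discharge three monotone bounds in sequence. Since $x\mapsto x/(1+x)$ is increasing and interference only enlarges the SINR denominator, $\Gamma\le\text{SNR}$ gives $\rho\le\text{SNR}/(1+\text{SNR})$; since $x\mapsto \rho x/(1-\rho x)$ is increasing in $\bar{D}_{k}$, the Proposition~1 bound $\bar{D}_{k}\le\eta_{k}:=\tfrac{L-\delta_{k}}{(L-1)(1+\delta_{k})}2^{-B/(L-1)}$ in \eqref{5.1.4} lets me replace $\bar{D}_{k}$ by $\eta_{k}$; and finally $\rho\le 1$ loosens $1-\rho\eta_{k}\ge 1-\eta_{k}$ in the denominator. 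Collecting these three steps reproduces exactly the claimed bound \eqref{5.2.1}.

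The step I expect to be the main obstacle is justifying that the interference-plus-noise denominator is unchanged by quantization, i.e. that only the desired signal power degrades. This requires showing $\mathbb{E}\big[\big|\sum_{m}\mathbf{h}_{m,k}^{\textrm{H}}\mathbf{w}_{m,j}\big|^{2}\big]=\sum_{m}\lVert\mathbf{A}_{m,k}^{\textrm{H}}\mathbf{V}_{\Lambda_{m,j}}\rVert_{\text{F}}^{2}$ for $j\ne k$ in the finite-rate system just as in the ideal one. Two facts make this go through: the channel gains $\mathbf{g}_{m,k}$ of user $k$ are isotropic and independent of user $j$'s feedback, so the expectation over $\mathbf{g}_{m,k}$ collapses the interference to $\sum_{m}\lVert\mathbf{A}_{m,k}^{\textrm{H}}\mathbf{V}_{\Lambda_{m,j}}\hat{\mathbf{g}}_{\Lambda_{m,j}}\rVert^{2}$; and the quantized PGI vector satisfies $\mathbb{E}[\hat{\mathbf{g}}_{\Lambda_{k}}\hat{\mathbf{g}}_{\Lambda_{k}}^{\textrm{H}}]=\mathbf{I}_{L}$, because its norm is independent of its RVQ direction and the chosen codeword is isotropically distributed by the rotational invariance of the codebook. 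Once this covariance identity is established the interference terms match, and the remaining manipulations are routine algebra together with the two monotonicity arguments above.
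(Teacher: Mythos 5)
Your overall route is the same as the paper's: split the rate gap into a single logarithm over a common interference-plus-noise floor, convert the desired-signal degradation into the normalized distortion $\bar{D}_{k}$ of Proposition~1, and finish with monotonicity arguments. One step, however, is asserted as an identity but is not one, and it is worth fixing. You take $S_{k}$ to be the full numerator of the SINR read off from \eqref{3.2.2} and claim $S_{k}^{(\text{FB})}=S_{k}^{(\text{ideal})}\left(1-\bar{D}_{k}\right)$ ``by the very definition of \eqref{5.1.3}.'' But that numerator contains two pieces: the selected-path power $\text{DS}_{k}$ (built from $\mathbf{A}_{\Lambda_{m,k}}$) and the unselected-path power $\text{US}_{k}$ (built from $\mathbf{A}_{\Lambda_{m,k}^{\mathsf{C}}}$), and \eqref{5.1.3} normalizes only the former; quantization leaves $\text{US}_{k}$ untouched, so the correct relation is $\text{DS}_{k}=\text{DS}_{k}^{(\text{ideal})}\left(1-\bar{D}_{k}\right)$ while $\text{DS}_{k}+\text{US}_{k}\neq\left(\text{DS}_{k}^{(\text{ideal})}+\text{US}_{k}\right)\left(1-\bar{D}_{k}\right)$ whenever $\text{US}_{k}>0$. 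The paper avoids this by explicitly carrying the three-way decomposition $\text{DS}_{k}$, $\text{US}_{k}$, $\text{IS}_{k}$ and applying $\bar{D}_{k}$ only to $\text{DS}_{k}$. Your misstep happens to go in the favorable direction (you understate the finite-feedback SINR, hence overstate $\Delta R_{k}$), so your chain still yields a valid upper bound and the final expression \eqref{5.2.1} survives --- but the displayed ``exact'' formula $\Delta R_{k}=\log_{2}\left(1+\rho\bar{D}_{k}/\left(1-\rho\bar{D}_{k}\right)\right)$ should be an inequality, with a one-line justification of the direction.

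A secondary point: your bound $\rho\leq\textup{SNR}/\left(1+\textup{SNR}\right)$ uses $\Gamma\leq\textup{SNR}$ with SNR meaning ideal signal power over noise, whereas the paper defines $\textup{SNR}=\left(\text{DS}_{k}+\text{US}_{k}+\text{IS}_{k}\right)/\sigma_{n}^{2}$ and instead bounds $\text{DS}_{k}/\left(\text{DS}_{k}+\text{US}_{k}+\text{IS}_{k}\right)\leq 1$; with the paper's definition your inequality $\Gamma\leq\textup{SNR}$ can fail when $\bar{D}_{k}$ is large and interference is small. Since the theorem statement does not pin down SNR, this is a matter of stating your convention, but you should do so. Your closing argument that the interference term is quantization-invariant, via $\mathbb{E}\bigl[\hat{\mathbf{g}}_{\Lambda_{k}}\hat{\mathbf{g}}_{\Lambda_{k}}^{\textrm{H}}\bigr]=\mathbf{I}_{L}$ and independence across users, is correct and matches what the paper does implicitly through Appendix~A.
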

\begin{proof}
The achievable rate $R_{k}$ of the user $k$ in the realistic system with finite rate feedback is
\begin{align}
R_{k}&=\log_{2}\left(1+\frac{\mathbb{E}\left[\left\lvert\sum_{m=1}^{M}\mathbf{h}_{m,k}^{\textrm{H}}\mathbf{w}_{m,k}\right\rvert^{2}\right]}{\sum_{j\neq k}^{K}\mathbb{E}\left[\left\lvert\sum_{m=1}^{M}\mathbf{h}_{m,k}^{\textrm{H}}\mathbf{w}_{m,j}\right\rvert^{2}\right]+\sigma_{n}^{2}}\right)\nonumber\\
&=\log_{2}\left(1+\frac{\overbracket{\mathbb{E}\left[\left\lvert\sum_{m=1}^{M}\mathbf{g}_{\Lambda_{m,k}}^{\textrm{H}}\mathbf{A}_{\Lambda_{m,k}}^{\textrm{H}}\mathbf{V}_{\Lambda_{m,k}}\hat{\mathbf{g}}_{\Lambda_{m,k}}\right\rvert^{2}\right]}^{\text{DS}_{k}}+\overbracket{\mathbb{E}\left[\left\lvert\sum_{m=1}^{M}\mathbf{g}_{\Lambda_{m,k}^{\mathsf{C}}}^{\textrm{H}}\mathbf{A}_{\Lambda_{m,k}^{\mathsf{C}}}^{\textrm{H}}\mathbf{V}_{\Lambda_{m,k}}\hat{\mathbf{g}}_{\Lambda_{m,k}}\right\rvert^{2}\right]}^{\text{US}_{k}}}{\underbracket{\sum_{j\neq k}^{K}\mathbb{E}\left[\left\lvert\sum_{m=1}^{M}\mathbf{g}_{m,k}^{\textrm{H}}\mathbf{A}_{m,k}^{\textrm{H}}\mathbf{V}_{\Lambda_{m,j}}\hat{\mathbf{g}}_{\Lambda_{m,j}}\right\rvert^{2}\right]}_{\text{IS}_{k}}+\sigma_{n}^{2}}\right)\nonumber
\end{align}
Note that $R_{k}$ consists of the desired signal part $\text{DS}_{k}$, the unselected signal part $\text{US}_{k}$, and interference signal part $\text{IS}_{k}$, respectively. Since $\mathbf{g}_{\Lambda_{m,k}}$ is independent with $\mathbf{g}_{\Lambda_{m,k}^{\mathsf{C}}}$ and $\mathbf{g}_{m,j}$ ($j\neq k$), $\hat{\mathbf{g}}_{\Lambda_{m,k}}$ is also independent with with $\mathbf{g}_{\Lambda_{m,k}^{\mathsf{C}}}$ and $\mathbf{g}_{m,j}$ ($j\neq k$) so that the quantization of $\mathbf{g}_{\Lambda_{m,k}}$ only affects $\text{DS}_{k}$. This means that $\text{US}_{k}$ and $\text{IS}_{k}$ remain unchanged regardless of the quantization. Based on this observation, the achievable user rates for the realistic system $R_{k}$ and the ideal system $R_{k}^{(\text{ideal})}$ are given by
\begin{align}\label{5.2.4}
R_{k}&=\log_{2}\left(1+\frac{\text{DS}_{k}+\text{US}_{k}}{\text{IS}_{k}+\sigma_{n}^{2}}\right)\\
R_{k}^{(\text{ideal})}&=\log_{2}\left(1+\frac{\text{DS}_{k}^{(\text{ideal})}+\text{US}_{k}}{\text{IS}_{k}+\sigma_{n}^{2}}\right)
\end{align}
where $\text{DS}_{k}^{(\text{ideal})}$ is the desired signal part constructed from the perfect PGI. Thus, the rate gap $\Delta R_{k}=R_{k}^{(\text{ideal})}-R_{k}$ is
\begin{align}\label{5.2.5}
\Delta R_{k}&=\log_{2}\left(1+\frac{\text{DS}_{k}^{(\text{ideal})}+\text{US}_{k}}{\text{IS}_{k}+\sigma_{n}^{2}}\right)-\log_{2}\left(1+\frac{\text{DS}_{k}+\text{US}_{k}}{\text{IS}_{k}+\sigma_{n}^{2}}\right)\\
&=\log_{2}\left(1+\frac{\text{DS}_{k}^{(\text{ideal})}-\text{DS}_{k}}{\text{DS}_{k}+\text{US}_{k}+\text{IS}_{k}+\sigma_{n}^{2}}\right)
\end{align}
From $\text{DS}_{k}^{(\text{ideal})}-\text{DS}_{k}=\text{DS}_{k}^{(\text{ideal})}\bar{D}_{k}$ we get $\text{DS}_{k}^{(\text{ideal})}=\frac{\text{DS}_{k}}{1-\bar{D}_{k}}$. Using this, together with Proposition 1, we have
\begin{align}
\Delta R_{k}&=\log_{2}\left(1+\frac{\bar{D}_{k}}{1-\bar{D}_{k}}\frac{\text{DS}_{k}}{\text{DS}_{k}+\text{US}_{k}+\text{IS}_{k}+\sigma_{n}^{2}}\right)\nonumber\\
&\stackrel{(a)}{=}\log_{2}\left(1+\frac{\bar{D}_{k}}{1-\bar{D}_{k}}\frac{\text{DS}_{k}}{\left(1+\frac{1}{\text{SNR}}\right)\left(\text{DS}_{k}+\text{US}_{k}+\text{IS}_{k}\right)}\right)\nonumber\\
&\leq \log_{2}\left(1+\frac{\bar{D}_{k}}{1-\bar{D}_{k}}\frac{\text{SNR}}{1+\text{SNR}}\right)\nonumber\\
&\stackrel{(b)}{\leq} \log_{2}\left(1+\frac{\text{SNR}}{1+\text{SNR}}\frac{2^{-\frac{B}{L-1}}\left(L-\delta_{k}\right)}{(L-1)(1+\delta_{k})-2^{-\frac{B}{L-1}}\left(L-\delta_{k}\right)}\right)\nonumber
\end{align}
where $(a)$ is because $\text{SNR}=\frac{\text{DS}_{k}+\text{US}_{k}+\text{IS}_{k}}{\sigma_{n}^{2}}$ and $(b)$ is from Proposition 1.
\end{proof}

Finally, we can obtain the number of feedback bits required to maintain a constant rate gap with the ideal system.
\begin{prop}
To maintain a constant rate gap with the ideal system with perfect PGI within $\log_{2}\left(\beta\right)\,\textup{bps}/\textup{Hz}$ per user, it is sufficient to scale the number of bits per user according to
\begin{align}\label{5.2.7}
B=(L-1)\left(\log_{2}\left(\frac{\textup{SNR}}{\left(\textup{SNR}+1\right)\left(\beta-1\right)}-1\right)+\log_{2}\left(\frac{L-\delta_{k}}{\left(L-1\right)\left(1+\delta_{k}\right)}\right)\right)
\end{align} 
\end{prop}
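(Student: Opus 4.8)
The proposition is a direct inversion of the upper bound already established in Theorem~2, so the plan is simply to solve that bound for $B$. Since Theorem~2 gives $\Delta R_{k}\le\log_{2}\!\left(1+\frac{\text{SNR}}{1+\text{SNR}}\,\frac{(L-\delta_{k})2^{-B/(L-1)}}{(L-1)(1+\delta_{k})-(L-\delta_{k})2^{-B/(L-1)}}\right)$, it is \emph{sufficient} for the target gap that this right-hand side be forced below $\log_{2}\beta$. Because $\log_{2}$ is monotone, this is equivalent to keeping its argument below $\beta$, i.e.\ to the single scalar inequality
\[
\frac{\text{SNR}}{1+\text{SNR}}\,\frac{(L-\delta_{k})\,2^{-B/(L-1)}}{(L-1)(1+\delta_{k})-(L-\delta_{k})\,2^{-B/(L-1)}}\le\beta-1 .
\]

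First I would abbreviate $x=2^{-B/(L-1)}$, $s=\frac{\text{SNR}}{1+\text{SNR}}$, $a=L-\delta_{k}$ and $b=(L-1)(1+\delta_{k})$, so the requirement reads $\frac{s\,a\,x}{\,b-a\,x\,}\le\beta-1$. The one step that needs care is clearing the denominator: this preserves the inequality direction precisely when $b-a\,x>0$, which is exactly the statement that the Proposition~1 distortion bound stays below unity, the only regime in which the analysis is meaningful. Granting this, cross-multiplying and collecting the two terms proportional to $x$ isolates $x$ and gives a closed-form threshold $1/x\ge\frac{a}{b}\cdot\frac{s+\beta-1}{\beta-1}$; the smallest admissible $B$ corresponds to equality, i.e.\ to the largest feasible $x$.

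Finally, taking base-two logarithms of $2^{B/(L-1)}=1/x$ and multiplying through by $L-1$ turns this threshold into the stated bit budget. Writing $\frac{s+\beta-1}{\beta-1}=\frac{\text{SNR}}{(\text{SNR}+1)(\beta-1)}+1$, the expression separates cleanly into one logarithmic term depending only on the SNR and the target gap $\beta$, and a second term $\log_{2}\!\left(\frac{L-\delta_{k}}{(L-1)(1+\delta_{k})}\right)$ carrying the path geometry through $\delta_{k}$ and $L$, matching the claimed two-term form. I do not expect a genuine obstacle here: the result is essentially a corollary of Theorem~2, and the only bookkeeping subtlety is tracking which factors multiply $x$ in the numerator versus the denominator during the inversion, together with the positivity/monotonicity justification that legitimizes both clearing the denominator and dropping the logarithm.
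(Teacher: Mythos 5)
Your proposal is correct and takes essentially the same route as the paper, whose proof of this proposition consists precisely of setting the Theorem~2 upper bound equal to $\log_{2}(\beta)$ and performing the ``simple manipulations'' that you spell out (including the positivity check $b-ax>0$ that legitimizes clearing the denominator). One caveat: your algebra correctly yields $2^{B/(L-1)}=\frac{L-\delta_{k}}{(L-1)(1+\delta_{k})}\left(\frac{\text{SNR}}{(\text{SNR}+1)(\beta-1)}+1\right)$, i.e.\ a $+1$ inside the first logarithm, whereas the proposition as printed has $-1$; this is a sign typo in the stated formula, so you should flag the discrepancy rather than assert that your expression matches the claimed form.
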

\begin{proof}
To maintain a rate gap of $\Delta R_{k}\leq \log_{2}\left(\beta\right)$, the number of feedback bits $B$ should satisfy
\begin{align}\label{5.2.8}
\Delta R_{k}\leq \log_{2}\left(1+\frac{\text{SNR}}{1+\text{SNR}}\frac{L-\delta_{k}}{(L-1)(1+\delta_{k})-2^{-\frac{B}{L-1}}\left(L-\delta_{k}\right)}2^{-\frac{B}{L-1}}\right)=\log_{2}\left(\beta\right).
\end{align}
After simple manipulations, we get the desired result.
\end{proof}
In Fig. 7, we plot the per user rate as a function of SNR. We observe that the analytic upper bound obtained from the Theorem 2 is close to the upper bound obtained from the numerical evaluation. This means that by using a proper scaling of feedback bits in Proposition 2, the rate loss can be controlled effectively. 
\begin{figure}[h]
\centering
\includegraphics[scale=0.7]{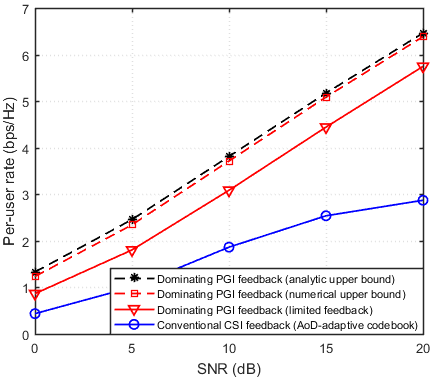}
\caption{Per user rate as a function of SNR ($M=5$, $N=8$, $P=4$, $L=8$, $B=6$)}
\vspace{-1em}
\end{figure}
\subsection{Dominating Path Number Selection}
In the subsection, we discuss how to choose the dominating path number. In a nutshell, we compute the lower bound of the sum rate $\sum_{k=1}^{K}R_{k}\left(l\right)$ for each $l$ ($l=1,\cdots, MP$) and then choose the value $L$ maximizing the sum rate. That is
\begin{align}
L=\text{arg }\underset{l=1,\cdots,MP}{\text{max}}\,\sum_{k=1}^{K}R_{k}\left(l\right).
\end{align}
Note that $R_{k}\left(l\right)$ is obtained from the dominating path selection algorithm. In each iteration of this algorithm (see Section III.C), we obtain the dominating path indices $\lbrace \Lambda_{m,k}\rbrace$ and the precoding matrices $\lbrace \mathbf{V}_{\Lambda_{m,k}}\rbrace$ and then compute the lower bound of the achievable rate using $\lbrace \Lambda_{m,k}\rbrace$ and $\lbrace \mathbf{V}_{\Lambda_{m,k}}\rbrace$\footnote{To be specific, the lower bound of the rate is $R_{k}(l)=R_{k}^{(\text{ideal})}(l)-\Delta R_{k}(l)$ where $R_{k}^{(\text{ideal})}(l)$ is the rate of ideal system with perfect PGI (see Theorem 1) and $\Delta R_{k}(l)$ is the upper bound of the rate gap over the ideal system (see Theorem 2).}.

Since the dominating path selection depends on AoD information, it is in general very difficult to express the sum rate as a function of $L$. However, in a single cell massive MIMO systems where a macro cell serves users in a cell, we can express the lower bound of sum rate as a function of $L$.
\begin{theorem}
The per user rate $R_{k}$ of the user $k$ in the single cell massive MIMO systems using the dominating path number $L$ is lower bounded as 
\begin{align}\label{5.3.1}
R_{k}\geq \log_{2}\!\left(\!1+\frac{L+1}{\sigma_{n}^{2}}\!\right)\!-\log_{2}\!\left(\!1+\frac{\textup{SNR}}{1+\textup{SNR}}\frac{L-\delta_{k}}{(L-1)(1+\delta_{k})-2^{-\frac{B}{L-1}}\left(L-\delta_{k}\right)}2^{-\frac{B}{L-1}}\!\right).
\end{align}
\end{theorem}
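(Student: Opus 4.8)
The plan is to peel the statement into the two ingredients the preceding results already supply: the ideal-system rate of Theorem~1 and the rate-gap bound of Theorem~2. Writing $R_{k}=R_{k}^{(\textup{ideal})}-\Delta R_{k}$ and invoking Theorem~2,
\[
\Delta R_{k}\leq \log_{2}\!\left(1+\frac{\textup{SNR}}{1+\textup{SNR}}\frac{L-\delta_{k}}{(L-1)(1+\delta_{k})-2^{-\frac{B}{L-1}}\left(L-\delta_{k}\right)}2^{-\frac{B}{L-1}}\right),
\]
which is \emph{exactly} the subtracted (second) term in \eqref{5.3.1}. Hence the whole theorem follows once I lower bound the first (ideal) term by $\log_{2}\!\left(1+\tfrac{L+1}{\sigma_{n}^{2}}\right)$: combining $R_{k}=R_{k}^{(\textup{ideal})}-\Delta R_{k}\geq \log_{2}\!\left(1+\tfrac{L+1}{\sigma_{n}^{2}}\right)-\Delta R_{k}$ with the above upper bound on $\Delta R_{k}$ yields \eqref{5.3.1}. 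So the real content is the lower bound on $R_{k}^{(\textup{ideal})}$ in the single-cell massive MIMO regime.

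To obtain that bound I would first specialize \eqref{3.2.2} to $M=1$, collapsing the sums over BSs and writing $\mathbf{A}_{\Lambda_{k}}=\mathbf{A}_{\Lambda_{1,k}}$ and $\mathbf{A}_{k}=\mathbf{A}_{1,k}$, so that
\[
R_{k}^{(\textup{ideal})} = \log_{2}\!\left(1+\frac{\left\lvert\textup{tr}\left(\mathbf{A}_{\Lambda_{k}}^{\textrm{H}}\mathbf{V}_{\Lambda_{k}}\right)\right\rvert^{2}+\left\lVert\mathbf{A}_{k}^{\textrm{H}}\mathbf{V}_{\Lambda_{k}}\right\rVert_{\textup{F}}^{2}}{\sum_{j\neq k}^{K}\left\lVert\mathbf{A}_{k}^{\textrm{H}}\mathbf{V}_{\Lambda_{j}}\right\rVert_{\textup{F}}^{2}+\sigma_{n}^{2}}\right).
\]
The engine is the asymptotic orthogonality of the array steering vectors: working with the normalized array response (unit-norm columns), for distinct AoDs $\theta\neq\theta'$ one has $\tfrac{1}{N}\mathbf{a}(\theta)^{\textrm{H}}\mathbf{a}(\theta')\to 0$ while each column has unit norm, so $\mathbf{A}_{\Lambda_{k}}^{\textrm{H}}\mathbf{A}_{\Lambda_{k}}\to\mathbf{I}_{L}$ and, since distinct users see (generically) distinct AoDs, the cross terms $\mathbf{A}_{k}^{\textrm{H}}\mathbf{A}_{\Lambda_{j}}\to\mathbf{0}$ for $j\neq k$.

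Next I would use Lemma~1 to pin down the precoder. As the inter-user leakage blocks in $\mathbf{W}_{k}$ of \eqref{3.3.6} are annihilated on user $k$'s steering subspace by the orthogonality above, the dominant generalized eigenvector $\mathbf{u}_{k,\textup{max}}$ of $\mathbf{W}_{k}^{-1}\mathbf{U}_{k}$ aligns with the selected steering directions; i.e. the SLNR-optimal precoder of \eqref{3.3.7} converges to the matched filter $\mathbf{V}_{\Lambda_{k}}=\tfrac{1}{\sqrt{L}}\mathbf{A}_{\Lambda_{k}}$, which has unit Frobenius norm. Substituting and using $\mathbf{A}_{\Lambda_{k}}^{\textrm{H}}\mathbf{A}_{\Lambda_{k}}\to\mathbf{I}_{L}$ gives $\mathbf{A}_{\Lambda_{k}}^{\textrm{H}}\mathbf{V}_{\Lambda_{k}}\to\tfrac{1}{\sqrt{L}}\mathbf{I}_{L}$, so that $\lvert\textup{tr}(\mathbf{A}_{\Lambda_{k}}^{\textrm{H}}\mathbf{V}_{\Lambda_{k}})\rvert^{2}\to L$ and $\lVert\mathbf{A}_{k}^{\textrm{H}}\mathbf{V}_{\Lambda_{k}}\rVert_{\textup{F}}^{2}\to 1$, while every interference term $\lVert\mathbf{A}_{k}^{\textrm{H}}\mathbf{V}_{\Lambda_{j}}\rVert_{\textup{F}}^{2}\to 0$. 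The numerator thus tends to $L+1$ and the denominator to $\sigma_{n}^{2}$, delivering $R_{k}^{(\textup{ideal})}\geq\log_{2}\!\left(1+\tfrac{L+1}{\sigma_{n}^{2}}\right)$, which I would then combine with the first paragraph to finish.

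The main obstacle I anticipate is turning ``asymptotic orthogonality'' into a genuine inequality rather than a limit: for finite $N$ the inner products $\tfrac{1}{N}\mathbf{a}(\theta)^{\textrm{H}}\mathbf{a}(\theta')$ are only approximately zero, so the clean values $L$, $1$, and $0$ are limiting quantities. I would control the residual cross-correlations via the Dirichlet-kernel bound on $\lvert\mathbf{a}(\theta)^{\textrm{H}}\mathbf{a}(\theta')\rvert$ in terms of the minimum AoD separation, showing that the signal power is at least $L+1$ up to a vanishing correction and the leakage at most a vanishing correction, so the stated bound holds in the favorable-propagation (large-$N$) limit. The most delicate point is justifying that the SLNR eigenvector \emph{converges} to the matched filter rather than merely being close to it; I expect to argue this through the vanishing of the off-diagonal leakage blocks of $\mathbf{W}_{k}$ on the signal subspace, which forces $\mathbf{u}_{k,\textup{max}}$ onto $\textup{span}(\mathbf{A}_{\Lambda_{k}})$.
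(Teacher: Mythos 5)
Your top-level decomposition is exactly the paper's: write $R_{k}=R_{k}^{(\textup{ideal})}-\Delta R_{k}$, bound $\Delta R_{k}$ by Theorem~2, and reduce everything to showing that the ideal single-cell rate equals (or is at least) $\log_{2}\left(1+\frac{L+1}{\sigma_{n}^{2}}\right)$. Where you diverge is in how that value is established. The paper's Appendix~C treats the steering columns as \emph{exactly} orthonormal (and the interferers' steering matrix $\boldsymbol{\Phi}_{k}$ as exactly orthogonal to $\boldsymbol{\mu}_{\Lambda_{k}}$ and $\mathbf{A}_{k}$, with $\boldsymbol{\Phi}_{k}^{\textrm{H}}\boldsymbol{\Phi}_{k}=\mathbf{I}$), applies the Woodbury identity to show $\mathbf{W}_{k}^{-1}\mathbf{U}_{k}=\frac{1}{\sigma_{n}^{2}}\mathbf{U}_{k}$ on the relevant subspace, and then proves a small eigenvalue lemma (its Lemma~4): $\boldsymbol{\mu}_{\Lambda_{k}}$ is the dominant eigenvector of $\mathbf{U}_{k}$ with eigenvalue $L+1$, so the SLNR precoder is exactly the matched filter $\frac{1}{\sqrt{L}}\mathbf{A}_{\Lambda_{k}}$ and $R_{k}^{(\textup{ideal})}=\log_{2}\left(1+\frac{\lambda_{k,\max}}{\sigma_{n}^{2}}\right)$ with no limiting argument needed. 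You instead reach the same precoder and the same values $L$, $1$, $0$ for the signal, self, and interference terms via asymptotic favorable propagation as $N\to\infty$, and you correctly flag the resulting difficulty of converting a limit into an inequality and of proving convergence of the generalized eigenvector. That difficulty is real for your route but is dissolved in the paper's route by assuming exact orthogonality up front; the price the paper pays is that the "equality" $R_{k}^{(\textup{ideal})}=\log_{2}\left(1+\frac{L+1}{\sigma_{n}^{2}}\right)$ holds only under that idealization, whereas your Dirichlet-kernel program, if carried out, would quantify the finite-$N$ correction. Both arguments are sound in substance and identify the same key facts (matched-filter precoder, top eigenvalue $L+1$); yours is left as a plan at the one genuinely technical step, which the paper replaces with an exact algebraic computation.
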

\begin{proof}
See Appendix C.
\end{proof}
\noindent By using Theorem 3, we can easily find out $L$ maximizing the lower bound of sum rate.
\\

\section{Simulation Results}
In this section, we investigate the sum rate performance of the proposed dominating PGI feedback scheme. For comparison, we use the conventional CSI feedback schemes with the AoD-adaptive subspace codebook~\cite{shen2018channel} and the RVQ codebook~\cite{jindal2006mimo}. In our simulations, we consider the FDD-based cell-free systems where $M=5$ (except for Fig. 12) BSs equipped with $N=8$ transmit antennas cooperatively serve $K=5$ users equipped with a single antenna. We set the maximum transmit power of BS to $10\,\text{W}$ and the total transmit power of cooperating BS group to $25\,\text{W}$. Also, we distribute the BSs and users randomly in a square area (size of a square is $1\times 1\,\text{km}^{2}$). We use the downlink narrowband multi-path channel model whose carrier frequency is $f_{c}=2\,\text{GHz}$ and set the number of propagation paths to $P=4$ (except for Fig. 11). The angular spread of AoD is set to $10^{\circ}$. In the proposed dominating PGI feedback scheme, we select $L=8$ (except for Fig. 10) dominating paths among all possible $MP=20$ paths. Further, the number of feedback bits per user is $B=6$ (except for Fig. 9). In order to avoid special scenarios where the proposed technique is favorable (or unfavorable), we used $1000$ randomly generated cell-free system realizations.

\begin{figure}[t]
\begin{minipage}[t]{0.487\linewidth}
    \includegraphics[width=\linewidth]{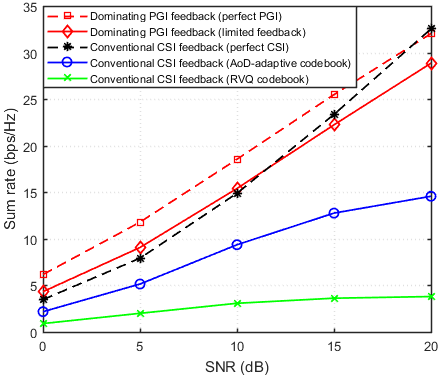}
	\caption{Sum rate as a function of SNR ($M=5$, $K=5$, $N=8$, $P=4$, $L=8$, $B=6$)}
\end{minipage} 
    \hfill%
\begin{minipage}[t]{0.485\linewidth}
    \includegraphics[width=\linewidth]{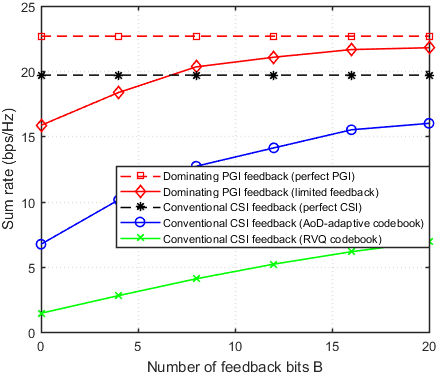}
    \caption{Sum rate as a function of the number of feedback bits $B$ ($M=5$, $K=5$, $N=8$, $P=4$, $L=8$, $\text{SNR}=15\,\text{dB}$)}
\end{minipage} 
\vspace{-1em}
\end{figure}

In Fig. 8, we plot the sum rate performance as a function of SNR. The performance of ideal system with perfect PGI (or CSI) and the realistic system with finite rate feedback are plotted as a dotted line and a real line, respectively. We observe that the proposed dominating PGI feedback scheme outperforms the conventional schemes by a large margin. For example, at $15\,\text{bps/Hz}$ region, the proposed scheme achieves more than $10\,\text{dB}$ gain over the conventional CSI feedback scheme. We also observe that the performance loss of the proposed scheme over the perfect PGI system is within $3\,\text{dB}$ whereas the conventional AoD-adaptive codebook scheme and the RVQ codebook scheme suffer more than $5\,\text{dB}$ and $10\,\text{dB}$ loss. As mentioned, this is because the number of feedback bits in the proposed scheme required to maintain a constant rate gap with the ideal system scales linearly with the number of dominating paths $L$ while such is not the case for the conventional schemes. In fact, with only $B=6$ feedback bits, the proposed scheme performs similar to the conventional feedback scheme with the perfect CSI.

In Fig. 9, we set $\text{SNR}=15\,\text{dB}$ and plot the sum rate as a function of the number of feedback bits $B$. We observe that the proposed dominating PGI feedback scheme achieves a significant feedback overhead reduction over the conventional schemes. For example, in achieving $18\,\text{bps/Hz}$, the proposed dominating PGI feedback scheme requires $B=4$ bits while the AoD-adaptive subspace codebook scheme requires more than $B=20$ bits, resulting in more than $80\%$ reduction in feedback overhead). Further, the proposed scheme requires only $B=8$ bits to maintain $3\,\text{bps/Hz}$ rate gap with the ideal system while the conventional AoD-adaptive codebook scheme requires $B=20$ bits to maintain the same rate gap.

\begin{figure}[t]
\begin{minipage}[t]{0.487\linewidth}
    \includegraphics[width=\linewidth]{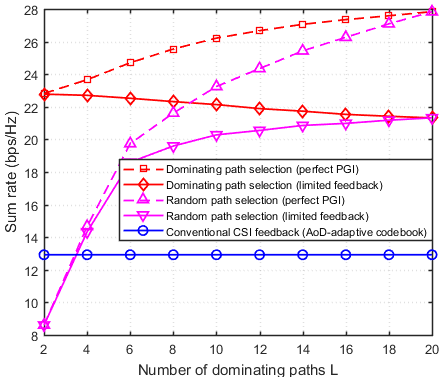}
    \caption{Sum rate as a function of the number of dominating paths $L$ ($M=5$, $K=5$, $N=8$, $P=4$, $B=6$, $\text{SNR}=15\,\text{dB}$)}
\end{minipage} 
    \hfill%
\begin{minipage}[t]{0.485\linewidth}
    \includegraphics[width=\linewidth]{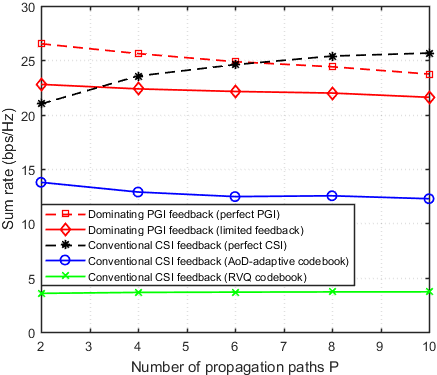}
    \caption{Sum rate as a function of the number of propagation paths $P$ ($M=5$, $K=5$, $N=8$, $L=2P$, $B=6$, $\text{SNR}=15\,\text{dB}$)}
\end{minipage} 
\vspace{-1em}
\end{figure}

In order to show the effectiveness of the dominating path selection, we compare the proposed dominating path selection with the random path selection. By the random path selection, we mean an approach to feed back the PGI of randomly selected paths. The total number of paths is set to $MP=20$. We measure the sum rate as a function of the number of selected paths $L$. Overall, we observe that the dominating path selection provides a considerable sum rate gain over the random path selection approach. When $L=8$, for example, the PGI feedback with dominating path selection achieves $4\,\text{bps/Hz}$ sum rate gain over the PGI feedback with random path selection. We also observe that the performance gain of the proposed scheme increases when the number of dominating paths is small. 

In Fig. 11, we plot the sum rate as a function of the number of propagation paths $P$. In this simulation, we set $\text{SNR}=15\,\text{dB}$ and $L=2P$ so that the number of dominating paths increases linearly with the number of propagation paths. Although the sum rate of the proposed dominating PGI feedback scheme decreases with $P$, the rate loss is not too large even in the rich scattering environment. In fact, when $P$ increases from $2$ to $10$, the rate loss of the proposed scheme is less than $1\,\text{bps/Hz}$.

\begin{figure}[t]
\begin{minipage}[t]{0.485\linewidth}
    \includegraphics[width=\linewidth]{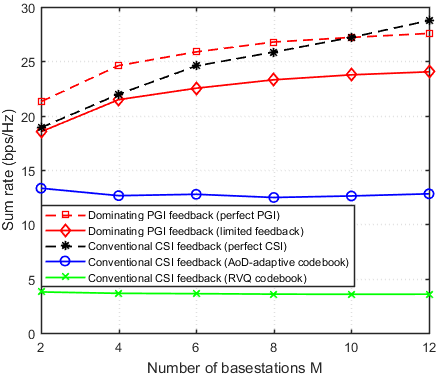}
    \caption{Sum rate as a function of the number of BSs $M$ ($K=5$, $N=8$, $P=4$, $L=8$, $B=6$, $\text{SNR}=15\,\text{dB}$)}
\end{minipage}
    \hfill%
\begin{minipage}[t]{0.49\linewidth}
    \includegraphics[width=\linewidth]{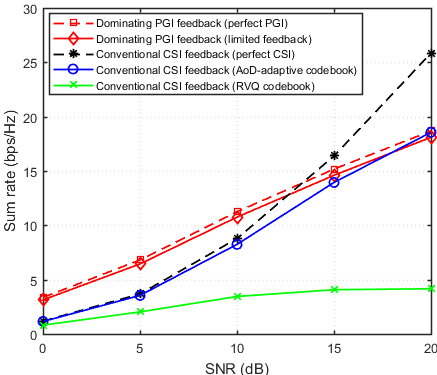}
\caption{Sum rate as a function of SNR ($M=1$, $K=5$, $N=8$, $P=8$, $L=4$, $B=6$)}
\end{minipage} 
\vspace{-1em}
\end{figure}

In Fig. 12, we plot the sum rate as a function of the number of BSs when $\text{SNR}=15\,\text{dB}$. We observe that the sum rate of the proposed dominating PGI feedback scheme increases dramatically with the number of BSs whereas no such effect can be expected from the conventional CSI feedback schemes. In particular, when $M=2$, the rate gap between the dominating PGI scheme and the CSI feedback scheme is $5\,\text{bps/Hz}$. However, when $M=12$, this rate gap increases to almost $11\,\text{bps/Hz}$. The reason is because when the number of BSs increases, we can choose the dominating paths from increased number of total paths so that we can achieve the gain obtained from path diversity.

In Fig. 13, we investigate the performance of proposed dominating PGI feedback when only one BS serves users in a cell. Although the gain obtained from the BS cooperation would not be significant in this scenario, we can still acquire accurate dominating PGI and control the inter-user interference via precoding matrix optimization in the proposed scheme. As a result, the proposed scheme achieves more than $4\,\text{dB}$ gain in the low SNR region and $3\,\text{dB}$ gain in the mid SNR region over the AoD-adaptive subspace scheme.
\\

\section{Conclusion}
In this paper, we proposed a novel feedback reduction technique for FDD-based cell-free systems. The key feature of the proposed scheme is to choose a few dominating paths among all possible propagation paths and then feed back the PGI of the chosen paths. Key observation in our work is that 1) the spatial domain channel is represented by a small number of multi-path components (AoDs and path gains) and 2) the AoDs are quite similar in the uplink and downlink channel owing to the angle reciprocity so that the BSs can acquire AoD information directly from the uplink pilot signal. Thus, by choosing a few dominating paths and only feed back the path gain of the chosen paths, we can achieve a significant reduction in the feedback overhead. We observed from the extensive simulations that the proposed scheme can achieve more than $80\%$ of feedback overhead reduction over the conventional schemes relying on the CSI feedback.

\section*{Appendix A\\ Proof of Theorem 1}
We first compute the closed-form expression of numerator of $R_{k}$ and then compute the closed-form expression of denominator of $R_{k}$. Note that the channel vector is decomposed as 
\begin{align}
\mathbf{h}_{m,k}&=\mathbf{A}_{m,k}\mathbf{g}_{m,k}=\mathbf{A}_{\Lambda_{m,k}}\mathbf{g}_{\Lambda_{m,k}}+\mathbf{A}_{\Lambda_{m,k}^{\mathsf{C}}}\mathbf{g}_{\Lambda_{m,k}^{\mathsf{C}}},\label{8.1.1.1}
\end{align}
the numerator of $R_{k}$ is given by
\begin{align}
\mathbb{E}\left[\left\lvert\sum_{m=1}^{M}\mathbf{h}_{m,k}^{\textrm{H}}\mathbf{w}_{m,k}\right\rvert^{2}\right]=&\mathbb{E}\left[\left\lvert\sum_{m=1}^{M}\mathbf{g}_{\Lambda_{m,k}}^{\textrm{H}}\mathbf{A}_{\Lambda_{m,k}}^{\textrm{H}}\mathbf{V}_{\Lambda_{m,k}}\mathbf{g}_{\Lambda_{m,k}}\right\rvert^{2}\right]+\mathbb{E}\left[\left\lvert\sum_{m=1}^{M}\mathbf{g}_{\Lambda_{m,k}^{\mathsf{C}}}^{\textrm{H}}\mathbf{A}_{\Lambda_{m,k}^{\mathsf{C}}}^{\textrm{H}}\mathbf{V}_{\Lambda_{m,k}}\mathbf{g}_{\Lambda_{m,k}}\right\rvert^{2}\right]\nonumber\\
=&\mathbb{E}\left[\left\lvert\mathbf{g}_{\Lambda_{k}}^{\textrm{H}}\mathbf{A}_{\Lambda_{k}}^{\textrm{H}}\mathbf{V}_{\Lambda_{k}}\mathbf{g}_{\Lambda_{k}}\right\rvert^{2}\right]+\mathbb{E}\left[\left\lvert\mathbf{g}_{\Lambda_{k}^{\mathsf{C}}}^{\textrm{H}}\mathbf{A}_{\Lambda_{k}^{\mathsf{C}}}^{\textrm{H}}\mathbf{V}_{\Lambda_{k}}\mathbf{g}_{\Lambda_{k}}\right\rvert^{2}\right]\nonumber\\
\stackrel{(a)}=&\mathbb{E}\left[\left\lVert\mathbf{g}_{\Lambda_{k}}\right\rVert^{4}\right]\mathbb{E}\left[\left\lvert\bar{\mathbf{g}}_{\Lambda_{k}}^{\textrm{H}}\mathbf{A}_{\Lambda_{k}}^{\textrm{H}}\mathbf{V}_{\Lambda_{k}}\bar{\mathbf{g}}_{\Lambda_{k}}\right\rvert^{2}\right]+\mathbb{E}\left[\left\lVert\mathbf{g}_{\Lambda_{k}}\right\rVert^{2}\lVert\mathbf{g}_{\Lambda_{k}^{\mathsf{C}}}\rVert^{2}\right]\mathbb{E}\left[\left\lvert\bar{\mathbf{g}}_{\Lambda_{k}^{\mathsf{C}}}^{\textrm{H}}\mathbf{A}_{\Lambda_{k}^{\mathsf{C}}}^{\textrm{H}}\mathbf{V}_{\Lambda_{k}}\bar{\mathbf{g}}_{\Lambda_{k}}\right\rvert^{2}\right]\nonumber\\
=&L(L+1)\mathbb{E}\left[\left\lvert\bar{\mathbf{g}}_{\Lambda_{k}}^{\textrm{H}}\mathbf{A}_{\Lambda_{k}}^{\textrm{H}}\mathbf{V}_{\Lambda_{k}}\bar{\mathbf{g}}_{\Lambda_{k}}\right\rvert^{2}\right]+L^{2}\mathbb{E}\left[\left\lvert\bar{\mathbf{g}}_{\Lambda_{k}^{\mathsf{C}}}^{\textrm{H}}\mathbf{A}_{\Lambda_{k}^{\mathsf{C}}}^{\textrm{H}}\mathbf{V}_{\Lambda_{k}}\bar{\mathbf{g}}_{\Lambda_{k}}\right\rvert^{2}\right],\label{8.1.2.1}
\end{align}
where $(a)$ is due to the independence of the vector norm $\left\lVert\mathbf{g}_{\Lambda_{k}}\right\rVert$ and the vector direction $\bar{\mathbf{g}}_{\Lambda_{k}}$. Since $\bar{\mathbf{g}}_{\Lambda_{k}}$ and $\bar{\mathbf{g}}_{\Lambda_{k}^{\mathsf{C}}}$ are independent, the closed-form expression of the second term in \eqref{8.1.2.1} is
\begin{align}
\mathbb{E}\left[\left\lvert\bar{\mathbf{g}}_{\Lambda_{k}^{\mathsf{C}}}^{\textrm{H}}\mathbf{A}_{\Lambda_{k}^{\mathsf{C}}}^{\textrm{H}}\mathbf{V}_{\Lambda_{k}}\bar{\mathbf{g}}_{\Lambda_{k}}\right\rvert^{2}\right]&=\mathbb{E}\left[\text{tr}\left(\bar{\mathbf{g}}_{\Lambda_{k}^{\mathsf{C}}}^{\textrm{H}}\mathbf{A}_{\Lambda_{k}^{\mathsf{C}}}^{\textrm{H}}\mathbf{V}_{\Lambda_{k}}\bar{\mathbf{g}}_{\Lambda_{k}}\bar{\mathbf{g}}_{\Lambda_{k}}^{\textrm{H}}\mathbf{V}_{\Lambda_{k}}^{\textrm{H}}\mathbf{A}_{\Lambda_{k}^{\mathsf{C}}}\bar{\mathbf{g}}_{\Lambda_{k}^{\mathsf{C}}}\right)\right]\label{8.1.3.1}\\
&=\text{tr}\left(\mathbb{E}\left[\bar{\mathbf{g}}_{\Lambda_{k}^{\mathsf{C}}}\bar{\mathbf{g}}_{\Lambda_{k}^{\mathsf{C}}}^{\textrm{H}}\right]\mathbf{A}_{\Lambda_{k}^{\mathsf{C}}}^{\textrm{H}}\mathbf{V}_{\Lambda_{k}}\mathbb{E}\left[\bar{\mathbf{g}}_{\Lambda_{k}}\bar{\mathbf{g}}_{\Lambda_{k}}^{\textrm{H}}\right]\mathbf{V}_{\Lambda_{k}}^{\textrm{H}}\mathbf{A}_{\Lambda_{k}^{\mathsf{C}}}\right)\label{8.1.3.2}\\
&=\frac{1}{L^{2}}\left\lVert\mathbf{A}_{\Lambda_{k}^{\mathsf{C}}}^{\textrm{H}}\mathbf{V}_{\Lambda_{k}}\right\rVert_{\text{F}}^{2}.\label{8.1.3.4}
\end{align} 
Whereas, the closed-form expression of the first term in \eqref{8.1.2.1} is not easy to compute. To address this issue, we use the following lemma. 
\begin{lemma}
Let $\mathbf{A}$ be a $L\times L$ matrix, $\mathbf{g}$ be a $L\times 1$ complex normal vector, and $\bar{\mathbf{g}}=\frac{\mathbf{g}}{\left\lvert\mathbf{g}\right\rvert}$. Then, 
\begin{align}\label{8.1.4}
\mathbb{E}\left[\left\lvert\bar{\mathbf{g}}^{\textup{\textrm{H}}}\mathbf{A}\bar{\mathbf{g}}\right\rvert^{2}\right]=\frac{1}{L(L+1)}\left(\lvert\textup{tr}\left(\mathbf{A}\right)\rvert^{2}+\lVert\mathbf{A}\rVert_{\textup{F}}^{2}\right).
\end{align}
\end{lemma}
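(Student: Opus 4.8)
The plan is to reduce the claim to a single mixed fourth-order moment of the isotropic direction vector $\bar{\mathbf{g}}$ and then evaluate that moment by lifting back to the underlying Gaussian vector $\mathbf{g}$. First I would write the quadratic form in coordinates. Writing $u_{i}$ for the entries of $\bar{\mathbf{g}}$ and $\overline{(\cdot)}$ for complex conjugation, we have $\bar{\mathbf{g}}^{\textrm{H}}\mathbf{A}\bar{\mathbf{g}}=\sum_{i,j}\overline{u_{i}}\,A_{ij}u_{j}$, so that
\begin{align}
\mathbb{E}\left[\lvert\bar{\mathbf{g}}^{\textrm{H}}\mathbf{A}\bar{\mathbf{g}}\rvert^{2}\right]=\sum_{i,j,k,l}A_{ij}\,\overline{A_{kl}}\;\mathbb{E}\left[u_{j}u_{k}\,\overline{u_{i}}\,\overline{u_{l}}\right].
\end{align}
Thus everything hinges on the single fourth moment $\mathbb{E}[u_{a}u_{b}\,\overline{u_{c}}\,\overline{u_{d}}]$ of the uniformly distributed unit vector.

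The key step is to evaluate this moment. Since the entries of $\mathbf{g}$ are i.i.d.\ $\mathcal{CN}(0,1)$, the law of $\mathbf{g}$ is rotationally invariant, so its norm $\lVert\mathbf{g}\rVert$ and direction $\bar{\mathbf{g}}=\mathbf{g}/\lVert\mathbf{g}\rVert$ are independent. Writing the Gaussian entries as $g_{a}=\lVert\mathbf{g}\rVert\,u_{a}$ and using this independence gives $\mathbb{E}[g_{a}g_{b}\,\overline{g_{c}}\,\overline{g_{d}}]=\mathbb{E}[\lVert\mathbf{g}\rVert^{4}]\,\mathbb{E}[u_{a}u_{b}\,\overline{u_{c}}\,\overline{u_{d}}]$. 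The left-hand side is a Gaussian moment computable by the complex Isserlis (Wick) rule, which, since $\mathbb{E}[g_{a}g_{b}]=0$ and $\mathbb{E}[g_{a}\overline{g_{c}}]=\delta_{ac}$, yields $\mathbb{E}[g_{a}g_{b}\,\overline{g_{c}}\,\overline{g_{d}}]=\delta_{ac}\delta_{bd}+\delta_{ad}\delta_{bc}$; and since $\lVert\mathbf{g}\rVert^{2}\sim\mathrm{Gamma}(L,1)$ we get $\mathbb{E}[\lVert\mathbf{g}\rVert^{4}]=L(L+1)$. Dividing gives the closed form
\begin{align}
\mathbb{E}\left[u_{a}u_{b}\,\overline{u_{c}}\,\overline{u_{d}}\right]=\frac{1}{L(L+1)}\left(\delta_{ac}\delta_{bd}+\delta_{ad}\delta_{bc}\right).
\end{align}

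Finally I would substitute this back (with $a=j$, $b=k$, $c=i$, $d=l$) and collect terms. The contraction $\delta_{ji}\delta_{kl}$ forces $i=j$ and $k=l$, producing $\sum_{i,k}A_{ii}\,\overline{A_{kk}}=\lvert\textrm{tr}(\mathbf{A})\rvert^{2}$; the contraction $\delta_{jl}\delta_{ki}$ forces $j=l$ and $k=i$, producing $\sum_{i,j}A_{ij}\,\overline{A_{ij}}=\lVert\mathbf{A}\rVert_{\textrm{F}}^{2}$. Carrying the common prefactor $1/(L(L+1))$ then yields exactly $\frac{1}{L(L+1)}(\lvert\textrm{tr}(\mathbf{A})\rvert^{2}+\lVert\mathbf{A}\rVert_{\textrm{F}}^{2})$, as claimed. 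The main obstacle I anticipate is the bookkeeping: keeping the index pairings straight so that the two Kronecker-delta terms are matched to the trace term and the Frobenius term respectively, together with cleanly justifying the independence-of-norm-and-direction step, which is what reduces the non-Gaussian unit-vector moment to a routine Gaussian Wick computation.
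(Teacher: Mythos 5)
Your proof is correct and follows essentially the same route as the paper's: expand the quadratic form in coordinates and reduce everything to fourth moments of the entries of the isotropically distributed unit vector, with the two surviving index contractions yielding $\lvert\mathrm{tr}(\mathbf{A})\rvert^{2}$ and $\lVert\mathbf{A}\rVert_{\mathrm{F}}^{2}$. The only difference is cosmetic: you derive the sphere moments via the Gaussian lift ($\mathbb{E}[\lVert\mathbf{g}\rVert^{4}]=L(L+1)$ plus Wick), whereas the paper simply asserts $\mathbb{E}[\lvert g_{i}\rvert^{4}]=\tfrac{2}{L(L+1)}$ and $\mathbb{E}[\lvert g_{i}\rvert^{2}\lvert g_{j}\rvert^{2}]=\tfrac{1}{L(L+1)}$, so your version actually justifies a step the paper leaves implicit.
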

\begin{proof}
Let $(i,j)$-th element of $\mathbf{A}$ be $a_{i,j}$ and $i$-th element of $\bar{\mathbf{g}}$ be $g_{i}$. Then,
\begin{align}
\mathbb{E}\left[\left\lvert\bar{\mathbf{g}}^{\textrm{H}}\mathbf{A}\bar{\mathbf{g}}\right\rvert^{2}\right]&=\mathbb{E}\Big[\Big\lvert\sum_{i,j}a_{i,j}g_{i}^{*}g_{j}\Big\rvert^{2}\Big]\label{8.1.5.1}\\
&=\mathbb{E}\Big[\Big\lvert\sum_{i}a_{i,i}\lvert g_{i}\rvert^{2}\Big\rvert^{2}\Big]+\mathbb{E}\Big[\Big\lvert\sum_{i\neq j}a_{i,j}g_{i}^{*}g_{j}\Big\rvert^{2}\Big]\label{8.1.5.3}\\
&=\sum_{i}\lvert a_{i,i}\rvert^{2}\mathbb{E}\left[\lvert g_{i}\rvert^{4}\right]+\sum_{i\neq j}a_{i,i}^{*}a_{j,j}\mathbb{E}\left[\lvert g_{i}\rvert^{2}\lvert g_{j}\rvert^{2}\right]+\sum_{i\neq j}\lvert a_{i,j}\rvert^{2}\mathbb{E}\left[\lvert g_{i}\rvert^{2}\lvert g_{j}\rvert^{2}\right]\label{8.1.5.4}\\
&\stackrel{(a)}{=}\frac{2}{L(L+1)}\sum_{i}\lvert a_{i,i}\rvert^{2}+\frac{1}{L(L+1)}\sum_{i\neq j}a_{i,i}^{*}a_{j,j}+\frac{1}{L(L+1)}\sum_{i\neq j}\lvert a_{i,j}\rvert^{2}\label{8.1.5.5}\\
&=\frac{1}{L(L+1)}\left(\Big\lvert\sum_{i}a_{i,i}\Big\rvert^{2}+\sum_{i,j}\lvert a_{i,j}\rvert^{2}\right)\label{8.1.5.6}\\
&=\frac{1}{L(L+1)}\left(\lvert\text{tr}\left(\mathbf{A}\right)\rvert^{2}+\lVert\mathbf{A}\rVert_{\text{F}}^{2}\right),\label{8.1.5.7}
\end{align}
where $(a)$ is due to the fact that $\mathbb{E}\left[\left\lvert g_{i}\right\rvert^{4}\right]=\frac{2}{L(L+1)}$ and $\mathbb{E}\left[\left\lvert g_{i}\right\rvert^{2}\right]=\mathbb{E}\left[\left\lvert g_{i}\right\rvert^{2}\left\lvert g_{j}\right\rvert^{2}\right]=\frac{1}{L(L+1)}$.
\end{proof}
\noindent By plugging the result of Lemma 3 and \eqref{8.1.3.4} into \eqref{8.1.2.1}, we get
\begin{align}
\mathbb{E}\left[\left\lvert\sum_{m=1}^{M}\mathbf{h}_{m,k}^{\textrm{H}}\mathbf{w}_{m,k}\right\rvert^{2}\right]&=\left\lvert\text{tr}\left(\mathbf{A}_{\Lambda_{k}}^{\textrm{H}}\mathbf{V}_{\Lambda_{k}}\right)\right\rvert^{2}+\left\lVert\mathbf{A}_{k}^{\textrm{H}}\mathbf{V}_{\Lambda_{k}}\right\rVert_{\text{F}}^{2}\label{8.1.6.2}\\
&=\left\lvert\sum_{m=1}^{M}\text{tr}\left(\mathbf{A}_{\Lambda_{m,k}}^{\textrm{H}}\mathbf{V}_{\Lambda_{m,k}}\right)\right\rvert^{2}+\sum_{m=1}^{M}\left\lVert\mathbf{A}_{m,k}^{\textrm{H}}\mathbf{V}_{\Lambda_{m,k}}\right\rVert_{\text{F}}^{2}.\label{8.1.6.3}
\end{align}

Next, since $\mathbf{g}_{m,k}$ and $\mathbf{g}_{\Lambda_{m,j}}$ are independent, the denominator of $R_{k}$ can be obtained similarly to \eqref{8.1.3.1}--\eqref{8.1.3.4} as
\begin{align}
\sum_{j\neq k}^{K}\mathbb{E}\left[\left\lvert\sum_{m=1}^{M}\mathbf{h}_{m,k}^{\textrm{H}}\mathbf{w}_{m,j}\right\rvert^{2}\right]&=\sum_{j\neq k}^{K}\mathbb{E}\left[\left\lvert\sum_{m=1}^{M}\mathbf{g}_{m,k}^{\textrm{H}}\mathbf{A}_{m,k}^{\textrm{H}}\mathbf{V}_{\Lambda_{m,j}}\mathbf{g}_{\Lambda_{m,j}}\right\rvert^{2}\right]\label{8.1.7.1}\\
&=\sum_{j\neq k}^{K}\sum_{m=1}^{M}\left\lVert\mathbf{A}_{m,k}^{\textrm{H}}\mathbf{V}_{\Lambda_{m,j}}\right\rVert_{\text{F}}^{2}.\label{8.1.7.2}
\end{align}
Combining \eqref{8.1.6.3} and \eqref{8.1.7.2}, we obtain the data rate expression in Theorem 1.

\section*{Appendix B\\ Proof of Proposition 1}
Let $\lbrace \mathbf{c}_{\hat{i}_{k}},\mathbf{u}_{1},\cdots,\mathbf{u}_{L-1}\rbrace$ be the orthonormal basis of $\mathbb{C}^{L}$. Also, let $\mathbf{U}=\left[\mathbf{u}_{1},\cdots,\mathbf{u}_{L-1}\right]\in\mathbb{C}^{L\times (L-1)}$. Then, the null space of $\mathbf{c}_{\hat{i}_{k}}$ can be represented as $\lbrace \mathbf{U}\boldsymbol{\alpha}\mid \left\lVert\boldsymbol{\alpha}\right\rVert=1\rbrace$ where $\boldsymbol{\alpha}$ is isotropically distributed on the $(L-1)$-dimensional unit sphere. Hence, we have
\begin{align}
\mathbb{E}\left[\mathbf{s}\mathbf{s}^{\textrm{H}}\mid\mathbf{c}_{\hat{i}_{k}}\right]&=\mathbf{U}\mathbb{E}\left[\boldsymbol{\alpha}\boldsymbol{\alpha}^{\textrm{H}}\right]\mathbf{U}^{\textrm{H}}=\frac{1}{L-1}\mathbf{U}\mathbf{U}^{\textrm{H}}\stackrel{(a)}{=}\frac{1}{L-1}\left(\mathbf{I}_{L}-\mathbf{c}_{\hat{i}_{k}}\mathbf{c}_{\hat{i}_{k}}^{\textrm{H}}\right),
\end{align}
where $(a)$ is due to the fact that $\mathbf{I}_{L}=\left[\mathbf{c}_{\hat{i}_{k}}\,\mathbf{U}\right]\left[\mathbf{c}_{\hat{i}_{k}}\,\mathbf{U}\right]^{\textrm{H}}=\mathbf{c}_{\hat{i}_{k}}\mathbf{c}_{\hat{i}_{k}}^{\textrm{H}}+\mathbf{U}\mathbf{U}^{\textrm{H}}$.

\section*{Appendix C\\ Proof of Theorem 3}
Recall that the precoding matrix $\mathbf{V}_{\Lambda_{k}}$ is obtained from the de-vectorization of $\mathbf{x}_{\Lambda_{k}}=\frac{\mathbf{u}_{k,\text{max}}}{\left\lVert\mathbf{u}_{k,\text{max}}\right\rVert}$. Here, $\mathbf{u}_{k,\text{max}}$ is the eigenvector corresponding to the largest eigenvalue of $\mathbf{W}_{k}^{-1}\mathbf{U}_{k}$ where
\begin{align}\label{R.2.3}
\mathbf{U}_{k}&=\boldsymbol{\mu}_{\Lambda_{k}}\boldsymbol{\mu}_{\Lambda_{k}}^{\textrm{H}}+\mathbf{I}_{L}\otimes\mathbf{A}_{k}\mathbf{A}_{k}^{\textrm{H}}\\
\mathbf{W}_{k}&=\sum_{j\neq k}^{K}\mathbf{I}_{L}\otimes\mathbf{A}_{j}\mathbf{A}_{j}^{\textrm{H}}+\sigma_{n}^{2}\mathbf{I}_{NL}=\mathbf{I}_{L}\otimes \left(\boldsymbol{\Phi}_{k}\boldsymbol{\Phi}_{k}^{\textrm{H}}+\sigma_{n}^{2}\mathbf{I}_{N}\right),
\end{align}
where $\boldsymbol{\Phi}_{k}=\left[\mathbf{A}_{j},\, j\neq k\right]\in\mathbb{C}^{N\times (K-1)P}$. By using the Woodbury matrix identity, we obtain
\begin{align}\label{R.2.4}
\mathbf{W}_{k}^{-1}&=\mathbf{I}_{L}\otimes\left(\frac{1}{\sigma_{n}^{2}}\mathbf{I}_{N}-\frac{1}{\sigma_{n}^{2}}\boldsymbol{\Phi}_{k}\left(\boldsymbol{\Phi}_{k}^{\textrm{H}}\boldsymbol{\Phi}_{k}+\sigma_{n}^{2}\mathbf{I}_{(K-1)P}\right)^{-1}\boldsymbol{\Phi}_{k}^{\textrm{H}}\right)\\
&\stackrel{(a)}{=}\frac{1}{\sigma_{n}^{2}}\mathbf{I}_{L}\otimes\left(\mathbf{I}_{N}-\frac{1}{\left(1+\sigma_{n}^{2}\right)}\boldsymbol{\Phi}_{k}\boldsymbol{\Phi}_{k}^{\textrm{H}}\right)\label{R.2.4.3},
\end{align}
where $(a)$ is due to the fact that $\boldsymbol{\Phi}_{k}^{\textrm{H}}\boldsymbol{\Phi}_{k}=\mathbf{I}_{(K-1)P}$. Thus, we get 
\begin{align}\label{R.2.5}
\mathbf{W}_{k}^{-1}\mathbf{U}_{k}&=\frac{1}{\sigma_{n}^{2}}\left(\mathbf{I}_{L}\otimes \left(\mathbf{I}_{N}-\frac{1}{1+\sigma_{n}^{2}}\boldsymbol{\Phi}_{k}\boldsymbol{\Phi}_{k}^{\textrm{H}}\right)\right)\left(\boldsymbol{\mu}_{\Lambda_{k}}\boldsymbol{\mu}_{\Lambda_{k}}^{\textrm{H}}+\mathbf{I}_{L}\otimes\mathbf{A}_{k}\mathbf{A}_{k}^{\textrm{H}}\right)\\
&\stackrel{(a)}{=}\frac{1}{\sigma_{n}^{2}}\left(\boldsymbol{\mu}_{\Lambda_{k}}\boldsymbol{\mu}_{\Lambda_{k}}^{\textrm{H}}+\mathbf{I}_{L}\otimes\mathbf{A}_{k}\mathbf{A}_{k}^{\textrm{H}}\right)\label{R.2.5.3}
\end{align} 
where $(a)$ is due to the fact that $\boldsymbol{\Phi}_{k}$ is orthogonal to $\boldsymbol{\mu}_{\Lambda_{k}}$ and $\mathbf{A}_{k}$. From \eqref{R.2.5.3}, we observe that $\mathbf{x}_{\Lambda_{k}}$ is the eigenvector of $\mathbf{U}_{k}$. Consequently, $\mathbf{x}_{\Lambda_{k}}$ is in the column space of $\mathbf{U}_{k}$ which is orthogonal to the column space of $\mathbf{I}_{L}\otimes \mathbf{A}_{j}$ for every $j\neq k$. Thus, the rate in \eqref{3.3.2.2} can be re-expressed as
\begin{align}\label{R.2.6}
R_{k}^{(\text{ideal})}&=\log_{2}\left(1+\frac{\left\lvert\boldsymbol{\mu}_{\Lambda_{k}}^{\textrm{H}}\mathbf{x}_{\Lambda_{k}}\right\rvert^{2}+\mathbf{x}_{\Lambda_{k}}^{\textrm{H}}\left(\mathbf{I}_{L}\otimes \mathbf{A}_{k}\mathbf{A}_{k}^{\textrm{H}}\right)\mathbf{x}_{\Lambda_{k}}}{\sum_{j\neq k}^{K}\mathbf{x}_{\Lambda_{j}}^{\textrm{H}}\left(\mathbf{I}_{L}\otimes \mathbf{A}_{k}\mathbf{A}_{k}^{\textrm{H}}\right)\mathbf{x}_{\Lambda_{j}}+\sigma_{n}^{2}}\right)\\
&=\log_{2}\left(1+\frac{1}{\sigma_{n}^{2}}\mathbf{x}_{\Lambda_{k}}^{\textrm{H}}\left(\boldsymbol{\mu}_{\Lambda_{k}}\boldsymbol{\mu}_{\Lambda_{k}}^{\textrm{H}}+\mathbf{I}_{L}\otimes \mathbf{A}_{k}\mathbf{A}_{k}^{\textrm{H}}\right)\mathbf{x}_{\Lambda_{k}}\right)\\
&=\log_{2}\left(1+\frac{1}{\sigma_{n}^{2}}\lambda_{k,\text{max}}\right),\label{R.2.6.3}
\end{align}
where $\lambda_{k,\text{max}}$ is the largest eigenvalue of $\mathbf{U}_{k}$. In the following lemma, we provide $\lambda_{k,\text{max}}$ as a function of $L$.
\begin{lemma}
The largest eigenvalue $\lambda_{k,\text{max}}$ of $\mathbf{U}_{k}$ is $L+1$
\end{lemma}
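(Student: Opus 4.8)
The plan is to exhibit $\boldsymbol{\mu}_{\Lambda_{k}}$ itself as the top eigenvector and then certify optimality by a variational (Rayleigh-quotient) argument. First I would record the structural facts already inherited from the massive-MIMO setting invoked in \eqref{R.2.4}--\eqref{R.2.5.3}: under the same asymptotic-orthogonality normalization that makes $\boldsymbol{\Phi}_{k}^{\textrm{H}}\boldsymbol{\Phi}_{k}=\mathbf{I}_{(K-1)P}$, the steering vectors are orthonormal, so $\mathbf{A}_{k}^{\textrm{H}}\mathbf{A}_{k}=\mathbf{I}_{P}$ and $\mathbf{A}_{k}\mathbf{A}_{k}^{\textrm{H}}$ is the orthogonal projector onto $\text{col}(\mathbf{A}_{k})$, whose eigenvalues are $0$ and $1$. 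Consequently $\mathbf{I}_{L}\otimes\mathbf{A}_{k}\mathbf{A}_{k}^{\textrm{H}}$ is again a projector with largest eigenvalue $1$, and $\lVert\boldsymbol{\mu}_{\Lambda_{k}}\rVert^{2}=\lVert\text{vec}(\mathbf{A}_{\Lambda_{k}})\rVert^{2}=\lVert\mathbf{A}_{\Lambda_{k}}\rVert_{\text{F}}^{2}=L$, since each of the $L$ selected unit-norm steering vectors contributes $1$.

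Next I would verify directly that $\boldsymbol{\mu}_{\Lambda_{k}}$ is an eigenvector with eigenvalue $L+1$. Writing $\boldsymbol{\mu}_{\Lambda_{k}}=\text{vec}(\mathbf{A}_{\Lambda_{k}})$ as the stack of the selected steering vectors, each block lies in $\text{col}(\mathbf{A}_{k})$ and is therefore fixed by the projector, which gives $(\mathbf{I}_{L}\otimes\mathbf{A}_{k}\mathbf{A}_{k}^{\textrm{H}})\boldsymbol{\mu}_{\Lambda_{k}}=\boldsymbol{\mu}_{\Lambda_{k}}$. Combined with $\boldsymbol{\mu}_{\Lambda_{k}}\boldsymbol{\mu}_{\Lambda_{k}}^{\textrm{H}}\boldsymbol{\mu}_{\Lambda_{k}}=\lVert\boldsymbol{\mu}_{\Lambda_{k}}\rVert^{2}\boldsymbol{\mu}_{\Lambda_{k}}=L\,\boldsymbol{\mu}_{\Lambda_{k}}$, this yields $\mathbf{U}_{k}\boldsymbol{\mu}_{\Lambda_{k}}=(L+1)\boldsymbol{\mu}_{\Lambda_{k}}$, so $L+1$ is an eigenvalue of $\mathbf{U}_{k}$.

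Finally I would establish maximality. For any unit-norm $\mathbf{x}$ the Rayleigh quotient splits as $\mathbf{x}^{\textrm{H}}\mathbf{U}_{k}\mathbf{x}=\lvert\boldsymbol{\mu}_{\Lambda_{k}}^{\textrm{H}}\mathbf{x}\rvert^{2}+\mathbf{x}^{\textrm{H}}(\mathbf{I}_{L}\otimes\mathbf{A}_{k}\mathbf{A}_{k}^{\textrm{H}})\mathbf{x}$, and bounding each summand separately — Cauchy--Schwarz gives $\lvert\boldsymbol{\mu}_{\Lambda_{k}}^{\textrm{H}}\mathbf{x}\rvert^{2}\le\lVert\boldsymbol{\mu}_{\Lambda_{k}}\rVert^{2}=L$, while the projector bound gives the second term $\le 1$ — produces $\mathbf{x}^{\textrm{H}}\mathbf{U}_{k}\mathbf{x}\le L+1$. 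The crucial point, and the only step needing care, is that both individually loose bounds are attained by the \emph{same} vector $\mathbf{x}=\boldsymbol{\mu}_{\Lambda_{k}}/\lVert\boldsymbol{\mu}_{\Lambda_{k}}\rVert$: Cauchy--Schwarz is tight along $\boldsymbol{\mu}_{\Lambda_{k}}$, and $\boldsymbol{\mu}_{\Lambda_{k}}$ already sits in the eigenspace of the projector for eigenvalue $1$. Hence the two upper bounds add to a tight one, and $\lambda_{k,\text{max}}=L+1$. I expect the main subtlety to be justifying this simultaneous tightness rather than the arithmetic; trying instead to diagonalize the rank-one-plus-projector $\mathbf{U}_{k}$ explicitly would obscure exactly this alignment, so the variational route is the cleaner one.
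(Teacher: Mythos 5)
Your proof is correct and follows essentially the same route as the paper's: both exhibit $\boldsymbol{\mu}_{\Lambda_{k}}$ as an eigenvector with eigenvalue $L+1$ by combining $\lVert\boldsymbol{\mu}_{\Lambda_{k}}\rVert^{2}=L$ with the fact that $\mathbf{I}_{L}\otimes\mathbf{A}_{k}\mathbf{A}_{k}^{\textrm{H}}$ (under the same steering-vector orthonormality the paper assumes) fixes $\boldsymbol{\mu}_{\Lambda_{k}}$, and both certify maximality variationally. The only cosmetic difference is that the paper parametrizes candidate eigenvectors over the column space of $\mathbf{I}_{L}\otimes\mathbf{A}_{k}$ and computes $\lambda=L\lvert\boldsymbol{\alpha}^{\textrm{H}}\boldsymbol{\beta}\rvert^{2}+1$, whereas you bound the Rayleigh quotient globally via Cauchy--Schwarz plus the projector bound and check that both bounds are tight at the same vector --- the same argument in different clothing.
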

\begin{proof}
We show that $\boldsymbol{\mu}_{\Lambda_{k}}$ is an eigenvector of $\mathbf{U}_{k}$ corresponds to $\lambda_{k,\text{max}}=L+1$. Note that
\begin{align}\label{R.2.7}
\mathbf{U}_{k}&=\boldsymbol{\mu}_{\Lambda_{k}}\boldsymbol{\mu}_{\Lambda_{k}}^{\textrm{H}}+\mathbf{I}_{L}\otimes\mathbf{A}_{k}\mathbf{A}_{k}^{\textrm{H}}\\
&=\left[\boldsymbol{\mu}_{\Lambda_{k}},\,\mathbf{I}_{L}\otimes\mathbf{A}_{k}\right]\left[\boldsymbol{\mu}_{\Lambda_{k}},\,\mathbf{I}_{L}\otimes\mathbf{A}_{k}\right]^{\textrm{H}}
\end{align}
It is worth mentioning that the columns of $\mathbf{I}_{L}\otimes\mathbf{A}_{k}$ are mutually orthonormal. Also, since $\boldsymbol{\mu}_{\Lambda_{k}}=\text{vec}\left(\mathbf{A}_{\Lambda_{k}}\right)$, $\boldsymbol{\mu}_{\Lambda_{k}}$ can be expressed as a linear combination of the columns of $\mathbf{I}_{L}\otimes \mathbf{A}_{k}$ (i.e., $\boldsymbol{\mu}_{\Lambda_{k}}=\sqrt{L}\left(\mathbf{I}_{L}\otimes \mathbf{A}_{k}\right)\boldsymbol{\alpha}$ for some $\boldsymbol{\alpha}\in\mathbb{C}^{LP}$, $\left\lVert\boldsymbol{\alpha}\right\rVert=1$). Hence, the columns of $\mathbf{I}_{L}\otimes \mathbf{A}_{k}$ form an orthonormal basis of the column space of $\mathbf{U}_{k}$.  Note that
\begin{align}\label{R.2.8}
\mathbf{U}_{k}\boldsymbol{\mu}_{\Lambda_{k}}&=\left(\boldsymbol{\mu}_{\Lambda_{k}}\boldsymbol{\mu}_{\Lambda_{k}}^{\textrm{H}}+\mathbf{I}_{L}\otimes\mathbf{A}_{k}\mathbf{A}_{k}^{\textrm{H}}\right)\sqrt{L}\left(\mathbf{I}_{L}\otimes\mathbf{A}_{k}\right)\boldsymbol{\alpha}\\
&=L\boldsymbol{\mu}_{\Lambda_{k}}+\sqrt{L}\left(\mathbf{I}_{L}\otimes\mathbf{A}_{k}\right)\boldsymbol{\alpha}\\
&=\left(L+1\right)\boldsymbol{\mu}_{\Lambda_{k}}
\end{align}
Thus, $\boldsymbol{\mu}_{\Lambda_{k}}$ is the eigenvector corresponding to the eigenvalue $L+1$ of $\mathbf{U}_{k}$. Now, let $\mathbf{v}$ be an eigenvector corresponding to the eigenvalue $\lambda$ of $\mathbf{U}_{k}$. Since $\mathbf{v}$ is in the column space of $\mathbf{U}_{k}$, it can be expressed as $\mathbf{v}=\left(\mathbf{I}_{L}\otimes \mathbf{A}_{k}\right)\boldsymbol{\beta}$ for some $\boldsymbol{\beta}\in\mathbb{C}^{LP}$, $\left\lVert\boldsymbol{\beta}\right\rVert=1$. Then we get
\begin{align}\label{R.2.9}
\lambda&=\mathbf{v}^{\textrm{H}}\mathbf{U}_{k}\mathbf{v}\\
&=\boldsymbol{\beta}^{\textrm{H}}\left(\mathbf{I}_{L}\otimes \mathbf{A}_{k}^{\textrm{H}}\right)\left(\boldsymbol{\mu}_{\Lambda_{k}}\boldsymbol{\mu}_{\Lambda_{k}}^{\textrm{H}}+\mathbf{I}_{L}\otimes\mathbf{A}_{k}\mathbf{A}_{k}^{\textrm{H}}\right)\left(\mathbf{I}_{L}\otimes \mathbf{A}_{k}\right)\boldsymbol{\beta}\\
&=\boldsymbol{\beta}^{\textrm{H}}\left(\mathbf{I}_{L}\otimes \mathbf{A}_{k}^{\textrm{H}}\right)\left(L\left(\mathbf{I}_{L}\otimes \mathbf{A}_{k}\right)\boldsymbol{\alpha}\boldsymbol{\alpha}^{\textrm{H}}\left(\mathbf{I}_{L}\otimes\mathbf{A}_{k}^{\textrm{H}}\right)+\mathbf{I}_{L}\otimes\mathbf{A}_{k}\mathbf{A}_{k}^{\textrm{H}}\right)\left(\mathbf{I}_{L}\otimes \mathbf{A}_{k}\right)\boldsymbol{\beta}\\
&=\boldsymbol{\beta}^{\textrm{H}}\left(L\boldsymbol{\alpha}\boldsymbol{\alpha}^{\textrm{H}}+\mathbf{I}_{NL}\right)\boldsymbol{\beta}\\
&=L\left\lvert\boldsymbol{\alpha}^{\textrm{H}}\boldsymbol{\beta}\right\rvert^{2}+1
\end{align}
Hence, $\lambda_{k,\text{max}}$ is obtained when $\boldsymbol{\beta}=\boldsymbol{\alpha}$ and thus, we get $\lambda_{k,\text{max}}=L+1$.
\end{proof}
\noindent By using Lemma 4, $R_{k}^{(\text{ideal})}$ in \eqref{R.2.6.3} is re-expressed as
\begin{align}\label{R.2.10}
R_{k}^{(\text{ideal})}=\log_{2}\left(1+\frac{L+1}{\sigma_{n}^{2}}\right)
\end{align} 
Finally, combining \eqref{R.2.10} and the result of Theorem 2, we obtain the desired result.

\bibliography{refs}
\bibliographystyle{IEEEtran}
\clearpage 

\end{document}